\documentclass[10pt,journal]{IEEEtran}
\usepackage{amssymb}
\usepackage{amsmath}
\usepackage{cite}
\usepackage{url}
\usepackage{color}
\usepackage{xcolor}
\usepackage{cite,graphicx,amsmath,amssymb}
\usepackage{subfigure}
\usepackage{fancyhdr}
\usepackage{comment}
\usepackage{mdwmath}
\usepackage{mdwtab}
\usepackage{tabularx}
\usepackage{caption}
\usepackage{amsthm}
\usepackage{setspace}
\usepackage{algorithm}
\usepackage{algorithmic}
\usepackage{pdfpages}
\usepackage{mathtools}
\usepackage[colorlinks=true, linkcolor=blue, citecolor=blue, urlcolor=blue]{hyperref}

\usepackage{bm}

\graphicspath{{./src}}

\newtheorem{remark}{Remark}
\newtheorem{theorem}{Theorem}

\newtheorem{lemma}{Lemma}

\newtheorem{corollary}{Corollary}

\newtheorem{proposition}{Proposition}

\allowdisplaybreaks

\title{Movable Antenna-Enhanced MIMO-OFDM ISAC: Ambiguity Function Analysis, Waveform Design and Antenna Position Optimization}

\author{

Tianqi~Mao,~\IEEEmembership{Member,~IEEE},
Yuanshuo~Gang,~\IEEEmembership{Graduate Student Member,~IEEE},
Guangyao~Liu,~\IEEEmembership{Member,~IEEE},
Shiqi~Cui, 
Ying~Sun, and
Yifeng~Xiong,~\IEEEmembership{Member,~IEEE}
\thanks{This work was supported in part by the Beijing Nova Program under Grant number 202604841146 and National Natural Science Foundation of China under Grant 62401054. (\emph{Corresponding authors: Guangyao Liu and Ying Sun.})}
\thanks{
    T. Mao, Y. Gang, G. Liu, S. Cui, and Y. Sun are with State Key Laboratory of Environment Characteristics and Effects for Near-space, Beijing Institute of Technology, Beijing 100081, China (e-mails: \{maotq, gangys, liu\_gy, cuisqqq\}@bit.edu.cn, suny1417@163.com).

    Y. Xiong is with the School of Information and Electronic Engineering, Beijing University of Posts and Telecommunications, Beijing 100876, China (e-mail: yifengxiong@bupt.edu.cn).
	
	}
}
\begin{document}

\maketitle

\begin{abstract}
Multiple-input multiple-output orthogonal frequency division multiplexing (MIMO-OFDM) provides abundant spatial and time-frequency degrees of freedom for integrated sensing and communication (ISAC), while movable antennas (MAs) further introduce reconfigurable spatial freedom through array geometry adjustment. However, how the array geometry and information bearing MIMO-OFDM waveform jointly shape the three dimensional (3D) ambiguity response remains insufficiently understood, and conventional two dimensional (2D) range-Doppler metrics cannot fully characterize this coupling. This paper investigates MA-enhanced MIMO-OFDM ISAC with joint design of the transmit MA positions and symbol-level precoding (SLP) waveform. The discrete periodic angle-range-Doppler ambiguity function is first derived, and its structure is characterized in terms of waveform rank. It is proved that a rank one waveform yields a 3D ambiguity response that is separable between the angular dimension and the range-Doppler plane, whereas a higher rank waveform combines multiple spatial components with different weights across candidate angles, allowing the range-Doppler response to vary with the candidate angle and enabling the MA positions to further shape the 3D ambiguity response. Based on this analysis, a 3D integrated sidelobe level ratio (ISLR) is defined, the joint optimization of the MA positions and SLP waveform is formulated, and a penalty dual decomposition based alternating optimization algorithm is developed. For the radar-only case, the existence of at least one rank one globally optimal solution is established, enabling a lossless reduction from the original high dimensional MIMO-OFDM waveform design to a low dimensional joint design of the spatial beam and MA positions. Numerical results verify that the proposed design suppresses 3D ambiguity sidelobes, improves weak target detection under strong target interference, and achieves a favorable communication and sensing tradeoff.
\end{abstract}

\begin{IEEEkeywords}
{M}ovable antenna, MIMO, OFDM, integrated sensing and communications, waveform design.
\end{IEEEkeywords}

\section{Introduction} \label{sec:introduction}
\IEEEPARstart{W}{ith} the evolution of sixth-generation (6G) mobile communications, emerging applications such as the low-altitude economy, intelligent manufacturing, and autonomous driving increasingly demand high-rate reliable communications and high-accuracy environmental sensing. Integrated sensing and communication (ISAC) addresses these requirements by integrating information transmission and environmental sensing over shared spectrum and hardware resources, thereby improving resource utilization and enabling potential synergistic gains \cite{liu2026sensing,zhang2026evolution}. ISAC is therefore widely regarded as a key enabling technology for future 6G networks.
%\IEEEPARstart{W}{ith} the evolution of sixth-generation (6G) mobile communications, emerging applications such as the low-altitude economy, intelligent manufacturing, and autonomous driving increasingly demand the joint support of high-rate reliable communications and high-accuracy environmental sensing. Conventional separate communication and sensing designs cannot fully exploit limited wireless resources and hinder the deep integration of the two functionalities. Against this background, integrated sensing and communication (ISAC) has emerged as a promising wireless paradigm that unifies information transmission and environmental sensing within a common system framework through shared spectrum and hardware platforms, thereby improving resource utilization efficiency and unlocking potential integration gains \cite{liu2026sensing,zhang2026evolution}. ISAC is therefore widely regarded as a key enabling technology for future 6G networks.

Owing to its favorable properties for both communication and sensing and its compatibility with existing wireless systems, orthogonal frequency division multiplexing (OFDM) has emerged as one of the most competitive waveform candidates for ISAC \cite{zhang2024cross_domain,liu2025cp_ofdm}. For communication, OFDM enables efficient parallel transmission over orthogonal subcarriers and mitigates frequency selective fading through a cyclic prefix (CP) and frequency domain equalization, thereby providing high spectral efficiency with relatively low implementation complexity \cite{zhang2025target_localization}. For sensing, the multicarrier time-frequency structure of OFDM provides natural delay and Doppler processing dimensions through the phase variations across subcarriers and OFDM symbols, respectively, enabling range and velocity estimation \cite{sturm2011waveform}. Moreover, multiple-input multiple-output (MIMO) technology introduces additional spatial degrees of freedom (DoF), enhancing multiuser transmission and beamforming while providing richer spatial information for angle estimation and sensing \cite{dai2026tutorial}. Consequently, the combination of MIMO and OFDM provides spatial and time-frequency design DoF simultaneously, making MIMO-OFDM an important waveform architecture for ISAC.

\vspace{-0.2cm}
\subsection{Related Works}\label{subsec:related_works}
Dual-functional waveform and beamforming design is central to MIMO-OFDM ISAC. Existing approaches mainly exploit MIMO spatial DoF by optimizing transmit beamformers or signal covariance matrices, with communication metrics such as achievable rate \cite{gang2025uav}, user signal-to-interference-plus-noise ratio (SINR) \cite{liu2018mu_mimo}, and multiuser interference \cite{liu2018dual_functional}, and sensing metrics including sensing SINR \cite{hatami2026beamforming}, beampattern mismatch \cite{liu2025beam_pattern}, and the Cramér–Rao bound (CRB) \cite{zhang2026leo}. The ambiguity function has also been introduced to characterize sidelobe behavior under delay, Doppler, and spatial mismatch \cite{liu2025range_angle}. However, random communication symbols directly alter the instantaneous time-frequency structure of MIMO-OFDM waveforms, while conventional beamforming based mainly on spatial statistics provides limited control over this structure.
%For MIMO-OFDM ISAC, dual-functional transmit waveform and beamforming design is central to balancing communication and sensing performance. Existing studies generally exploit the spatial DoF afforded by MIMO and achieve such a balance by optimizing the transmit beamformers or their spatial statistical properties. Representative communication-oriented metrics include the achievable rate \cite{gang2025uav}, user signal-to-interference-plus-noise ratio (SINR) \cite{liu2018mu_mimo}, and multiuser interference \cite{liu2018dual_functional}, whereas sensing performance is commonly quantified in terms of the sensing SINR \cite{hatami2026beamforming}, beampattern matching error \cite{liu2025beam_pattern}, or Cram\'{e}r--Rao bound (CRB) \cite{zhang2026leo}. More recently, the ambiguity function has also been incorporated into ISAC beamforming and waveform design to characterize ambiguity responses and sidelobes at mismatched delay, Doppler, and spatial parameters \cite{liu2025range_angle}. Nevertheless, random communication symbols directly alter the instantaneous time-frequency structure of a MIMO-OFDM waveform. Conventional beamforming designs based primarily on spatial statistics cannot directly control this structure and therefore offer limited capability to suppress range-Doppler ambiguity sidelobes.

Symbol-level precoding (SLP) offers a promising approach to addressing this limitation. By jointly exploiting channel state information and instantaneous data symbols, SLP enables fine grained control of the transmit waveform while converting part of the multiuser interference into constructive interference (CI) \cite{masouros2015green,li2020tutorial}. Accordingly, SLP has been applied to ISAC for beampattern shaping \cite{liu2021dual_functional_slp,jiang2025slp}, target illumination \cite{wang2025symbol_scaling}, and parameter estimation \cite{liao2024ftn}. For MIMO-OFDM ISAC, SLP can further exploit the DoF across individual resource elements (REs) to directly shape the time-frequency waveform structure, thereby enabling range-Doppler ambiguity sidelobe suppression \cite{li2025mimo_ofdm,li2025low_complexity,cai2026distributed}. Specifically, the discrete periodic range-Doppler ambiguity function of MIMO-OFDM signals was derived in \cite{li2025mimo_ofdm}, where the SLP waveform on each RE was jointly optimized to minimize the integrated sidelobe level (ISL). Building on this framework, a low complexity waveform design method was developed in \cite{li2025low_complexity}. The work in \cite{cai2026distributed} further extended SLP to coordinated multicell MIMO-OFDM ISAC and developed a distributed alternating direction method of multipliers algorithm for inter-BS coordination. However, these designs all assume fixed position arrays (FPAs), whose array manifolds are determined by preset antenna locations and cannot be reconfigured according to sensing tasks.

Array geometry provides another important design DoF for sensing. Sparse array and antenna selection studies \cite{wei2024joint,xie2024cross} have shown that antenna placement significantly affects spatial resolution and ambiguity characteristics, but conventional array geometries remain fixed after deployment and cannot adapt to changing sensing environments or tasks. Movable antennas (MAs) address this limitation by allowing antenna positions to be reconfigured within prescribed regions \cite{ma2026survey}. Leveraging this flexibility, the ambiguity function of MA-enabled frequency hopping MIMO radar was derived in \cite{chen2025fh_mimo}, where the MA positions were optimized to balance mainlobe width and sidelobe levels. In \cite{ma2026robust_sensing}, the tradeoff between the CRB and ambiguity sidelobes across different SNR regimes was characterized, and the MA positions were optimized to balance estimation accuracy and ambiguity suppression. In addition, the work \cite{wu2026fluid_radar} jointly designed the transmit and receive MA positions, transmit waveform, and receive filter to improve target detection in interference rich environments. These studies, however, focus on radar-only sensing and do not account for the symbol dependent waveform constraints imposed by random communication data. Existing ISAC designs with MAs mainly optimize spatial metrics \cite{li2026uav_isac,liu2026movable_subarray}, with limited consideration of waveform correlation across the delay and Doppler dimensions. More recently, the study \cite{feng2026range_doppler} jointly optimized the MA positions and transmit beamformers for range-Doppler sidelobe suppression. Nevertheless, this method still employs conventional block level linear precoding, in which the precoder on each subcarrier is shared across OFDM symbols, and evaluates only the two dimensional (2D) range-Doppler ambiguity response at the target direction.

\subsection{Motivation and Contributions}\label{subsec:motivation_contributions}
The above review reveals that MA-enhanced MIMO-OFDM ISAC remains insufficiently explored. Existing studies either optimize waveforms on individual REs for FPAs or exploit MAs to improve spatial responses and 2D range-Doppler sidelobes. Consequently, the coupling between reconfigurable array geometry and SLP waveforms on individual REs remains unclear. When both are jointly designable, the array geometry changes the spatial projection of the transmitted waveform across candidate angles, which can make the resulting range-Doppler response vary with the candidate angle. Therefore, evaluating only the 2D range-Doppler response at the target direction cannot fully characterize the 3D ambiguity response jointly determined by the array geometry and transmit waveform.

Motivated by this observation, we investigate the joint design of the transmit MA positions and SLP waveform for MA-enhanced MIMO-OFDM ISAC, with emphasis on characterizing their joint effects on the angle-range-Doppler ambiguity response and suppressing the resulting 3D ambiguity sidelobes. The main contributions are summarized as follows.

\begin{itemize}
    \item We propose an MA-enhanced MIMO-OFDM ISAC framework that combines reconfigurable transmit array geometry with SLP waveform design on each time-frequency RE. The transmit MAs provide additional spatial design freedom by adjusting the array geometry, while SLP directly designs the multiantenna transmit waveform according to the instantaneous communication symbols, thereby jointly exploiting the array geometry and the spatial and time-frequency DoF of MIMO-OFDM.

    \item We first derive the discrete periodic angle-range-Doppler ambiguity function\footnote{For brevity, the angle-range-Doppler ambiguity function is hereafter referred to as the 3D ambiguity function.} of the MA-enhanced MIMO-OFDM waveform and characterize its structure from the perspective of waveform rank. We prove that a rank one waveform yields a 3D ambiguity response that is \textit{separable} between the angular dimension and the range-Doppler plane, with all candidate angles sharing the same normalized range-Doppler profile. Conversely, under certain conditions, such separability implies a rank one waveform. For a higher rank waveform, multiple spatial basis components are combined with different weights across candidate angles, allowing the range-Doppler response to vary with the candidate angle and enabling the MA positions to further shape the 3D ambiguity response.

    \item Based on the derived 3D ambiguity function, we define a 3D integrated sidelobe level ratio (ISLR) over the angle-range-Doppler domain and formulate a joint optimization problem for the transmit MA positions and SLP waveform. To address the resulting strongly coupled nonconvex problem, we develop a penalty dual decomposition based alternating optimization (PDD-AO) algorithm. For the radar-only case, we further prove that at least one rank one globally optimal solution exists, enabling a lossless reduction from the original high dimensional MIMO-OFDM waveform design to the joint design of a spatial beam and the MA positions.

    \item Numerical results validate the effectiveness of the proposed design. Compared with FPA based SLP and conventional zero forcing (ZF)-OFDM benchmarks, the proposed scheme achieves lower 3D ISLR and substantially mitigates the masking of weak target by sidelobe leakage from strong target. It also maintains consistent sensing gains under different communication quality of service (QoS) requirements, numbers of users, and numbers of transmit antennas, demonstrating the benefit of jointly designing the array geometry and waveform on each RE.
\end{itemize}

\textit{Notations:} Scalars, vectors, matrices, and sets are denoted by italic, bold lowercase, bold uppercase, and calligraphic letters, respectively. Superscripts $(\cdot)^T$, $(\cdot)^H$, and $(\cdot)^*$ denote transpose, conjugate transpose, and complex conjugate. The operators $\Re\{\cdot\}$, $\Im\{\cdot\}$, and $\|\cdot\|_{2,F,\infty}$ denote the real part, imaginary part, and the corresponding norms. Moreover, $\operatorname{vec}(\cdot)$, $\operatorname{col}(\cdot)$, and $\operatorname{rank}(\cdot)$ denote vectorization, column space, and matrix rank, while $\operatorname{diag}(\mathbf a)$ forms a diagonal matrix from $\mathbf a$. $\mathbb R$, $\mathbb C$, $\mathbb Z$, and $\mathbf I_N$ denote the real, complex, and integer sets and the $N\times N$ identity matrix, respectively.

\section{System Model} \label{section:system_model}

\begin{figure}[t]
\centering
\includegraphics[width=\linewidth]{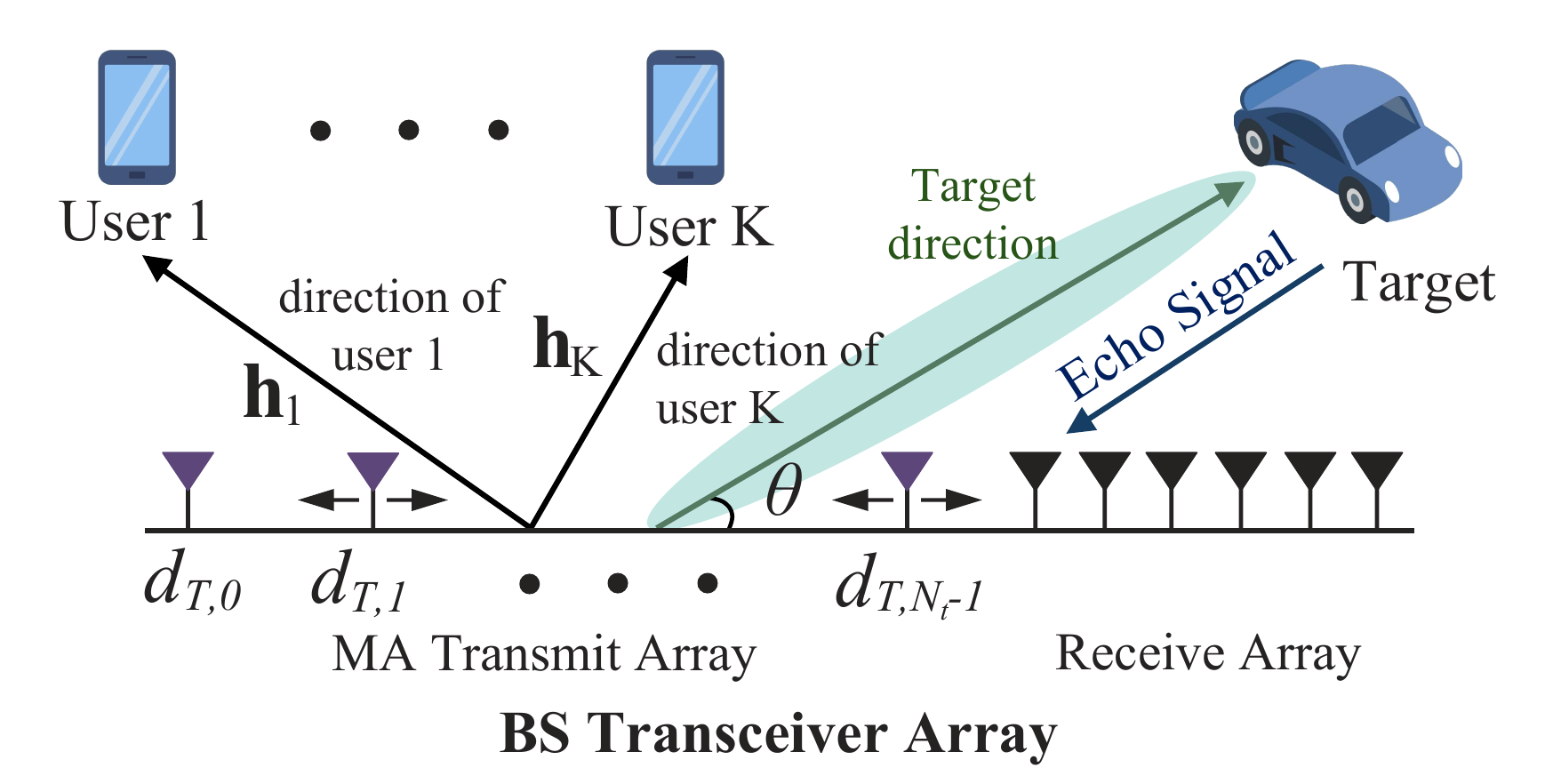}
\caption{Illustration of the MA-enhanced MIMO-OFDM ISAC system.}
\label{fig:system_model}
\end{figure}

We consider a monostatic MA-enhanced MIMO-OFDM ISAC system shown in Fig.~\ref{fig:system_model}, where the base station (BS) equipped with $N_t$ transmit MAs and a fixed $N_r$-element receive uniform linear array (ULA) simultaneously serves $K$ single antenna users and senses one far-field target. The BS transmits a dual-functional MIMO-OFDM waveform over $N_c$ subcarriers and $N_s$ consecutive OFDM symbols. The transmit MAs can be repositioned between consecutive OFDM frames but remain stationary within each frame \cite{ma2026trajectory}. Their positions within a one dimensional (1D) region of length $D_T$ are collected in $\mathbf d_T=[d_{T,0},\ldots,d_{T,N_t-1}]^T\in\mathbb R^{N_t}$. The feasible MA position set is $\mathcal D_T \triangleq \{\mathbf d_T\mid d_{T,0}=0,\, d_{T,N_t-1}\le D_T,\, d_{T,i+1}-d_{T,i}\ge d_{\min},i=0,\ldots,N_t-2\}$, where $d_{T,0}=0$ fixes the first MA as the position reference, while $d_{\min}$ denotes the minimum spacing between adjacent MAs for limiting mutual coupling.

Under the far-field condition, the transmit array steering vector associated with azimuth angle $\theta$ is given by\vspace{-0.1cm}
\begin{equation}
\mathbf a_T(\theta;\mathbf d_T)
=
\begin{bmatrix}
1,
e^{j k_c d_{T,1}\sin\theta},
\ldots,
e^{j k_c d_{T,N_t-1}\sin\theta}
\end{bmatrix}^{T},
\label{eq:tx_array_steering_vector}
\end{equation}
where $k_c=2\pi/\lambda=2\pi f_c/c_0$, with $f_c$, $c_0$, and $\lambda=c_0/f_c$ denoting the carrier frequency, the speed of light, and the wavelength, respectively. Under the monostatic configuration, the transmit and receive arrays observe the same target azimuth angle. Only the transmit MA positions are optimized, while the receiver employs a fixed \(N_r\)-element ULA with spacing $d_R=\lambda/2$, whose steering vector is expressed as\vspace{-0.1cm}
\begin{equation}
\mathbf a_R(\theta)
=
\begin{bmatrix}
1,
e^{j k_c d_R\sin\theta},
\ldots,
e^{j k_c (N_r-1)d_R\sin\theta}
\end{bmatrix}^{T}.
\label{eq:rx_array_steering_vector}
\end{equation}

\subsection{Transmit Signal Model}\label{subsec:transmit_model}
On the RE corresponding to the $n$-th subcarrier and the $m$-th OFDM symbol, the information symbol vector for the $K$ communication users is denoted by\vspace{-0.1cm}
$\mathbf s_{n,m}=[s_{n,m,1},\ldots,s_{n,m,K}]^T\in\mathbb C^K$, and all users employ $M$-PSK modulation with unit modulus, i.e., $s_{n,m,k}\in\mathcal S_M\triangleq \{e^{j2\pi \mu/M}\mid \mu=0,\ldots,M-1\}$. SLP directly maps the current information symbol vector $\mathbf s_{n,m}$ to the multiantenna transmit waveform $\mathbf x_{n,m}=[x_{0,n,m},\ldots,x_{N_t-1,n,m}]^T\in\mathbb C^{N_t}$ on each RE, i.e., $\mathbf s_{n,m}\mapsto\mathbf x_{n,m}$.

Collecting the transmit waveforms over all REs, the complex baseband time domain transmit signal $\widetilde{\mathbf x}(t)\in\mathbb C^{N_t}$ is given by\vspace{-0.1cm}
\begin{equation}
\widetilde{\mathbf x}(t)=\frac{1}{\sqrt{N_c}}\sum_{m=0}^{N_s-1}\sum_{n=0}^{N_c-1}\mathbf x_{n,m}e^{j2\pi n\Delta f t_m}\operatorname{rect}\left(\frac{t_m-\frac{T-T_{\rm cp}}{2}
}{T_0}\right),
\label{eq:tx_baseband_signal}
\end{equation}
where $\Delta f=1/T$ is the subcarrier spacing, $T$ and $T_{\rm cp}$ represent the useful OFDM symbol and CP durations, respectively, and $T_0=T+T_{\rm cp}$, while $t_m=t-mT_0$ is referenced to the beginning of the useful part of the $m$-th OFDM symbol. The rectangular window $\operatorname{rect}(u)$ equals one for $|u|\le1/2$ and zero otherwise, so that the $m$-th OFDM symbol including its CP occupies $-T_{\rm cp}\le t_m<T$. After upconversion, the transmitted radio frequency signal can be written as $\widetilde{\mathbf x}_{\rm RF}(t)=\widetilde{\mathbf x}(t)e^{j2\pi f_ct}$.

\subsection{Radar Echo Signal Model}\label{subsec:radar_model}
Consider a point target in the far-field with true parameters $\boldsymbol{\eta}_0\triangleq(\theta_0,\tau_0,f_{D,0})$, where $\theta_0$ is the target azimuth angle, while $\tau_0=2R_0/c_0$ and $f_{D,0}=2v_0f_c/c_0$ denote the round trip propagation delay and Doppler frequency shift induced by the target range $R_0$ and radial velocity $v_0$, respectively.
The target parameters and complex reflection coefficient are assumed to remain constant within the current OFDM frame. 
Besides, let $B=N_c\Delta f$ denote the signal bandwidth and $D_R=(N_r-1)d_R$ the receive array aperture.
We adopt a spatial narrowband model and assume $BD_T/c_0\ll1$ and $BD_R/c_0\ll1$.
%Accordingly, the frequency dependence of the array manifolds across subcarriers can be neglected, and all subcarriers share the transmit and receive steering vectors evaluated at the carrier frequency $f_c$.

After target reflection and downconversion, the baseband time domain echo signal
$\widetilde{\mathbf y}(t)\in\mathbb C^{N_r}$ received at the BS can be expressed as\vspace{-0.1cm}
\begin{equation}
\widetilde{\mathbf y}(t)
=
\bar{\alpha}_0
\mathbf a_R(\theta_0)
\mathbf a_T^H(\theta_0;\mathbf d_T)
\widetilde{\mathbf x}(t-\tau_0)
e^{j2\pi f_{D,0}t}
+
\widetilde{\mathbf n}(t),
\label{eq:radar_echo_time}
\end{equation}
where
$\bar{\alpha}_0=\alpha_0e^{-j2\pi f_c\tau_0}$ and $\alpha_0$ is the complex reflection coefficient incorporating propagation attenuation and target scattering characteristics, while $\widetilde{\mathbf n}(t)\in\mathbb C^{N_r}$ denotes the additive white Gaussian noise (AWGN) with zero mean and variance $\sigma_r^2$.

We further assume $0\le\tau_0<T_{\rm cp}$, such that no interference between OFDM symbols remains after CP removal.
Under the conventional low inter-carrier interference (ICI) operating regime of OFDM, $|f_{D,0}|/\Delta f\ll1$ is assumed, such that the Doppler phase variation within one OFDM symbol can be neglected \cite{sturm2011waveform}.
After sampling, CP removal, and an $N_c$-point discrete Fourier transform (DFT) with unitary normalization, the frequency domain echo signal
$\mathbf y_{n,m}\in\mathbb C^{N_r}$ on the $(n,m)$-th RE can be written as\vspace{-0.1cm}
\begin{equation}
\label{eq:radar_echo_freq}
\begin{aligned}
\mathbf y_{n,m}
=&
\bar{\alpha}_0
\mathbf a_R(\theta_0)
\mathbf a_T^H(\theta_0;\mathbf d_T)
\mathbf x_{n,m}
e^{j2\pi(f_{D,0}mT_0-n\Delta f\tau_0)} \\[-1mm]
&+
\mathbf n_{n,m}.
\end{aligned}
\end{equation}

\subsection{Communication Model}\label{subsec:comm_model}
For user $k$, variations in the MA positions change the spatial phases of the propagation paths across different transmit antennas. Therefore, the communication channel explicitly depends on the MA position vector $\mathbf d_T$. For a frequency-selective channel with $L_k$ effective propagation paths, the frequency domain channel $\mathbf h_{n,k}(\mathbf d_T)\in\mathbb C^{N_t}$ between the BS and user $k$ on the $n$-th subcarrier is expressed as
\begin{equation}
\mathbf h_{n,k}(\mathbf d_T)
=
\sum_{\ell=1}^{L_k}
\beta_{k,\ell}
e^{-j2\pi n\Delta f\tau_{k,\ell}}
\mathbf a_T
\left(
\varphi_{k,\ell};\mathbf d_T
\right),
\label{eq:comm_channel}
\end{equation}
where $\beta_{k,\ell}$, $\tau_{k,\ell}$, and $\varphi_{k,\ell}$ denote the complex gain, propagation delay, and angle of departure of the $\ell$-th propagation path, respectively. Since the feasible MA movement region is much smaller than the propagation distance between the BS and the user, these path parameters are approximately invariant over the feasible region. We further assume that the BS has the position dependent channel state information required for the joint design \cite{cheng2026integrated}. Accordingly, the frequency domain signal received by user $k$ on the $(n,m)$-th RE is given by\vspace{-0.1cm}
\begin{equation}
y_{n,m,k}
=
\mathbf h_{n,k}^{H}(\mathbf d_T)
\mathbf x_{n,m}
+
z_{n,m,k},
\label{eq:comm_received_signal}
\end{equation}
where $z_{n,m,k}\sim\mathcal{CN}(0,\sigma_{c,k}^{2})$ represents the AWGN at user $k$. Based on the SLP model in Section~\ref{subsec:transmit_model}, the communication QoS of user $k$ on the $(n,m)$-th RE is enforced through the following $M$-PSK CI constraint:\vspace{-0.1cm}
\begin{equation}
\left[
\Re\{\zeta_{n,m,k}\}
-
\sigma_{c,k}\sqrt{\Gamma_k}
\right]
\tan\phi_M
\ge
\left|
\Im\{\zeta_{n,m,k}\}
\right|,
\label{eq:ci_constraint}
\end{equation}
where $\zeta_{n,m,k} = \mathbf h_{n,k}^{H}(\mathbf d_T) \mathbf x_{n,m}s_{n,m,k}^{*}$ denotes the rotated noiseless received symbol, $\phi_M=\pi/M$ is the half angle of the $M$-PSK decision region, and $\Gamma_k$ is the communication QoS threshold of user $k$.

For subsequent optimization, the absolute value constraint in \eqref{eq:ci_constraint} is equivalently expanded into two real affine inequalities. Define the extended communication channel as\vspace{-0.1cm}
\begin{equation}
\begin{aligned}
&\widetilde{\mathbf h}_{n,m,k'}^{H}(\mathbf d_T)\\
&=
\begin{cases}
\mathbf h_{n,k}^{H}(\mathbf d_T)
s_{n,m,k}^{*}
\left(
\sin\phi_M+j\cos\phi_M
\right),
&
k'=2k-1,
\\[1mm]
\mathbf h_{n,k}^{H}(\mathbf d_T)
s_{n,m,k}^{*}
\left(
\sin\phi_M-j\cos\phi_M
\right),
&
k'=2k ,
\end{cases}
\end{aligned}
\label{eq:extended_comm_channel}
\end{equation}
and the corresponding CI thresholds by\vspace{-0.1cm}
\begin{equation}
\gamma_{2k}
=
\gamma_{2k-1}
=
\sigma_{c,k}\sqrt{\Gamma_k}\sin\phi_M,
\quad k=1,\ldots,K.
\label{eq:ci_threshold}
\end{equation}
Then, \eqref{eq:ci_constraint} can be equivalently written as\vspace{-0.1cm}
\begin{equation}
\Re
\left\{
\widetilde{\mathbf h}_{n,m,k'}^{H}
(\mathbf d_T)
\mathbf x_{n,m}
\right\}
\ge
\gamma_{k'},
\quad
k'=1,\ldots,2K.
\label{eq:affine_ci_constraint}
\end{equation}

\section{3D Ambiguity Function Analysis}\label{section:3D_AF_analysis}
This section investigates how the transmit MA geometry and MIMO-OFDM waveform jointly shape the 3D ambiguity response. We first derive the discrete periodic 3D ambiguity function and examine when waveform phase compensation can offset MA position changes. Then, we relate waveform rank to the separability between the angular dimension and the range-Doppler plane and characterize the role of MAs under rank one and higher rank waveforms.

\subsection{Discrete Periodic 3D Ambiguity Function}
\label{subsec:discrete_periodic_3d_af}
Let $(\theta_0,\tau_0,f_{D,0})$ denote the true target parameters and $(\theta,\tau,f_D)$ denote the candidate parameters used for matched processing. The MIMO radar ambiguity function characterizes the matched response of the transmit waveform and array under mismatched candidate target parameters \cite{sanantonio2007mimo_radar}. Accordingly, the continuous 3D ambiguity function can be calculated as\vspace{-0.1cm}
%The MIMO radar ambiguity function characterizes the matched response of the transmit waveform and array under mismatched candidate target parameters; its mainlobe and sidelobes reflect the parameter-resolution capability and global ambiguity characteristics, respectively \cite{sanantonio2007mimo_radar}. Accordingly, the continuous 3D ambiguity function is defined as
\begin{equation}
\label{eq:continuous_3d_af}
\begin{aligned}
\chi_{\theta_0}(\theta,\Delta\tau,\Delta f_D)
\triangleq&
\rho_R(\theta,\theta_0)
\int_{-\infty}^{+\infty}
\mathbf a_T^H(\theta_0;\mathbf d_T)
\widetilde{\mathbf x}(t)
\\[-1mm]
&\times
\widetilde{\mathbf x}^{H}(t+\Delta\tau)
\mathbf a_T(\theta;\mathbf d_T)
e^{j2\pi\Delta f_Dt}\,
\mathrm{d}t,
\end{aligned}
\end{equation}
where $\rho_R(\theta,\theta_0)=\mathbf a_R^H(\theta)\mathbf a_R(\theta_0)$, $\Delta\tau=\tau_0-\tau$, and $\Delta f_D=f_{D,0}-f_D$. 
%It can be seen that \eqref{eq:continuous_3d_af} retains both the transmit--receive spatial mismatch and the delay--Doppler mismatch, thereby characterizing the 3D matched response jointly determined by the array geometry and transmit waveform.\footnote{In particular, when the candidate angle coincides with the true target angle, \eqref{eq:continuous_3d_af} reduces to the two-dimensional delay--Doppler ambiguity function formed along $\theta_0$. For a single-antenna system, it further reduces to the classical Woodward ambiguity function.}

Since $\tau_0<T_{\rm cp}$, the delay after CP removal can be represented by a circular shift within the useful OFDM symbol. The $q$-th time domain transmit sample of the $m$-th OFDM symbol after CP removal, denoted by $\widetilde{\mathbf x}_{q,m}\in\mathbb C^{N_t}$, is written as\vspace{-0.1cm}
\begin{equation}
\label{eq:time_domain_ofdm_sample}
\widetilde{\mathbf x}_{q,m}
=
\frac{1}{\sqrt{N_c}}
\sum_{n=0}^{N_c-1}
\mathbf x_{n,m}
e^{j2\pi qn/N_c},
\quad
q=0,\ldots,N_c-1.
\end{equation}
Substituting \eqref{eq:time_domain_ofdm_sample} into the periodic correlation expression and using Fourier orthogonality, the discrete periodic 3D ambiguity function can be derived as\vspace{-0.1cm}
\begin{equation}
\label{eq:discrete_periodic_3d_af}
\begin{aligned}
\chi_{\theta_0}(\theta,l,\nu)
\triangleq&
\rho_R(\theta,\theta_0)
\sum_{m=0}^{N_s-1}
\sum_{q=0}^{N_c-1}
\mathbf a_T^H(\theta_0;\mathbf d_T)
\widetilde{\mathbf x}_{q,m}
\\[-1mm]
&\times
\widetilde{\mathbf x}_{[q+l]_{N_c},m}^{H}
\mathbf a_T(\theta;\mathbf d_T)
e^{j2\pi\nu m/N_s}
\\
=&
\rho_R(\theta,\theta_0)
\sum_{m=0}^{N_s-1}
\sum_{n=0}^{N_c-1}
\mathbf a_T^H(\theta_0;\mathbf d_T)
\mathbf x_{n,m}\mathbf x_{n,m}^{H}
\\[-1mm]
&\times
\mathbf a_T(\theta;\mathbf d_T)
e^{-j2\pi ln/N_c}
e^{j2\pi\nu m/N_s},
\end{aligned}
\end{equation}
where $[q+l]_{N_c}$ represents the cyclic shift modulo $N_c$, while $l=0,\ldots,N_c-1$ and $\nu=0,\ldots,N_s-1$ denote the discrete range and Doppler bins over one complete period, corresponding to $\Delta\tau=l/(N_c\Delta f)$ and $\Delta f_D=\nu/(N_sT_0)$, respectively. Note that, unlike existing MIMO-OFDM ambiguity functions that evaluate only the 2D range and Doppler response at the target direction \cite{li2025mimo_ofdm,feng2026range_doppler}, \eqref{eq:discrete_periodic_3d_af} retains the candidate angle $\theta$. Hence, the spatial outer product of each RE is projected onto the true and candidate array responses before the range and Doppler Fourier weighting, allowing the MA positions and transmit waveform to jointly shape the ambiguity response across candidate angles.

%Thus, full-period correlation completely eliminates the cross-subcarrier terms with $n\neq n'$ and retains only the autocorrelation contribution of each RE. Compared with MIMO-OFDM ambiguity functions that examine only the two-dimensional range-Doppler response in the target direction \cite{li2025mimo_ofdm,feng2026range_doppler}, \eqref{eq:discrete_periodic_3d_af} further retains the candidate angle $\theta$, thereby explicitly characterizing the ambiguity response under angular mismatch. Specifically, the spatial outer product $\mathbf x_{n,m}\mathbf x_{n,m}^{H}$ at each RE is weighted by the range-Doppler phase terms and then bilaterally projected onto the array manifolds corresponding to the true target direction and the candidate direction. Consequently, the MA geometry and transmit waveform jointly determine the 3D ambiguity structure at different candidate angles.

To clarify the role of the candidate angle, consider two feasible MA position vectors $\mathbf d_T$ and $\mathbf d_T'$, and let $\Delta d_i=d_{T,i}'-d_{T,i}$, $i=0,\ldots,N_t-1$. Since the first MA is the position reference, $\Delta d_0=0$. Define $\mathbf D_0\triangleq\operatorname{diag}\{e^{jk_c\Delta d_i\sin\theta_0}\}_{i=0}^{N_t-1}$, and let $\mathbf x_{n,m}'=\mathbf D_0\mathbf x_{n,m}$. Then, $\mathbf a_T^H(\theta_0;\mathbf d_T')\mathbf x_{n,m}'=\mathbf a_T^H(\theta_0;\mathbf d_T)\mathbf x_{n,m}$, so the entire range and Doppler slice at $\theta=\theta_0$ is preserved. Whether the same equivalence holds at other candidate angles is characterized below.

\begin{proposition}
\label{prop:ma_phase_compensation}
For the above MA position change and phase compensation toward $\theta_0$, the transmit response at any candidate angle $\theta$ satisfies\vspace{-0.1cm}
\begin{equation}
\label{eq:phase_compensated_response}
\mathbf a_T^H(\theta;\mathbf d_T')
\mathbf x_{n,m}'=
\mathbf a_T^H(\theta;\mathbf d_T)
\mathbf D_{\Delta}(\theta,\theta_0)
\mathbf x_{n,m},
\end{equation}
where
$\mathbf D_{\Delta}(\theta,\theta_0)\triangleq
\operatorname{diag}\{e^{-jk_c\Delta d_i(\sin\theta-\sin\theta_0)}\}_{i=0}^{N_t-1}$.
For $\theta\neq\theta_0$, invariance of the candidate angle response for every $\mathbf x_{n,m}\in\mathbb C^{N_t}$ holds if and only if\vspace{-0.1cm}
\begin{equation}
\label{eq:angle_invariance_condition}
k_c\Delta d_i(\sin\theta-\sin\theta_0)
\in 2\pi\mathbb Z,
\quad i=0,\ldots,N_t-1.
\end{equation}
Equivalently, $\mathbf D_{\Delta}(\theta,\theta_0)=\mathbf I_{N_t}$.
Since $\Delta d_0=0$, the same condition is also necessary and sufficient for invariance of the response magnitude for every transmit vector.
\end{proposition}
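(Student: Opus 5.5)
The plan is to verify the identity \eqref{eq:phase_compensated_response} entrywise, then turn the ``for every $\mathbf x_{n,m}$'' invariance requirement into a vector identity that forces $\mathbf D_{\Delta}(\theta,\theta_0)=\mathbf I_{N_t}$.

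\textbf{The identity.} From \eqref{eq:tx_array_steering_vector}, the $i$-th entry of $\mathbf a_T^H(\theta;\mathbf d_T')$ is $e^{-jk_c d_{T,i}'\sin\theta}$. Writing $d_{T,i}'=d_{T,i}+\Delta d_i$ splits this into $e^{-jk_c d_{T,i}\sin\theta}e^{-jk_c\Delta d_i\sin\theta}$. The $i$-th diagonal entry of $\mathbf D_0$ is $e^{jk_c\Delta d_i\sin\theta_0}$, so
\[
[\mathbf a_T^H(\theta;\mathbf d_T')\mathbf D_0]_i=e^{-jk_c d_{T,i}\sin\theta}e^{-jk_c\Delta d_i(\sin\theta-\sin\theta_0)}.
\]
This is exactly the $i$-th entry of $\mathbf a_T^H(\theta;\mathbf d_T)\mathbf D_\Delta(\theta,\theta_0)$. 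Hence
\[
\mathbf a_T^H(\theta;\mathbf d_T')\mathbf D_0=\mathbf a_T^H(\theta;\mathbf d_T)\mathbf D_\Delta(\theta,\theta_0),
\]
and multiplying on the right by $\mathbf x_{n,m}$ gives \eqref{eq:phase_compensated_response}.

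\textbf{Invariance of the complex response.} Requiring
\[
\mathbf a_T^H(\theta;\mathbf d_T)\mathbf D_\Delta\mathbf x=\mathbf a_T^H(\theta;\mathbf d_T)\mathbf x\quad\text{for all }\mathbf x\in\mathbb C^{N_t}
\]
is equivalent to the row-vector identity $\mathbf a_T^H(\theta;\mathbf d_T)(\mathbf D_\Delta-\mathbf I_{N_t})=\mathbf 0$; necessity follows by taking $\mathbf x=\mathbf e_i$. Every entry of $\mathbf a_T(\theta;\mathbf d_T)$ has unit modulus, so the $i$-th entry of this identity reads $e^{-jk_c d_{T,i}\sin\theta}(e^{-jk_c\Delta d_i(\sin\theta-\sin\theta_0)}-1)=0$. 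This forces $e^{-jk_c\Delta d_i(\sin\theta-\sin\theta_0)}=1$ for each $i$, which is \eqref{eq:angle_invariance_condition} and equivalently $\mathbf D_\Delta(\theta,\theta_0)=\mathbf I_{N_t}$. Sufficiency is immediate.

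\textbf{Invariance of the magnitude.} Sufficiency is again trivial. For necessity, suppose $|\mathbf a_T^H(\theta;\mathbf d_T)\mathbf D_\Delta\mathbf x|=|\mathbf a_T^H(\theta;\mathbf d_T)\mathbf x|$ for all $\mathbf x$. I will use the reference element: since $\Delta d_0=0$, we have $[\mathbf D_\Delta]_{00}=1$. Take $\mathbf x=\mathbf e_0+c\,\mathbf e_i$ with $c\in\mathbb C$ arbitrary, and set $a_i=e^{-jk_cd_{T,i}\sin\theta}$ and $\delta_i=e^{-jk_c\Delta d_i(\sin\theta-\sin\theta_0)}$. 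The magnitude condition becomes $|1+c\,a_i\delta_i|=|1+c\,a_i|$ for all $c$. Choosing $c=-a_i^{*}$ makes the right-hand side zero, so the left-hand side $|1-\delta_i|$ must vanish, giving $\delta_i=1$. This is precisely where $\Delta d_0=0$ is needed: it supplies a fixed-phase reference entry. Without it, a common phase $e^{j\psi}$ across all entries would leave every magnitude unchanged, so the magnitude condition would determine the $\delta_i$ only up to a global phase.

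I expect the magnitude-necessity step to be the main obstacle. The rest is routine.
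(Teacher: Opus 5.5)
Your proof is correct and follows essentially the same route as the paper. You obtain the identity by entrywise substitution of $d_{T,i}'=d_{T,i}+\Delta d_i$, reduce invariance for every $\mathbf x_{n,m}$ to $\mathbf D_{\Delta}(\theta,\theta_0)=\mathbf I_{N_t}$ using the unit-modulus steering entries, and use the reference element $\Delta d_0=0$ to remove the global-phase freedom in the magnitude case, where your two-entry test vector $\mathbf e_0+c\,\mathbf e_i$ with $c=-a_i^{*}$ handles that step cleanly and directly.
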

\begin{proof}
See \textbf{App.} A in the supplemental document \cite{mao2026detailed} (also attached with the main manuscript).
%Please refer to Appendix~\hyperref[app:appendix_a]{A}.
\end{proof}

\begin{remark}
\label{remark:ma_angle_selectivity}
Phase compensation that preserves the range and Doppler slice at $\theta_0$ generally fails for $\theta\neq\theta_0$. Thus, different MA positions and transmit waveforms can yield the same response at the target direction but different ambiguity responses at other candidate angles.
\end{remark}

%\begin{remark}
%\label{remark:ma_angle_selectivity}
%Proposition~\ref{prop:ma_phase_compensation} shows that per-antenna phase compensation for the true target angle $\theta_0$ can completely cancel the spatial phase differences caused by MA position variations in that direction, thereby preserving the entire range-Doppler slice. However, for $\theta\neq\theta_0$, the equivalence established by the array-position variation and waveform phase compensation generally no longer holds. Therefore, even when different combinations of MA positions and transmit waveforms yield identical range-Doppler responses in the target direction, they may still produce different ambiguity responses at other candidate angles.
%\end{remark}

\subsection{Waveform Rank and Separability Between the Angular Dimension and the Range-Doppler Plane}
\label{subsec:waveform_rank_angle_range_doppler_separability}
To characterize whether different candidate angles share the same normalized range-Doppler response, we relate the 3D ambiguity structure to the rank of the waveform matrix formed by all RE transmit vectors.

For subsequent analysis, the 2D RE index $(n,m)$ is rearranged as $b=n+N_cm$, where $b=0,\ldots,N_{\mathrm{RE}}-1$ and $N_{\mathrm{RE}}=N_cN_s$. Accordingly, $n_b=b\bmod N_c$, $m_b=\lfloor b/N_c\rfloor$, and $\mathbf x_b\triangleq\mathbf x_{n_b,m_b}$. Collecting all RE transmit vectors gives $\mathbf X_{\mathrm{RE}}=[\mathbf x_0,\mathbf x_1,\ldots,\mathbf x_{N_{\mathrm{RE}}-1}]\in\mathbb C^{N_t\times N_{\mathrm{RE}}}$.
Its column space $\mathcal S_T=\operatorname{col}(\mathbf X_{\mathrm{RE}})$ is the transmit subspace occupied by the MIMO-OFDM waveform, and $r=\operatorname{rank}(\mathbf X_{\mathrm{RE}})=\dim(\mathcal S_T)$ is the minimum number of independent spatial basis vectors required to represent all RE transmit vectors.

Consider a candidate angle grid $\Theta=\{\theta_g\}_{g=1}^{N_\theta}$, and define $\mathbf a_0\triangleq\mathbf a_T(\theta_0;\mathbf d_T)$, $\mathbf a_g\triangleq\mathbf a_T(\theta_g;\mathbf d_T)$, while
$\mathbf A_\Theta=[\mathbf a_1,\ldots,\mathbf a_{N_\theta}]$.
Arrange the complete range-Doppler response at each candidate angle as one row of $\boldsymbol\Xi_{\theta_0}\in\mathbb C^{N_\theta\times N_{\rm RE}}$.
From \eqref{eq:discrete_periodic_3d_af}, the resulting 3D ambiguity samples can be exactly expressed as\vspace{-0.1cm}
\begin{equation}
\label{eq:ambiguity_sample_matrix}
\boldsymbol\Xi_{\theta_0}
=
\mathbf D_\rho
\mathbf A_\Theta^T
\mathbf X_{\mathrm{RE}}^{*}
\mathbf D_\alpha
\mathbf F_{\mathrm{RD}},
\end{equation}
where $\mathbf D_\rho = \operatorname{diag} \left(\rho_R(\theta_1,\theta_0),\ldots,\rho_R(\theta_{N_\theta},\theta_0)\right)$, $\mathbf D_\alpha = \operatorname{diag} \left(\mathbf a_0^H\mathbf x_0,\ldots,\mathbf a_0^H\mathbf x_{N_{\rm RE}-1}\right)$, and $\mathbf F_{\rm RD}\in\mathbb C^{N_{\rm RE}\times N_{\rm RE}}$ is the complete range-Doppler Fourier transform matrix with entries\vspace{-0.1cm}
\begin{equation}
\label{eq:range_doppler_transform_matrix}
[\mathbf F_{\mathrm{RD}}]_{b,\iota(l,\nu)}
=
e^{-j2\pi ln_b/N_c}
e^{j2\pi\nu m_b/N_s},
\end{equation}
where $\iota(l,\nu)=l+N_c\nu$. The columns of $\mathbf F_{\rm RD}$ form a complete 2D periodic Fourier basis, so $\mathbf F_{\rm RD}$ is invertible. The $g$-th row of $\boldsymbol\Xi_{\theta_0}$ contains the complete range-Doppler response at $\theta_g$. Therefore, if a nonzero 3D ambiguity response is separable between the angular dimension and the range-Doppler plane, all rows differ only by complex scaling factors and $\operatorname{rank}(\boldsymbol\Xi_{\theta_0})=1$. The factorization in \eqref{eq:ambiguity_sample_matrix} further gives\vspace{-0.1cm}
\begin{equation}
\label{eq:ambiguity_rank_upper_bound}
\operatorname{rank}(\boldsymbol\Xi_{\theta_0})\le
\operatorname{rank}(\mathbf X_{\mathrm{RE}})
=r.
\end{equation}
Therefore, the rank of the 3D ambiguity sample matrix is upper bounded by the transmit waveform rank, while the invertibility of $\mathbf F_{\rm RD}$ ensures that the range-Doppler transform introduces no additional rank loss.

The waveform rank is fully retained in the sampled 3D ambiguity response under the following conditions:
\begin{itemize}
    \item \textit{Condition 1}: $\operatorname{rank}(\mathbf A_\Theta)=N_t$.

    \item \textit{Condition 2}: $\mathbf a_0^H\mathbf x_b \neq0, \quad b=0,\ldots,N_{\mathrm{RE}}-1$.

    \item \textit{Condition 3}: $\rho_R(\theta_g,\theta_0)\neq0, \quad g=1,\ldots,N_\theta$.
\end{itemize}
where \textit{Condition 1} ensures that the candidate angle grid spans the transmit space, \textit{Condition 2} excludes zero projection of any RE onto the target direction, and \textit{Condition 3} excludes receive array correlation nulls. Under the above conditions, $\mathbf D_\alpha$ and $\mathbf D_\rho$ are nonsingular. Together with the invertibility of $\mathbf F_{\rm RD}$, \eqref{eq:ambiguity_rank_upper_bound} yields\vspace{-0.1cm}
\begin{equation}
\label{eq:ambiguity_rank_equivalence}
\operatorname{rank}(\boldsymbol\Xi_{\theta_0})
=
\operatorname{rank}(\mathbf X_{\mathrm{RE}})
=r.
\end{equation}
Based on the above rank relationships, we further examine how the transmit waveform rank determines the separability between the angular dimension and the range-Doppler plane.

\begin{theorem}
\label{thm:rank_one_separability}
If $\operatorname{rank}(\mathbf X_{\mathrm{RE}})=1$, there exist a nonzero basis vector $\mathbf w\in\mathbb C^{N_t}$ for the 1D transmit subspace and complex coefficients $\{c_b\}$ such that $\mathbf x_b=c_b\mathbf w$ for all $b$. The 3D ambiguity function can then be factored into the product of an angular response and a 2D range-Doppler response, i.e.,\vspace{-0.1cm}
\begin{equation}
\label{eq:rank_one_af_factorization}
\chi_{\theta_0}(\theta,l,\nu)
=
G_\theta(\theta_0,\theta)
D_{\mathrm{RD}}(l,\nu),
\end{equation}
\begin{comment}
\begin{subequations}
\label{eq:rank_one_af_factorization}
\begin{align}
\chi_{\theta_0}(\theta,l,\nu)
&=
G_\theta(\theta_0,\theta)
D_{\mathrm{RD}}(l,\nu),
\label{eq:rank_one_af_product}
\\
G_\theta(\theta_0,\theta)
&=
\rho_R(\theta,\theta_0)
[\mathbf a_0^H\mathbf w]
[\mathbf w^H\mathbf a_T(\theta;\mathbf d_T)],
\label{eq:rank_one_angular_response}
\\
D_{\mathrm{RD}}(l,\nu)
&=
\sum_{b=0}^{N_{\mathrm{RE}}-1}
|c_b|^2
e^{-j2\pi ln_b/N_c}
e^{j2\pi\nu m_b/N_s}.
\label{eq:rank_one_range_doppler_response}
\end{align}
\end{subequations}
\end{comment}
where $G_\theta(\theta_0,\theta)=\rho_R(\theta,\theta_0)[\mathbf a_0^H\mathbf w][\mathbf w^H\mathbf a_T(\theta;\mathbf d_T)]$ and $D_{\mathrm{RD}}(l,\nu)=\sum_{b=0}^{N_{\mathrm{RE}}-1}|c_b|^2e^{-j2\pi ln_b/N_c}e^{j2\pi\nu m_b/N_s}$. Thus, a rank one waveform makes the angular dimension separable from the range-Doppler plane. Conversely, under Condition 1-3, separability of a nonzero 3D ambiguity response implies $\operatorname{rank}(\mathbf X_{\mathrm{RE}})=1$.
\end{theorem}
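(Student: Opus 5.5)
The plan has two parts, one for each direction of the theorem. The forward direction is a direct substitution into the second line of \eqref{eq:discrete_periodic_3d_af}. The converse is a rank argument built on \eqref{eq:ambiguity_sample_matrix} and \eqref{eq:ambiguity_rank_equivalence}.

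\emph{Forward direction.} Since $\operatorname{rank}(\mathbf X_{\mathrm{RE}})=1$, the column space $\mathcal S_T$ is one-dimensional. Pick any nonzero $\mathbf w$ spanning it. Every column then satisfies $\mathbf x_b=c_b\mathbf w$ for a unique scalar $c_b=\mathbf w^H\mathbf x_b/\|\mathbf w\|_2^2$, with not all $c_b$ equal to zero. Using the reindexing $b=n+N_cm$, substitute into the frequency-domain form of \eqref{eq:discrete_periodic_3d_af}. This gives
\begin{equation*}
\mathbf a_0^H\mathbf x_b\mathbf x_b^H\mathbf a_T(\theta;\mathbf d_T)=|c_b|^2[\mathbf a_0^H\mathbf w][\mathbf w^H\mathbf a_T(\theta;\mathbf d_T)].
\end{equation*}
The bracketed factor and $\rho_R(\theta,\theta_0)$ do not depend on $b$, so they pull out of the double sum. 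What remains is $\sum_b|c_b|^2e^{-j2\pi ln_b/N_c}e^{j2\pi\nu m_b/N_s}=D_{\mathrm{RD}}(l,\nu)$, which yields \eqref{eq:rank_one_af_factorization} with the stated $G_\theta$. No obstacle arises here.

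\emph{Converse.} Suppose the nonzero 3D ambiguity response is separable, meaning $\chi_{\theta_0}(\theta_g,l,\nu)=u_g v(l,\nu)$ on the grid. Then $\boldsymbol\Xi_{\theta_0}=\mathbf u\mathbf v^T$. Because $\boldsymbol\Xi_{\theta_0}\neq\mathbf 0$, its rank is exactly one. Under Conditions 1--3, \eqref{eq:ambiguity_rank_equivalence} gives $\operatorname{rank}(\mathbf X_{\mathrm{RE}})=\operatorname{rank}(\boldsymbol\Xi_{\theta_0})=1$, which completes the argument.

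\emph{Main obstacle.} The delicate step is justifying the equality \eqref{eq:ambiguity_rank_equivalence} rather than merely quoting it. In \eqref{eq:ambiguity_sample_matrix}, the matrices $\mathbf D_\rho$, $\mathbf D_\alpha$ and $\mathbf F_{\mathrm{RD}}$ are square and nonsingular by Conditions 2--3 and the invertibility of $\mathbf F_{\mathrm{RD}}$. Hence $\operatorname{rank}(\boldsymbol\Xi_{\theta_0})=\operatorname{rank}(\mathbf A_\Theta^T\mathbf X_{\mathrm{RE}}^*)$. Condition 1 states that $\mathbf A_\Theta^T\in\mathbb C^{N_\theta\times N_t}$ has full column rank $N_t$, so it is injective and left-multiplication by it preserves rank. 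Finally, $\operatorname{rank}(\mathbf X_{\mathrm{RE}}^*)=\operatorname{rank}(\mathbf X_{\mathrm{RE}})$. I would also confirm entrywise that \eqref{eq:ambiguity_sample_matrix} matches \eqref{eq:discrete_periodic_3d_af} before relying on it: $[\mathbf A_\Theta^T\mathbf X^*_{\mathrm{RE}}]_{g,b}=\mathbf a_g^T\mathbf x_b^*=\mathbf x_b^H\mathbf a_g$, scaled by $\mathbf a_0^H\mathbf x_b$.
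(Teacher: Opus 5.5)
Your proposal is correct and follows essentially the same route as the paper: the forward direction substitutes $\mathbf x_b=c_b\mathbf w$ into the frequency-domain form of \eqref{eq:discrete_periodic_3d_af}, and the converse reads separability as $\operatorname{rank}(\boldsymbol\Xi_{\theta_0})=1$, then transfers this to $\mathbf X_{\mathrm{RE}}$ through the factorization \eqref{eq:ambiguity_sample_matrix} and the rank equality \eqref{eq:ambiguity_rank_equivalence} under Conditions 1--3. Your explicit justification of that rank equality is sound: the square factors are nonsingular, $\mathbf A_\Theta^T$ is injective, and $\operatorname{rank}(\mathbf X_{\mathrm{RE}}^*)=\operatorname{rank}(\mathbf X_{\mathrm{RE}})$.
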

\begin{proof}
See \textbf{App.} B in the supplemental document \cite{mao2026detailed}.
%Please refer to Appendix~\hyperref[app:appendix_b]{B}.
\end{proof}

\begin{remark}
\label{remark:rank_one_separability}
Under a rank one waveform, the candidate angle changes only the complex scale of the 3D ambiguity response, while the normalized range-Doppler shape remains unchanged. If Condition 1-3 do not hold, some transmit dimensions may not be reflected in the sampled 3D ambiguity response, and a waveform of higher rank may therefore appear rank deficient or even separable.
\end{remark}

\subsection{Role of MA Positions Under Different Waveform Ranks}
\label{subsec:ma_role_under_different_waveform_ranks}
Building on Theorem~\ref{thm:rank_one_separability}, this subsection further investigates how the MA positions affect the 3D ambiguity response under different waveform ranks.

\subsubsection{Rank One Waveforms}
For $r=\operatorname{rank}(\mathbf X_{\mathrm{RE}})=1$, Theorem~\ref{thm:rank_one_separability} shows that all RE transmit vectors share the same 1D transmit subspace, i.e., $\mathbf x_b=c_b\mathbf w$. Since this decomposition has a scaling ambiguity, we set $\|\mathbf w\|_2=1$ without loss of generality. The total transmit energy across all $N_t$ transmit antennas on the $b$-th RE then satisfies\vspace{-0.1cm}
\begin{equation}
\label{eq:rank_one_re_energy}
\|\mathbf x_b\|_2^2
=
|c_b|^2\|\mathbf w\|_2^2
=
|c_b|^2.
\end{equation}
Therefore, $\mathbf w$ describes the relative amplitude and phase structure across the transmit antennas shared by all REs, whereas $|c_b|^2$ denotes the multi-antenna transmit energy on the $b$-th RE. In other words, a rank one MIMO-OFDM waveform maps a set of scalar time-frequency coefficients $\{c_b\}$ onto the transmit array through a common spatial basis vector. Furthermore, according to the separable representation in Theorem~\ref{thm:rank_one_separability}, for any candidate angle satisfying $\chi_{\theta_0}(\theta,0,0)\neq0$, we have\vspace{-0.1cm}
\begin{equation}
\label{eq:normalized_rank_one_range_doppler_response}
\frac{\chi_{\theta_0}(\theta,l,\nu)}
{\chi_{\theta_0}(\theta,0,0)}
=
\frac{D_{\mathrm{RD}}(l,\nu)}
{D_{\mathrm{RD}}(0,0)}.
\end{equation}
Hence, all candidate angles share the same normalized range-Doppler response. For fixed
$\mathbf w$ and $\{c_b\}$, the MA positions appear only in
$G_\theta(\theta_0,\theta)$, whereas $D_{\mathrm{RD}}(l,\nu)$ is independent of $\mathbf d_T$. Therefore, changing the MA positions cannot alter the normalized range-Doppler shape of a rank one waveform.

Moreover, $D_{\mathrm{RD}}(l,\nu)$ depends only on the RE energy sequence $\{c_b\}$, not on the phases of $c_b$.
When all REs have equal energy, complete period Fourier orthogonality makes every nonzero range-Doppler cell vanish. This property leads to the following stronger compression result.

%Thus, under a rank-one waveform, different candidate angles exhibit the same normalized two-dimensional range-Doppler response, and the angle changes only its complex scale through $G_\theta(\theta_0,\theta)$. For given $\mathbf w$ and $\{c_b\}$, the MA positions appear only in $G_\theta(\theta_0,\theta)$, whereas $D_{\mathrm{RD}}(l,\nu)$ is independent of $\mathbf d_T$. Therefore, for a fixed rank-one waveform, adjusting the MA positions can change the angular response but cannot alter the normalized shape of the two-dimensional range-Doppler response.

%Moreover, \eqref{eq:rank_one_range_doppler_response} shows that the range-Doppler response of a rank-one waveform is determined solely by the RE energy sequence $\{|c_b|^2\}$ and is independent of the phases of $c_b$. In particular, when all REs have the same energy, complete-period Fourier orthogonality makes the response at every nonzero range-Doppler cell exactly zero. This property indicates that, for a given reference target, preserving the angular response at zero range and zero Doppler does not necessarily require a high-rank transmit subspace. The following theorem provides a more general rank-one compression result.

\begin{theorem}
\label{thm:rank_one_compression}
For fixed $\mathbf d_T$, consider any transmit waveform $\mathbf X_{\mathrm{RE}}$ and let $\mathbf R_\Sigma=\mathbf X_{\mathrm{RE}}\mathbf X_{\mathrm{RE}}^H$. If $\mathbf a_0^H\mathbf R_\Sigma\mathbf a_0>0$, a rank one transmit waveform $\mathbf X_{\mathrm{RE}}^{(1)}=[\mathbf x_0^{(1)},\ldots,\mathbf x_{N_{\mathrm{RE}}-1}^{(1)}]$ can always be constructed as\vspace{-0.1cm}
\begin{equation}
\label{eq:rank_one_waveform_construction}
\mathbf x_b^{(1)}
=
\frac{
e^{j\phi_b}
\mathbf R_\Sigma\mathbf a_0
}{
\sqrt{N_{\mathrm{RE}}\,
\mathbf a_0^H\mathbf R_\Sigma\mathbf a_0}
},
\quad
b=0,\ldots,N_{\mathrm{RE}}-1,
\end{equation}
where $\phi_b\in[0,2\pi)$ is arbitrary. Without increasing the total transmit energy, this construction preserves the complete angular response of the original waveform at $(l,\nu)=(0,0)$ and makes the 3D ambiguity response zero at every nonzero cell of the complete periodic range-Doppler grid.
\end{theorem}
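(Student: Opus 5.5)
The construction in \eqref{eq:rank_one_waveform_construction} is rank one with common spatial vector $\mathbf R_\Sigma\mathbf a_0$ and scalar coefficients $c_b=e^{j\phi_b}/\sqrt{N_{\mathrm{RE}}\,\mathbf a_0^H\mathbf R_\Sigma\mathbf a_0}$. This is exactly the setting of Theorem~\ref{thm:rank_one_separability}, so I would write $\mathbf x_b^{(1)}=c_b\mathbf w$ with $\mathbf w=\mathbf R_\Sigma\mathbf a_0$ (not normalized) and use the factorization $\chi_{\theta_0}=G_\theta D_{\mathrm{RD}}$. The claims then split into three parts: the energy bound, the nonzero-cell nulling, and preservation of the zero-delay zero-Doppler angular response.

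\emph{Nonzero cells.} Since $|c_b|^2=1/(N_{\mathrm{RE}}\mathbf a_0^H\mathbf R_\Sigma\mathbf a_0)$ is constant in $b$, we have $D_{\mathrm{RD}}(l,\nu)=|c|^2\sum_{n}e^{-j2\pi ln/N_c}\sum_m e^{j2\pi\nu m/N_s}$. This vanishes unless $l=0$ and $\nu=0$, by full-period Fourier orthogonality. Hence $\chi^{(1)}_{\theta_0}(\theta,l,\nu)=0$ for every $\theta$ and every $(l,\nu)\neq(0,0)$. The phases $\phi_b$ do not enter $D_{\mathrm{RD}}$, which is why they are arbitrary.

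\emph{Zero cell.} From \eqref{eq:discrete_periodic_3d_af}, the original waveform gives $\chi_{\theta_0}(\theta,0,0)=\rho_R(\theta,\theta_0)\,\mathbf a_0^H\mathbf R_\Sigma\mathbf a_T(\theta;\mathbf d_T)$, since $\sum_b\mathbf x_b\mathbf x_b^H=\mathbf R_\Sigma$. For the new waveform, $\sum_b\mathbf x^{(1)}_b\mathbf x^{(1)H}_b=\mathbf R_\Sigma\mathbf a_0\mathbf a_0^H\mathbf R_\Sigma/(\mathbf a_0^H\mathbf R_\Sigma\mathbf a_0)$, so
\begin{equation*}
\chi^{(1)}_{\theta_0}(\theta,0,0)=\rho_R(\theta,\theta_0)\,\frac{(\mathbf a_0^H\mathbf R_\Sigma\mathbf a_0)(\mathbf a_0^H\mathbf R_\Sigma\mathbf a_T(\theta;\mathbf d_T))}{\mathbf a_0^H\mathbf R_\Sigma\mathbf a_0}.
\end{equation*}
Cancelling the positive scalar $\mathbf a_0^H\mathbf R_\Sigma\mathbf a_0$ recovers the original value for every $\theta$, using $\mathbf R_\Sigma=\mathbf R_\Sigma^H$.

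\emph{Energy.} The new total energy is $\sum_b\|\mathbf x_b^{(1)}\|_2^2=\mathbf a_0^H\mathbf R_\Sigma^2\mathbf a_0/(\mathbf a_0^H\mathbf R_\Sigma\mathbf a_0)$, and it must be shown to be at most $\operatorname{tr}(\mathbf R_\Sigma)=\|\mathbf X_{\mathrm{RE}}\|_F^2$. This is the one nontrivial step. I would eigendecompose $\mathbf R_\Sigma=\sum_i\lambda_i\mathbf u_i\mathbf u_i^H$ with $\lambda_i\ge0$ and set $p_i=|\mathbf u_i^H\mathbf a_0|^2$. The ratio becomes $\sum_i\lambda_i^2p_i/\sum_i\lambda_ip_i$, a weighted average of the $\lambda_i$ with weights proportional to $\lambda_ip_i$. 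It is therefore at most $\lambda_{\max}(\mathbf R_\Sigma)\le\operatorname{tr}(\mathbf R_\Sigma)$, where the last inequality holds because all eigenvalues are nonnegative. Equality requires the original waveform to be essentially rank one and aligned, so in general the energy strictly decreases. The hypothesis $\mathbf a_0^H\mathbf R_\Sigma\mathbf a_0>0$ guarantees that the construction is well defined and that the cancellation above is valid.
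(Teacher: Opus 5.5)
Your proof is correct and follows essentially the same route as the paper's argument. Constant RE energy $|c_b|^2$ combined with full-period Fourier orthogonality nulls every cell with $(l,\nu)\neq(0,0)$. The identity $\sum_b\mathbf x_b^{(1)}\mathbf x_b^{(1)H}=\mathbf R_\Sigma\mathbf a_0\mathbf a_0^H\mathbf R_\Sigma/(\mathbf a_0^H\mathbf R_\Sigma\mathbf a_0)$ reproduces $\rho_R(\theta,\theta_0)\,\mathbf a_0^H\mathbf R_\Sigma\mathbf a_T(\theta;\mathbf d_T)$ at the zero cell. The energy claim is the Rayleigh-quotient bound $\mathbf a_0^H\mathbf R_\Sigma^2\mathbf a_0/(\mathbf a_0^H\mathbf R_\Sigma\mathbf a_0)\le\lambda_{\max}(\mathbf R_\Sigma)\le\operatorname{tr}(\mathbf R_\Sigma)=\|\mathbf X_{\mathrm{RE}}\|_F^2$.

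A minor refinement of your closing remark: equality in the energy bound holds exactly when $\operatorname{rank}(\mathbf X_{\mathrm{RE}})=1$. No separate alignment condition is needed, because $\mathbf a_0^H\mathbf R_\Sigma\mathbf a_0>0$ already guarantees that $\mathbf a_0$ has a nonzero component along the single nonzero eigenvector.
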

\begin{proof}
See \textbf{App.} C in the supplemental document \cite{mao2026detailed}.
%Please refer to Appendix~\hyperref[app:appendix_c]{C}.
\end{proof}

\begin{remark}
\label{remark:rank_one_compressibility}
Under the single reference target and complete periodic model considered herein, the zero range and zero Doppler angular response can always be reproduced by a one dimensional transmit subspace without increasing the transmit energy. Hence, a higher waveform rank is not essential for preserving this response, although the optimal rank structure may differ for multiple targets or other range-Doppler models.
\end{remark}

\subsubsection{Higher Rank Waveforms}
For $r=\operatorname{rank}(\mathbf X_{\mathrm{RE}})\ge2$, let $\{\mathbf w_p\}_{p=1}^{r}$ be an orthonormal basis of $\mathcal S_T$, so that $\mathbf x_b=\sum_{p=1}^{r}c_{p,b}\mathbf w_p$. Substituting this expansion into \eqref{eq:discrete_periodic_3d_af} gives\vspace{-0.2cm}
\begin{equation}
\label{eq:high_rank_af_decomposition}
\begin{aligned}
\chi_{\theta_0}(\theta,l,\nu)
=&
\rho_R(\theta,\theta_0)
\sum_{p=1}^{r}
\sum_{q=1}^{r}
[\mathbf a_0^H\mathbf w_p]
\\[-1mm]
&\times
[\mathbf w_q^H\mathbf a_T(\theta;\mathbf d_T)]
D_{p,q}(l,\nu),
\end{aligned}
\end{equation}
where
$D_{p,q}(l,\nu)
=
\sum_{b=0}^{N_{\mathrm{RE}}-1}
c_{p,b}c_{q,b}^*
e^{-j2\pi ln_b/N_c}
e^{j2\pi\nu m_b/N_s}$.
It is noted that \eqref{eq:high_rank_af_decomposition} contains up to $r^2$ range-Doppler components. The diagonal terms $D_{p,p}(l,\nu)$ are determined by the RE energy sequences of individual spatial components, whereas $D_{p,q}(l,\nu)$, $p\neq q$, describe their cross correlations. The MA positions change the angle dependent weights $(\mathbf a_0^H\mathbf w_p)
[\mathbf w_q^H\mathbf a_T(\theta;\mathbf d_T)]$.
When the corresponding range-Doppler components have different structures, their superposition varies with both the candidate angle and the MA positions. Therefore, unlike the rank one case, a higher rank waveform allows the MA positions to further shape the angle dependent range-Doppler response.

%\begin{remark}
%\label{remark:waveform_rank_ma_mechanism}
%The transmit-waveform rank determines how the MA positions affect the 3D ambiguity response. Under a rank-one waveform, the MA positions explicitly affect the ambiguity response only through the separable angular factor, while the normalized two-dimensional range-Doppler shape remains unchanged. A high-rank waveform introduces multiple two-dimensional range-Doppler components with different structures. By changing their angle-dependent combination weights, the MA positions cause the range-Doppler response to vary further with the candidate angle. Hence, the effect of MAs on the 3D ambiguity response is determined not only by array mobility but also directly by the transmit-waveform rank.
%\end{remark}

\section{Waveform Design and Antenna Position Optimization}\label{section:joint_design}
Building on the 3D ambiguity analysis in Section~\ref{section:3D_AF_analysis}, this section jointly designs the transmit MA positions and SLP waveform for 3D ambiguity sidelobe suppression. We first formulate a 3D ISLR minimization problem, then develop a PDD-AO algorithm, and finally derive a lossless low dimensional formulation for the radar-only case using the rank one compression result.

\subsection{Problem Formulation}
\label{subsec:problem_formulation}
The discrete periodic 3D ambiguity function in \eqref{eq:discrete_periodic_3d_af} can be written in quadratic form as\vspace{-0.1cm}
\begin{equation}
\chi_{\theta_0}(\theta_g,l,\nu)
=
\mathbf x^H
\rho_g\widetilde{\mathbf A}_g(\mathbf d_T)\left(\mathbf D_\nu\otimes\mathbf D_l^*\right)\widetilde{\mathbf A}_0^H(\mathbf d_T)
\mathbf x,
\label{eq:quadratic_3d_af}
\end{equation}
where $\mathbf x=\operatorname{vec}(\mathbf X_{\mathrm{RE}})\in\mathbb C^{N_tN_{\rm RE}}$, $\rho_g=\rho_R(\theta_g,\theta_0)$, $\widetilde{\mathbf A}_0(\mathbf d_T)\triangleq \mathbf I_{N_{\mathrm{RE}}}\otimes\mathbf a_0\in\mathbb C^{N_tN_{\mathrm{RE}}\times N_{\mathrm{RE}}}$, and $\widetilde{\mathbf A}_g(\mathbf d_T)\triangleq \mathbf I_{N_{\mathrm{RE}}}\otimes\mathbf a_g\in\mathbb C^{N_tN_{\mathrm{RE}}\times N_{\mathrm{RE}}}$. Moreover, $\mathbf D_l \triangleq \operatorname{diag}\!\left(1,e^{j2\pi l/N_c},\ldots, e^{j2\pi l(N_c-1)/N_c}\right)\in\mathbb C^{N_c\times N_c}$ and $\mathbf D_\nu \triangleq \operatorname{diag}\!\left(1,e^{j2\pi\nu/N_s},\ldots, e^{j2\pi\nu(N_s-1)/N_s}\right)\in\mathbb C^{N_s\times N_s}$ are the range and Doppler phase matrices, respectively.

As shown in Section~\ref{section:3D_AF_analysis}, a multi-dimensional transmit subspace can produce range-Doppler responses that vary with the candidate angle. Therefore, evaluating only the 2D range-Doppler sidelobes at the target direction cannot fully characterize the 3D ambiguity leakage. Based on this observation, we extend the conventional 2D ISLR metric to the complete angle-range-Doppler grid and define the 3D ISLR as\vspace{-0.1cm}
\begin{equation}
\begin{aligned}
&\operatorname{ISLR}_{3\mathrm D}(\mathbf x,\mathbf d_T) \\[-1mm]
&\triangleq
\frac{\operatorname{ISL}_{3\mathrm D}(\mathbf x,\mathbf d_T)}
{\operatorname{ML}_{3\mathrm D}(\mathbf x,\mathbf d_T)}
=
\frac{\sum_{(\theta_g,l,\nu)\in\Omega_{\mathrm{SL}}^{(3\mathrm D)}}
\left|\chi_{\theta_0}(\theta_g,l,\nu)\right|^2}
{\sum_{(\theta_g,l,\nu)\in\Omega_{\mathrm{ML}}^{(3\mathrm D)}}
\left|\chi_{\theta_0}(\theta_g,l,\nu)\right|^2},
\label{eq:3d_islr}
\end{aligned}
\end{equation}
where $\mathcal M_\theta\subseteq\Theta$ denotes the angular mainlobe set containing $\theta_0$, $\mathcal S_\theta=\Theta\setminus\mathcal M_\theta$, while $\mathcal G_{\mathrm{RD}}=\{(l,\nu)\mid 0\le l<N_c,\ 0\le\nu<N_s\}$. The corresponding 3D mainlobe and sidelobe regions are $\Omega_{\mathrm{ML}}^{(3\mathrm D)}=\mathcal M_\theta\times\{(0,0)\}$ and $\Omega_{\mathrm{SL}}^{(3\mathrm D)}=\left[\mathcal S_\theta\times\{(0,0)\}\right]\cup\left[\Theta\times\left(\mathcal G_{\mathrm{RD}}\setminus\{(0,0)\}\right)\right]$.

Furthermore, since the 3D ISLR is invariant to a common scaling of the transmit waveform, the target illumination is constrained separately. By Parseval's identity, the cumulative illumination in the target direction can be written as\vspace{-0.1cm}
\begin{equation}
\begin{aligned}
P_{\mathrm I}(\mathbf x,\mathbf d_T)
\triangleq&
\sum_{m=0}^{N_s-1}\sum_{q=0}^{N_c-1}
\left|\mathbf a_0^H\widetilde{\mathbf x}_{q,m}\right|^2
=
\sum_{b=0}^{N_{\mathrm{RE}}-1}
\left|\mathbf a_0^H\mathbf x_b\right|^2
\\[-1mm]
=&
\mathbf x^H\widetilde{\mathbf A}_0(\mathbf d_T)
\widetilde{\mathbf A}_0^H(\mathbf d_T)\mathbf x.
\end{aligned}
\label{eq:target_illumination_power}
\end{equation}

Thus, the joint design problem is formulated as\vspace{-0.1cm}
\begin{subequations}\label{prob:joint_design}
    \begin{align}        
        \min_{\mathbf x,\mathbf d_T} \quad &  \operatorname{ISLR}_{3\mathrm D}(\mathbf x,\mathbf d_T) \label{P0 original objective function} \\[-0.2cm]
        \label{cons: CI}
        \mathrm{s.t.} \quad 
        & \Re\!\left\{\widetilde{\mathbf h}_{b,k'}^H(\mathbf d_T)\mathbf x_b\right\}\ge\gamma_{k'}, \nonumber \\[-1mm]
        &\quad b=0,\ldots,N_{\mathrm{RE}}-1, \quad k'=1,\ldots,2K, \\
        \label{cons: illumination}
        & P_{\mathrm I}(\mathbf x,\mathbf d_T)\ge P_{\mathrm{ill}}, \\
        \label{cons: power}
        & \|\mathbf x\|_2^2\le P_{\max}, \\
        \label{cons: MA position}
        & \mathbf d_T\in\mathcal D_T,
    \end{align}
\end{subequations}
Problem \eqref{prob:joint_design} is highly nonconvex. For fixed $\mathbf d_T$, the 3D ISLR is a ratio of quartic functions of $\mathbf x$, while \eqref{cons: illumination} is a nonconvex quadratic lower bound constraint. For fixed $\mathbf x$, the MA positions enter both the objective and the CI constraints through nonlinear spatial phases. Directly maintaining \eqref{cons: CI} and \eqref{cons: illumination} during AO can severely restrict the feasible MA position update. We therefore adopt the PDD framework \cite{shi2020pdd}, which converts these inequalities into equality constraints with nonnegative auxiliary variables, while \eqref{cons: power} and \eqref{cons: MA position} remain explicitly enforced.

\subsection{Problem Transformation}
\label{subsec:augmented_lagrangian_reformulation}
To incorporate the CI and target illumination constraints in problem \eqref{prob:joint_design} into the PDD framework, define the corresponding constraint margins as\vspace{-0.1cm}
\begin{equation}
\psi_{b,k'}(\mathbf x_b,\mathbf d_T)
\triangleq
\Re\!\left\{\widetilde{\mathbf h}_{b,k'}^H(\mathbf d_T)\mathbf x_b\right\}-\gamma_{k'},
\label{eq:ci_constraint_margin}
\end{equation}
and\vspace{-0.1cm}
\begin{equation}
\psi_{\mathrm I}(\mathbf x,\mathbf d_T)
\triangleq
P_{\mathrm I}(\mathbf x,\mathbf d_T)-P_{\mathrm{ill}}.
\label{eq:illumination_margin}
\end{equation}
The original inequalities are therefore equivalent to $\psi_{b,k'}\ge0$ and $\psi_{\mathrm I}\ge0$, respectively. Introducing the auxiliary variables $u_{b,k'}$ and $u_{\mathrm I}$, these inequalities can be equivalently expressed as $u_{b,k'}=\psi_{b,k'}$ and $u_{\mathrm I}=\psi_{\mathrm I}$, with $u_{b,k'}\ge0$ and $u_{\mathrm I}\ge0$. Hence, problem \eqref{prob:joint_design} can be equivalently reformulated as\vspace{-0.1cm}
\begin{subequations}\label{prob:pdd_equality_reformulation}
    \begin{align}        
        \min_{\mathbf x,\mathbf d_T,\{u_{b,k'}\},u_{\mathrm I}} \quad &  \operatorname{ISLR}_{3\mathrm D}(\mathbf x,\mathbf d_T) \label{P1 original objective function} \\[-0.2cm]
        \label{cons: CI margin}
        \mathrm{s.t.} \quad 
        & u_{b,k'}=\psi_{b,k'}(\mathbf x_b,\mathbf d_T), \nonumber \\[-1mm]
        &\quad b=0,\ldots,N_{\mathrm{RE}}-1, \quad k'=1,\ldots,2K, \\
        \label{cons: illumination margin}
        & u_{\mathrm I}=\psi_{\mathrm I}(\mathbf x,\mathbf d_T), \\
        & \eqref{cons: power}, \eqref{cons: MA position},
    \end{align}
\end{subequations}
For a given penalty parameter $\varrho>0$ and dual vector $\boldsymbol{\lambda}\in\mathbb R^{2KN_{\mathrm{RE}}+1}$, the equality constraints in problem \eqref{prob:pdd_equality_reformulation} are incorporated into the following augmented Lagrangian (AL) inner problem:\vspace{-0.1cm}
\begin{subequations}\label{prob:AL_inner_problem}
    \begin{align}        
        \min_{\mathbf x,\mathbf d_T,\mathbf u} \quad &  \operatorname{ISLR}_{3\mathrm D}(\mathbf x,\mathbf d_T)+\mathcal L_{\varrho}(\mathbf u,\boldsymbol{\lambda}) \label{P2 original objective function} \\[-0.2cm]
        \mathrm{s.t.} \quad 
        & \mathbf u\succeq\mathbf 0, \eqref{cons: power}, \eqref{cons: MA position},
    \end{align}
\end{subequations}
where $\mathbf u\in\mathbb R_+^{2KN_{\mathrm{RE}}+1}$ and $\boldsymbol{\psi}(\mathbf x,\mathbf d_T)\in\mathbb R^{2KN_{\mathrm{RE}}+1}$ collect all auxiliary variables and the corresponding constraint margins, respectively, while $\mathcal L_{\varrho}(\mathbf u,\boldsymbol{\lambda})=\boldsymbol{\lambda}^T\left[\mathbf u-\boldsymbol{\psi}(\mathbf x,\mathbf d_T)\right]+\frac{1}{2\varrho}\left\|\mathbf u-\boldsymbol{\psi}(\mathbf x,\mathbf d_T)\right\|_2^2$. We adopt the reciprocal penalty parameterization \(1/(2\varrho)\), so a smaller \(\varrho\) imposes a stronger penalty on the equality residual. For fixed \((\mathbf x,\mathbf d_T)\), the auxiliary variables are decoupled and admit a closed form solution.

\begin{proposition}
\label{prop:auxiliary_variable_solution}
For any fixed $(\mathbf x,\mathbf d_T)$, $\varrho>0$, and $\boldsymbol{\lambda}$, the unique optimal solution of \eqref{prob:AL_inner_problem} with respect to $\mathbf u\succeq\mathbf 0$ is\vspace{-0.1cm}
\begin{equation}
\mathbf u^\star
=
\left[
\boldsymbol{\psi}(\mathbf x,\mathbf d_T)
-\varrho\boldsymbol{\lambda}
\right]_+,
\label{eq:optimal_auxiliary_variables}
\end{equation}
where $[\,\cdot\,]_+$ denotes elementwise projection onto the nonnegative real axis.
\end{proposition}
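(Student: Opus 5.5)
Fix $(\mathbf x,\mathbf d_T)$, $\varrho>0$ and $\boldsymbol\lambda$, and write $\boldsymbol\psi=\boldsymbol\psi(\mathbf x,\mathbf d_T)$. The term $\operatorname{ISLR}_{3\mathrm D}(\mathbf x,\mathbf d_T)$ and the constraints \eqref{cons: power} and \eqref{cons: MA position} do not involve $\mathbf u$. Minimizing \eqref{prob:AL_inner_problem} over $\mathbf u\succeq\mathbf 0$ is therefore equivalent to minimizing $\mathcal L_\varrho(\mathbf u,\boldsymbol\lambda)$ alone over the nonnegative orthant.

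The first step is to complete the square. For each entry $i$,
\begin{equation*}
\lambda_i(u_i-\psi_i)+\frac{1}{2\varrho}(u_i-\psi_i)^2=\frac{1}{2\varrho}\bigl(u_i-\psi_i+\varrho\lambda_i\bigr)^2-\frac{\varrho}{2}\lambda_i^2 .
\end{equation*}
Summing over $i$ gives
\begin{equation*}
\mathcal L_\varrho(\mathbf u,\boldsymbol\lambda)=\frac{1}{2\varrho}\bigl\|\mathbf u-(\boldsymbol\psi-\varrho\boldsymbol\lambda)\bigr\|_2^2-\frac{\varrho}{2}\|\boldsymbol\lambda\|_2^2 .
\end{equation*}
The last term is constant in $\mathbf u$. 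Since $1/(2\varrho)>0$, the subproblem is exactly the Euclidean projection of $\boldsymbol\psi-\varrho\boldsymbol\lambda$ onto $\mathbb R_+^{2KN_{\mathrm{RE}}+1}$.

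The second step uses the fact that the orthant is a product of half-lines and the squared norm is separable. The problem therefore splits into $2KN_{\mathrm{RE}}+1$ scalar problems $\min_{u_i\ge0}(u_i-z_i)^2$ with $z_i=\psi_i-\varrho\lambda_i$. Each scalar objective is strictly convex, so its minimizer is unique. If $z_i\ge0$, the unconstrained minimizer $u_i=z_i$ is feasible. If $z_i<0$, the objective is increasing on $[0,\infty)$, so the minimizer is $u_i=0$. In both cases $u_i^\star=\max(z_i,0)$, which is \eqref{eq:optimal_auxiliary_variables}.

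Uniqueness of the full vector $\mathbf u^\star$ follows from the uniqueness of each component. Equivalently, it is uniqueness of projection onto a nonempty closed convex set. I expect no real obstacle in this argument. The only point needing care is the reciprocal parameterization, where the weight is $1/(2\varrho)$ rather than $\varrho/2$: it is what makes the shift $\varrho\boldsymbol\lambda$ instead of $\boldsymbol\lambda/\varrho$.
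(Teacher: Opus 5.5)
Your proof is correct and follows essentially the same route as the paper. The ISLR term and the remaining constraints do not involve $\mathbf u$, and $\mathcal L_\varrho$ is a separable, strictly convex quadratic, so its unique minimizer over the nonnegative orthant is the componentwise projection $[\boldsymbol\psi-\varrho\boldsymbol\lambda]_+$. Your completed-square form also gives the reduced penalty $\frac{\varrho}{2}\|[\boldsymbol\lambda-\boldsymbol\psi/\varrho]_+\|_2^2$ in \eqref{eq:reduced_inner_objective} directly, since $[\mathbf z]_+-\mathbf z=[-\mathbf z]_+$.
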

\begin{proof}
See \textbf{App.} D in the supplemental document \cite{mao2026detailed}.
%Please refer to Appendix~\hyperref[app:appendix_d]{D}.
\end{proof}

Substituting \eqref{eq:optimal_auxiliary_variables} into \eqref{P2 original objective function} and omitting the term independent of \((\mathbf x,\mathbf d_T)\), problem \eqref{prob:AL_inner_problem} reduces to\vspace{-0.1cm}
\begin{subequations}\label{prob:reduced_inner_problem}
    \begin{align}        
        \min_{\mathbf x,\mathbf d_T} \quad &  \Phi_{\varrho,\boldsymbol{\lambda}}(\mathbf x,\mathbf d_T) \label{P3 original objective function} \\[-0.2cm]
        \mathrm{s.t.} \quad 
        & \eqref{cons: power}, \eqref{cons: MA position},
    \end{align}
\end{subequations}
where\vspace{-0.2cm}
\begin{equation}
\Phi_{\varrho,\boldsymbol{\lambda}}(\mathbf x,\mathbf d_T)
=
\operatorname{ISLR}_{3\mathrm D}(\mathbf x,\mathbf d_T)
+\frac{\varrho}{2}
\left\|
\left[
\boldsymbol{\lambda}
-\frac{\boldsymbol{\psi}(\mathbf x,\mathbf d_T)}{\varrho}
\right]_+
\right\|_2^2.
\label{eq:reduced_inner_objective}
\end{equation}
Problem \eqref{prob:reduced_inner_problem} contains only the two variable blocks $\mathbf x$ and $\mathbf d_T$, while the eliminated auxiliary variables can be recovered from \eqref{eq:optimal_auxiliary_variables} for evaluating the PDD residual.

\subsection{AO Inner Loop Update}
\label{subsec:inner_alternating_optimization}
For fixed PDD parameters $(\varrho,\boldsymbol{\lambda})$, we solve problem \eqref{prob:reduced_inner_problem} by alternately updating $\mathbf x$ and $\mathbf d_T$ using projected gradient descent (PGD) with backtracking line search.

\subsubsection{Transmit Waveform Update}
\label{subsubsec:transmit_waveform_update}
For fixed $\mathbf d_T^{(r)}$, the waveform subproblem can be formulated as\vspace{-0.1cm}
\begin{subequations}\label{prob:waveform_update_subproblem}
    \begin{align}        
        \min_{\mathbf x} \quad &  \Phi_{\varrho,\boldsymbol{\lambda}}
\left(\mathbf x,\mathbf d_T^{(r)}\right) \label{P4 original objective function} \\[-0.2cm]
        \mathrm{s.t.} \quad 
        & \eqref{cons: power}.
    \end{align}
\end{subequations}
Since $\mathbf x$ is complex valued whereas $\Phi_{\varrho,\boldsymbol{\lambda}}$ is real valued, we use Wirtinger calculus to derive the waveform gradient. For the convenience of derivation, let $\chi_{g,l,\nu}\triangleq\chi_{\theta_0}(\theta_g,l,\nu)$. From \eqref{eq:quadratic_3d_af}, we obtain\vspace{-0.1cm}
\begin{equation}
\nabla_{\mathbf x^*}|\chi_{g,l,\nu}|^2
=
\chi_{g,l,\nu}^*\mathbf Q_{g,l,\nu}\mathbf x
+\chi_{g,l,\nu}\mathbf Q_{g,l,\nu}^H\mathbf x,
\label{eq:af_sample_waveform_gradient}
\end{equation}
where $\mathbf Q_{g,l,\nu}=\rho_g\widetilde{\mathbf A}_g(\mathbf D_\nu\otimes\mathbf D_l^*)\widetilde{\mathbf A}_0^H $. Applying the quotient rule gives\vspace{-0.1cm}
\begin{equation}
\begin{aligned}
\nabla_{\mathbf x^*}\operatorname{ISLR}_{3\mathrm D}
=&
\frac{1}{\operatorname{ML}_{3\mathrm D}^{2}}
\Bigg[
\operatorname{ML}_{3\mathrm D}
\sum_{(\theta_g,l,\nu)\in\Omega_{\mathrm{SL}}^{(3\mathrm D)}}
\nabla_{\mathbf x^*}|\chi_{g,l,\nu}|^2
\\[-1mm]
&-
\operatorname{ISL}_{3\mathrm D}
\sum_{(\theta_g,l,\nu)\in\Omega_{\mathrm{ML}}^{(3\mathrm D)}}
\nabla_{\mathbf x^*}|\chi_{g,l,\nu}|^2
\Bigg].
\end{aligned}
\label{eq:islr_waveform_gradient}
\end{equation}
Using $\nabla_{\mathbf x_b^*}\psi_{b,k'}=\frac{1}{2}\widetilde{\mathbf h}_{b,k'}(\mathbf d_T)$ and $\nabla_{\mathbf x_b^*}\psi_{\mathrm I}=\mathbf a_0\mathbf a_0^H\mathbf x_b$,
the gradient with respect to the $b$-th RE waveform is given by\vspace{-0.1cm}
\begin{equation}
\begin{aligned}
&\nabla_{\mathbf x_b^*}
\Phi_{\varrho,\boldsymbol{\lambda}}
=
\left[
\nabla_{\mathbf x^*}\operatorname{ISLR}_{3\mathrm D}
\right]_b \\[-1mm]
&-\frac{1}{2}\sum_{k'=1}^{2K}
\left[
\lambda_{b,k'}-\frac{\psi_{b,k'}}{\varrho}
\right]_+
\widetilde{\mathbf h}_{b,k'}(\mathbf d_T)
-
\left[
\lambda_{\mathrm I}-\frac{\psi_{\mathrm I}}{\varrho}
\right]_+
\mathbf a_0\mathbf a_0^H\mathbf x_b,
\end{aligned}
\label{eq:waveform_block_gradient}
\end{equation}
where $[\nabla_{\mathbf x^*}\operatorname{ISLR}_{3\mathrm D}]_b$ denotes the gradient block associated with the $b$-th RE. Using the gradient in \eqref{eq:waveform_block_gradient}, the transmit waveform is updated via PGD as\vspace{-0.1cm}
\begin{equation}
\mathbf x^{(r+1)}=\Pi_{\mathcal X}\left(\mathbf x^{(r)}-\frac{2}{\kappa_x^{(r)}}\nabla_{\mathbf x^*}\Phi_{\varrho,\boldsymbol{\lambda}}\right),
\label{eq:waveform_pgd_update}
\end{equation}
where $\mathcal X\triangleq\{\mathbf x:\|\mathbf x\|_2^2\le P_{\max}\}$, $\Pi_{\mathcal X}(\cdot)$ denotes the Euclidean projection onto $\mathcal X$, and $\kappa_x^{(r)}>0$ is the curvature parameter. The parameter \(\kappa_x^{(r)}\) is selected by backtracking such that the projected point in \eqref{eq:waveform_pgd_update} satisfies\vspace{-0.1cm}
\begin{equation}
\begin{aligned}
&
\Phi_{\varrho,\boldsymbol{\lambda}}
\left(
\mathbf x^{(r+1)},\mathbf d_T^{(r)}
\right) \\[-1mm]
&\le
\Phi_{\varrho,\boldsymbol{\lambda}}
\left(
\mathbf x^{(r)},\mathbf d_T^{(r)}
\right)
+
2\Re
\left\{
\left[
\nabla_{\mathbf x^*}
\Phi_{\varrho,\boldsymbol{\lambda}}
\right]^H 
\left(
\mathbf x^{(r+1)}-\mathbf x^{(r)}
\right)
\right\} \\[-1mm]
&\quad+
\frac{\kappa_x^{(r)}}{2}
\left\|
\mathbf x^{(r+1)}-\mathbf x^{(r)}
\right\|_2^2,
\end{aligned}
\label{eq:waveform_backtracking_condition}
\end{equation}
where $\nabla_{\mathbf x^*}\Phi_{\varrho,\boldsymbol{\lambda}}$ is evaluated at $(\mathbf x^{(r)},\mathbf d_T^{(r)})$. We further require $\operatorname{ML}_{3\mathrm D}(\mathbf x^{(r+1)},\mathbf d_T^{(r)})\ge\epsilon_{\mathrm{ML}}$, where $\epsilon_{\mathrm{ML}}>0$ prevents the 3D ISLR denominator from approaching zero.

\subsubsection{MA Position Update}
\label{subsubsec:ma_position_update}
With the latest waveform $\mathbf x^{(r+1)}$ fixed, the MA position subproblem can be formulated as\vspace{-0.1cm}
\begin{subequations}\label{prob:position_update_subproblem}
    \begin{align}        
        \min_{\mathbf d_T} \quad &  \Phi_{\varrho,\boldsymbol{\lambda}}
\left(\mathbf x^{(r+1)},\mathbf d_T\right) \label{P5 original objective function} \\[-0.2cm]
        \mathrm{s.t.} \quad 
        & \eqref{cons: MA position}.
    \end{align}
\end{subequations}
For brevity, $\mathbf x$ below denotes the fixed $\mathbf x^{(r+1)}$.
Since $d_{T,0}=0$ is fixed, only $p=1,\ldots,N_t-1$ needs to be updated. Moreover, the receive array is fixed, so $\rho_g$ is independent of $\mathbf d_T$.
The derivative of the ambiguity sample with respect to the $p$-th free MA position is given by\vspace{-0.1cm}
\begin{equation}
\begin{aligned}
\frac{\partial\chi_{g,l,\nu}}
{\partial d_{T,p}}
=&
\rho_g\mathbf x^H
\Bigg[
\frac{
\partial\widetilde{\mathbf A}_g(\mathbf d_T)
}{
\partial d_{T,p}
}
\left(
\mathbf D_\nu\otimes\mathbf D_l^*
\right)
\widetilde{\mathbf A}_0^H(\mathbf d_T)
\\[-1mm]
&+
\widetilde{\mathbf A}_g(\mathbf d_T)
\left(
\mathbf D_\nu\otimes\mathbf D_l^*
\right)
\frac{
\partial\widetilde{\mathbf A}_0^H(\mathbf d_T)
}{
\partial d_{T,p}
}
\Bigg]
\mathbf x,
\end{aligned}
\label{eq:af_sample_position_derivative}
\end{equation}
where $\frac{\partial\widetilde{\mathbf A}_g}{\partial d_{T,p}}=\mathbf I_{N_{\rm RE}}\otimes\frac{\partial\mathbf a_g}{\partial d_{T,p}}$, $\frac{\partial\widetilde{\mathbf A}_0}{\partial d_{T,p}}=\mathbf I_{N_{\rm RE}}\otimes\frac{\partial\mathbf a_0}{\partial d_{T,p}}$. The corresponding steering vector derivatives are $\frac{\partial\mathbf a_g}{\partial d_{T,p}}=jk_c\sin\theta_g\,\mathbf E_p\mathbf a_g$, $\frac{\partial\mathbf a_0}{\partial d_{T,p}}=jk_c\sin\theta_0\,\mathbf E_p\mathbf a_0$, where $\mathbf E_p\in\mathbb R^{N_t\times N_t}$ is diagonal with a one at its $(p+1,p+1)$-th entry and zeros elsewhere.

Based on \eqref{eq:af_sample_position_derivative}, the derivative of the corresponding ambiguity power sample is\vspace{-0.1cm}
\begin{equation}
\frac{\partial|\chi_{g,l,\nu}|^2}{\partial d_{T,p}}
=
2\Re\!\left\{
\chi_{g,l,\nu}^*
\frac{\partial\chi_{g,l,\nu}}{\partial d_{T,p}}
\right\}.
\label{eq:af_power_position_derivative}
\end{equation}
The derivative of $\operatorname{ISLR}_{3\mathrm D}$ with respect to $d_{T,p}$ is then obtained using the same quotient rule as in \eqref{eq:islr_waveform_gradient}. To construct the complete MA position gradient of $\Phi_{\varrho,\boldsymbol{\lambda}}$, we further derive the position derivatives of the CI and target illumination margins. For the $k'$-th CI margin on the $b$-th RE, we have\vspace{-0.1cm}
\begin{equation}
\begin{aligned}
\frac{
\partial\psi_{b,k'}
}{
\partial d_{T,p}
}
=&
\Re\Bigg\{
-jk_c\xi_{b,k'}
\sum_{\ell=1}^{L_k}
\beta_{k,\ell}^*
e^{j2\pi n_b\Delta f\tau_{k,\ell}}
\sin\varphi_{k,\ell}
\\[-1mm]
&\times
\mathbf a_T^H
(\varphi_{k,\ell};\mathbf d_T)
\mathbf E_p
\mathbf x_b
\Bigg\},
\end{aligned}
\label{eq:ci_margin_position_derivative}
\end{equation}
with $\xi_{b,k'}$ given by\vspace{-0.1cm}
\begin{equation}
\xi_{b,k'}
=
\begin{cases}
s_{n_b,m_b,k}^*(\sin\phi_M+j\cos\phi_M),
& k'=2k-1,
\\
s_{n_b,m_b,k}^*(\sin\phi_M-j\cos\phi_M),
& k'=2k,
\end{cases}
\label{eq:ci_phase_factor}
\end{equation}
where $k=\lceil k'/2\rceil$. Similarly, the derivative of the target illumination margin is expressed as\vspace{-0.1cm}
\begin{equation}
\frac{\partial\psi_{\mathrm I}}{\partial d_{T,p}}
=
2\Re\left\{\sum_{b=0}^{N_{\rm RE}-1}\left(\mathbf a_0^H\mathbf x_b\right)^*\left(\frac{\partial\mathbf a_0^H}{\partial d_{T,p}}\mathbf x_b\right)\right\}.
\label{eq:illumination_margin_position_derivative}
\end{equation}

Combining the above derivatives gives\vspace{-0.1cm}
\begin{equation}
\begin{aligned}
\frac{\partial\Phi_{\varrho,\boldsymbol{\lambda}}}
{\partial d_{T,p}}
=&
\frac{\partial\operatorname{ISLR}_{3\mathrm D}}
{\partial d_{T,p}}
-
\sum_{b=0}^{N_{\mathrm{RE}}-1}\sum_{k'=1}^{2K}
\left[
\lambda_{b,k'}-\frac{\psi_{b,k'}}{\varrho}
\right]_+
\frac{\partial\psi_{b,k'}}{\partial d_{T,p}}
\\[-1mm]
&-
\left[
\lambda_{\mathrm I}-\frac{\psi_{\mathrm I}}{\varrho}
\right]_+
\frac{\partial\psi_{\mathrm I}}{\partial d_{T,p}}.
\end{aligned}
\label{eq:position_block_gradient}
\end{equation}

After collecting the derivatives of all free position variables into $\nabla_{\mathbf d_T}\Phi_{\varrho,\boldsymbol{\lambda}}$, the MA positions are updated via PGD as\vspace{-0.1cm}
\begin{equation}
\begin{aligned}
\mathbf d_T^{(r+1)}
=
\Pi_{\mathcal D_T}\!\left(
\mathbf d_T^{(r)}
-\frac{1}{\kappa_d^{(r)}}
\nabla_{\mathbf d_T}
\Phi_{\varrho,\boldsymbol{\lambda}}
\right),
\end{aligned}
\label{eq:position_pgd_update}
\end{equation}
where $\Pi_{\mathcal D_T}(\cdot)$ denotes the Euclidean projection onto $\mathcal D_T$.
%Since $\mathcal D_T$ contains only the array-boundary, reference-position, and minimum-spacing constraints, this projection is a bounded isotonic regression problem. 
With $y_p=d_{T,p}-pd_{\min}$, this projection reduces to a bounded isotonic regression problem and can be computed exactly in linear time using the pool adjacent violators algorithm.

As in the waveform update, $\kappa_d^{(r)}>0$ is selected by backtracking such that the projected point in \eqref{eq:position_pgd_update} satisfies\vspace{-0.1cm}
\begin{equation}
\begin{aligned}
&\Phi_{\varrho,\boldsymbol{\lambda}}
\left(\mathbf x^{(r+1)},\mathbf d_T^{(r+1)}\right) \\[-1mm]
&\le
\Phi_{\varrho,\boldsymbol{\lambda}}
\left(\mathbf x^{(r+1)},\mathbf d_T^{(r)}\right)
+
\left[
\nabla_{\mathbf d_T}
\Phi_{\varrho,\boldsymbol{\lambda}}
\right]^T
\left(\mathbf d_T^{(r+1)}-\mathbf d_T^{(r)}\right)
\\[-1mm]
&\quad+
\frac{\kappa_d^{(r)}}{2}
\left\|
\mathbf d_T^{(r+1)}-\mathbf d_T^{(r)}
\right\|_2^2,
\end{aligned}
\label{eq:position_backtracking_condition}
\end{equation}
while also requiring
$\operatorname{ML}_{3\mathrm D}(\mathbf x^{(r+1)},\mathbf d_T^{(r+1)})\ge\epsilon_{\mathrm{ML}}$.

\begin{remark}
\label{remark:pdd_position_update}
PDD is especially useful for the MA position update. If the position dependent CI and target illumination constraints are enforced explicitly at each AO iteration, fixing the waveform can substantially restrict the admissible update region of $\mathbf d_T$. By incorporating these coupled constraints into the AL objective, the position update only needs to explicitly enforce $\mathbf d_T\in\mathcal D_T$, while the original CI and illumination constraints are progressively enforced through the PDD outer updates. This provides greater freedom for adjusting the MA positions.
\end{remark}

\subsection{PDD Outer Loop Update}
\label{subsec:pdd_outer_loop_update}
Let $(\widehat{\mathbf x}^{(t)},\widehat{\mathbf d}_T^{(t)})$ denote the solution returned by the inner AO procedure at the $t$-th PDD outer iteration. Using the closed form auxiliary variable solution in \eqref{eq:optimal_auxiliary_variables}, the equality residual is directly evaluated as\vspace{-0.1cm}
\begin{equation}
\varepsilon_{\mathrm{eq}}^{(t)}\triangleq\left\|\left[\boldsymbol{\psi}\left(\widehat{\mathbf x}^{(t)},\widehat{\mathbf d}_T^{(t)}\right)-\varrho^{(t)}\boldsymbol{\lambda}^{(t)}\right]_+-\boldsymbol{\psi}\left(\widehat{\mathbf x}^{(t)},\widehat{\mathbf d}_T^{(t)}\right)\right\|_\infty.
\label{eq:pdd_equality_residual}
\end{equation}

Furthermore, let $\eta^{(t)}>0$ denote the residual threshold used to determine the PDD outer update. If $\varepsilon_{\mathrm{eq}}^{(t)}\le\eta^{(t)}$, the multiplier is updated as $\boldsymbol{\lambda}^{(t+1)}=\left[\boldsymbol{\lambda}^{(t)}-\frac{\boldsymbol{\psi}(\widehat{\mathbf x}^{(t)},\widehat{\mathbf d}_T^{(t)})}{\varrho^{(t)}}\right]_+$, while $\varrho^{(t+1)}=\varrho^{(t)}$, and the residual threshold is tightened as $\eta^{(t+1)}=c_\eta\eta^{(t)}$, where $0<c_\eta<1$. Otherwise, we set $\boldsymbol{\lambda}^{(t+1)}=\boldsymbol{\lambda}^{(t)}$, $\eta^{(t+1)}=\eta^{(t)}$, and $\varrho^{(t+1)}=c_\varrho\varrho^{(t)}$, where $0<c_\varrho<1$. Since the coefficient of the quadratic penalty term in the AL is $1/(2\varrho)$, decreasing $\varrho^{(t)}$ strengthens the penalty on equality constraint violations. Accordingly, the PDD outer loop switches between dual updates and penalty strengthening based on the current residual, progressively enforcing the original CI and target illumination constraints.

\subsection{Computational Complexity Analysis}
\label{subsec:computational_complexity}
\begin{algorithm}[t]
\caption{Proposed PDD-AO Algorithm}
\label{alg:pdd_ao}
\begin{algorithmic}[1]
\STATE Initialize $\mathbf x^{(0)}\in\mathcal X$, $\mathbf d_T^{(0)}\in\mathcal D_T$ with $\operatorname{ML}_{3\mathrm D}\ge\epsilon_{\mathrm{ML}}$, $\boldsymbol{\lambda}^{(0)}=\mathbf 0$, $\varrho^{(0)}>0$, $\eta^{(0)}>0$, $0<c_\varrho,c_\eta<1$, and $t\gets0$.
\REPEAT
    \REPEAT
        \STATE Update $\mathbf x$ by
        \eqref{eq:waveform_pgd_update}--\eqref{eq:waveform_backtracking_condition}.
        \STATE Update $\mathbf d_T$ by
        \eqref{eq:position_pgd_update}--\eqref{eq:position_backtracking_condition}.
    \UNTIL{the inner stopping criterion is satisfied}
    \STATE
    $(\widehat{\mathbf x}^{(t)},\widehat{\mathbf d}_T^{(t)})
    \gets(\mathbf x,\mathbf d_T)$,
    and compute
    $\varepsilon_{\mathrm{eq}}^{(t)}$
    by \eqref{eq:pdd_equality_residual}.
    \IF{$\varepsilon_{\mathrm{eq}}^{(t)}\le\eta^{(t)}$}
        \STATE
        $\boldsymbol{\lambda}^{(t+1)}
        \gets
        \left[
        \boldsymbol{\lambda}^{(t)}
        -
        \dfrac{
        \boldsymbol{\psi}
        (\widehat{\mathbf x}^{(t)},\widehat{\mathbf d}_T^{(t)})
        }{
        \varrho^{(t)}
        }
        \right]_+$,
        $\varrho^{(t+1)}\gets\varrho^{(t)}$,
        $\eta^{(t+1)}\gets c_\eta\eta^{(t)}$.
    \ELSE
        \STATE
        $\boldsymbol{\lambda}^{(t+1)}
        \gets\boldsymbol{\lambda}^{(t)}$,
        $\varrho^{(t+1)}
        \gets c_\varrho\varrho^{(t)}$,
        $\eta^{(t+1)}\gets\eta^{(t)}$.
    \ENDIF
    \STATE $t\gets t+1$.
\UNTIL{the outer stopping criterion is satisfied}
\STATE \textbf{Output:}
$\mathbf x^\star,\mathbf d_T^\star$.
\end{algorithmic}
\end{algorithm}
Algorithm~\ref{alg:pdd_ao} summarizes the proposed PDD-AO algorithm. The 2D Fourier structure over the RE grid enables batch evaluation of the 3D ambiguity samples and their gradients using fast Fourier transforms (FFTs) and inverse transforms. Accordingly, one waveform update requires $\mathcal O\left(N_\theta N_{\mathrm{RE}}[N_t+\log N_{\mathrm{RE}}]\right)$ operations for the 3D ISLR and its gradient and $\mathcal O(KN_tN_{\mathrm{RE}})$ operations for the CI terms. The MA position update reuses the same Fourier structure. With $L_h\triangleq\max_k L_k$, evaluating the position derivatives of the communication channels additionally requires $\mathcal O(KN_cN_tL_h)$ operations. Therefore, the dominant complexity of one inner AO iteration is $\mathcal O(N_\theta N_{\mathrm{RE}}\left[N_t+\log N_{\mathrm{RE}}\right]+KN_tN_{\mathrm{RE}}+KN_cN_tL_h)$. Letting $I_{\mathrm{out}}$ and $I_{\mathrm{in}}$ denote the number of PDD outer iterations and the average number of inner AO iterations, respectively, and treating the average number of backtracking steps as a bounded constant, the overall complexity is $\mathcal O\{I_{\mathrm{out}}I_{\mathrm{in}}[N_\theta N_{\mathrm{RE}}\left(N_t+\log N_{\mathrm{RE}}\right)+KN_tN_{\mathrm{RE}}+KN_cN_tL_h]\}$.

\subsection{Radar-Only Case}
\label{subsec:radar_only_case}
To isolate the effect of the communication CI constraints on the optimal transmit subspace, we consider the radar-only counterpart of problem \eqref{prob:joint_design}:\vspace{-0.1cm}
\begin{subequations}\label{prob:radar_only}
    \begin{align}        
        \min_{\mathbf X_{\mathrm{RE}},\mathbf d_T} \quad &  \operatorname{ISLR}_{3\mathrm D}(\mathbf X_{\mathrm{RE}},\mathbf d_T) \label{P6 original objective function} \\[-0.2cm]
        \label{cons: X_RE power}
        \mathrm{s.t.} \quad 
        & \|\mathbf X_{\mathrm{RE}}\|_F^2\le P_{\max}, \\
        & \eqref{cons: illumination}, \eqref{cons: MA position}.
    \end{align}
\end{subequations}

Recall from Theorem~\ref{thm:rank_one_compression} that, for fixed $\mathbf d_T$, any waveform with nonzero target illumination admits a rank one compression that preserves the complete angular response at $(l,\nu)=(0,0)$, nulls all nonzero range-Doppler cells, and does not increase the transmit energy. Since the 3D mainlobe of problem \eqref{prob:radar_only} is confined to $(l,\nu)=(0,0)$, applying this compression to a globally optimal solution preserves feasibility and cannot increase the 3D ISLR. This yields the following corollary.

\begin{corollary}[Rank One Globally Optimal Structure of the Radar-Only Problem]
\label{cor:radar_only_rank_one}
Suppose that problem \eqref{prob:radar_only} is feasible. Then, there exists a globally optimal solution $\mathbf X_{\mathrm{RE}}^\star$ satisfying $\operatorname{rank}(\mathbf X_{\mathrm{RE}}^\star)=1$. Moreover, a globally optimal solution can be chosen with equal transmit energy across all REs.
\end{corollary}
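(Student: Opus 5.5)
Starting from any globally optimal solution, the plan is to apply the rank one compression of Theorem~\ref{thm:rank_one_compression} and check that it stays feasible and does not increase the objective. First, existence. The feasible set of problem \eqref{prob:radar_only} is compact: $\mathcal D_T$ is closed and bounded, the power ball is compact, and the illumination constraint is closed. On this set the illumination constraint forces $\mathbf a_0^H\mathbf R_\Sigma\mathbf a_0=P_{\mathrm I}\ge P_{\mathrm{ill}}>0$, and $\theta_0\in\mathcal M_\theta$ gives $\operatorname{ML}_{3\mathrm D}\ge|\chi_{\theta_0}(\theta_0,0,0)|^2=|\rho_R(\theta_0,\theta_0)|^2P_{\mathrm I}^2>0$. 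The ISLR is therefore continuous on a compact set, so a global minimizer $(\mathbf X_{\mathrm{RE}}^\circ,\mathbf d_T^\circ)$ exists. If one prefers not to assume $P_{\mathrm{ill}}>0$, the same argument can be run on any minimizing sequence, since the compressed value bounds the objective from below.

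Second, fix $\mathbf d_T=\mathbf d_T^\circ$ and build $\mathbf X_{\mathrm{RE}}^{(1)}$ via \eqref{eq:rank_one_waveform_construction} with all $\phi_b=0$. The feasibility checks are as follows.
\begin{itemize}
\item The MA constraint is untouched, since $\mathbf d_T$ is unchanged.
\item The power constraint holds because Theorem~\ref{thm:rank_one_compression} guarantees the energy does not increase. Directly, $\|\mathbf X^{(1)}\|_F^2=\mathbf a_0^H\mathbf R_\Sigma^2\mathbf a_0/(\mathbf a_0^H\mathbf R_\Sigma\mathbf a_0)\le\lambda_{\max}(\mathbf R_\Sigma)\le\operatorname{tr}\mathbf R_\Sigma$.
\item The illumination constraint holds because $P_{\mathrm I}=\sum_b|\mathbf a_0^H\mathbf x_b|^2=\chi_{\theta_0}(\theta_0,0,0)/\rho_R(\theta_0,\theta_0)$ is exactly an angular sample at $(0,0)$, which the theorem preserves. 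Explicitly, $\sum_b|\mathbf a_0^H\mathbf x_b^{(1)}|^2=\mathbf a_0^H\mathbf R_\Sigma\mathbf a_0$.
\end{itemize}

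Third, compare the objectives. The mainlobe region $\Omega_{\mathrm{ML}}^{(3\mathrm D)}=\mathcal M_\theta\times\{(0,0)\}$ lies entirely in the zero range-Doppler cell, where the angular response is preserved, so $\operatorname{ML}_{3\mathrm D}$ is unchanged. In the sidelobe region, the terms on $\mathcal S_\theta\times\{(0,0)\}$ are unchanged, and every term with $(l,\nu)\neq(0,0)$ becomes zero. Hence $\operatorname{ISL}_{3\mathrm D}$ does not increase, and neither does the ISLR. The compressed waveform is therefore globally optimal, and it has rank one since its column is the nonzero vector $\mathbf R_\Sigma\mathbf a_0$ (nonzero because $\mathbf a_0^H\mathbf R_\Sigma\mathbf a_0>0$). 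It also has equal energy $\|\mathbf R_\Sigma\mathbf a_0\|^2/(N_{\mathrm{RE}}\mathbf a_0^H\mathbf R_\Sigma\mathbf a_0)$ on every RE, which gives the second claim.

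The main obstacle is the existence and positivity step: making sure a global optimum is attained and that $\mathbf a_0^H\mathbf R_\Sigma\mathbf a_0>0$ holds at it, which Theorem~\ref{thm:rank_one_compression} requires. I will handle this through $P_{\mathrm{ill}}>0$ and compactness as above. If existence were delicate, I would instead argue that the infimum over rank one, equal-energy waveforms equals the overall infimum and is attained on a compact reduced set $\{(\mathbf w,\mathbf d_T)\}$.
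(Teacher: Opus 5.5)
Your proposal is correct and follows essentially the same route as the paper. You fix $\mathbf d_T$ at a globally optimal point and apply the rank one compression of Theorem~\ref{thm:rank_one_compression}; you then check that the power, illumination and position constraints survive, and that $\operatorname{ML}_{3\mathrm D}$ and the zero range-Doppler sidelobes are unchanged while all nonzero range-Doppler samples vanish, so the compressed rank one, equal-RE-energy waveform is also optimal.

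Your explicit compactness argument for attainment, using $P_{\mathrm{ill}}>0$ and $\theta_0\in\mathcal M_\theta$ to keep $\operatorname{ML}_{3\mathrm D}$ away from zero, is a welcome addition that the paper's sketch takes for granted. Your fallback remark for $P_{\mathrm{ill}}\le 0$, however, would not by itself secure attainment: the infimum may then be approached only as $\mathbf a_0^H\mathbf w\to 0$, where the ISLR degenerates to $0/0$.
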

\begin{proof}
See \textbf{App.} E in the supplemental document \cite{mao2026detailed}.
%Please refer to Appendix~\hyperref[app:appendix_e]{E}.
\end{proof}

Because the 3D ISLR is invariant to an overall waveform scaling, the rank one solution in Corollary~\ref{cor:radar_only_rank_one} can be scaled to use the full transmit power. Together with its equal RE energy structure and the phase invariance of both the ambiguity response and target illumination, it can be written without loss of optimality as\vspace{-0.1cm}
\begin{equation}
\mathbf x_b
=
\sqrt{\frac{P_{\max}}{N_{\mathrm{RE}}}}\,\mathbf w,
\quad
b=0,\ldots,N_{\mathrm{RE}}-1,
\label{eq:radar_only_rank_one_waveform}
\end{equation}
where $\|\mathbf w\|_2=1$. Substituting \eqref{eq:radar_only_rank_one_waveform} into \eqref{eq:discrete_periodic_3d_af} yields\vspace{-0.1cm}
\begin{equation}
\chi_{\theta_0}(\theta_g,l,\nu)
=
\begin{cases}
P_{\max}\rho_g
(\mathbf a_0^H\mathbf w)
(\mathbf w^H\mathbf a_g),
& (l,\nu)=(0,0),
\\
0,
& (l,\nu)\neq(0,0).
\end{cases}
\label{eq:radar_only_ambiguity_response}
\end{equation}
Thus, the equal energy rank one waveform eliminates all ambiguity samples at nonzero range-Doppler cells over the complete periodic grid. Define $\mathbf B_{\mathrm{SL}}(\mathbf d_T)\triangleq\sum_{\theta_g\in\mathcal S_\theta}|\rho_g|^2\mathbf a_g\mathbf a_g^H$ and $\mathbf B_{\mathrm{ML}}(\mathbf d_T)\triangleq\sum_{\theta_g\in\mathcal M_\theta}|\rho_g|^2\mathbf a_g\mathbf a_g^H$.

Substituting \eqref{eq:radar_only_rank_one_waveform} and \eqref{eq:radar_only_ambiguity_response} into the 3D ISLR and canceling the common nonzero factor $P_{\max}^2|\mathbf a_0^H\mathbf w|^2$ yields the following lossless low dimensional formulation:\vspace{-0.1cm}
\begin{subequations}\label{prob:reduced_radar_only}
    \begin{align}        
        \min_{\mathbf w,\mathbf d_T} \quad & \frac{\mathbf w^H\mathbf B_{\mathrm{SL}}(\mathbf d_T)\mathbf w}{\mathbf w^H\mathbf B_{\mathrm{ML}}(\mathbf d_T)\mathbf w} \label{P7 original objective function} \\
        \mathrm{s.t.} \quad 
        & P_{\max}|\mathbf a_0^H\mathbf w|^2\ge P_{\mathrm{ill}},  \|\mathbf w\|_2=1, \eqref{cons: MA position}.
    \end{align}
\end{subequations}
Problem \eqref{prob:reduced_radar_only} reduces the waveform variable from $N_tN_{\mathrm{RE}}$ complex dimensions to $\mathbf w$ without changing the globally optimal value of \eqref{prob:radar_only}. It can be solved by specializing the preceding PDD-AO framework, where all CI terms are removed, only the target illumination PDD term is retained, and $\mathbf w$ and $\mathbf d_T$ are alternately updated. Although the rank one reduction is globally lossless, the resulting nonconvex PDD-AO solver converges only to a corresponding first order stationary point.

\begin{remark}
\label{remark:radar_only_rank_one_limitation}
The rank one optimal structure is specific to the radar-only formulation. In the complete ISAC problem, the CI constraints on individual REs depend jointly on the instantaneous communication symbols and position dependent channels and generally cannot be satisfied within a common one dimensional transmit subspace. The communication constraints can therefore require a higher dimensional transmit subspace, allowing the range-Doppler response to vary with the candidate angle as characterized in Section~\ref{section:3D_AF_analysis}.
\end{remark}

\section{Simulation Results and Analysis}\label{section:simulation}
In this section, we evaluate the proposed MA-enhanced MIMO-OFDM ISAC system through MATLAB simulations. Unless otherwise specified, the main simulation parameters are summarized in Table~\ref{tab:simulation_parameters}. The communication links follow the position dependent three path frequency selective geometric channel in Section~\ref{subsec:comm_model}, with relative path powers $[0,-4,-8]$~dB and a path loss exponent of $2.6$. By default, $K=2$ communication users are considered, whose distances from the BS are $45$~m and $75$~m, respectively. The reference target angle is located at $\theta_0=10^\circ$, while the candidate angles are sampled over $[-90^\circ,90^\circ]$ with a spacing of $1^\circ$. The angular mainlobe sample set is $\mathcal M_\theta=\Theta\cap[\theta_0-3^\circ,\theta_0+3^\circ]$, and the corresponding 3D mainlobe region is $\Omega_{\mathrm{ML}}^{(3\mathrm D)}=\mathcal M_\theta\times\{(0,0)\}$. The normalized transmit power budget and illumination threshold are $\bar P_{\mathrm T}=10$ and $\bar P_{\mathrm{ill}}=8$, respectively, corresponding to $P_{\max}=N_s\bar P_{\mathrm T}$ and $P_{\mathrm{ill}}=N_s\bar P_{\mathrm{ill}}$.

\begin{table}[t]
\caption{Simulation Parameters}
\label{tab:simulation_parameters}
\centering
\footnotesize
\renewcommand{\arraystretch}{1.05}
\begin{tabularx}{\columnwidth}{|c|X|c|}
\hline
Symbol & Parameter & Value \\
\hline
$f_c$ & Carrier frequency & $24$~GHz \\
\hline
$B$ & Signal bandwidth & $120$~MHz \\
\hline
$N_t,N_r$ & No. of transmit/receive antennas & $6$ \\
\hline
$D_T$ & MA movement-region length & $20\lambda$ \\
\hline
$d_{\min}$ & Minimum transmit-MA spacing & $\lambda/2$ \\
\hline
$d_R$ & Receive-antenna spacing & $\lambda/2$ \\
\hline
$N_c$ & No. of subcarriers & $32$ \\
\hline
$N_s$ & No. of OFDM symbols & $16$ \\
\hline
$M$ & PSK modulation order & $4$ \\
\hline
$K$ & No. of communication users & $2$ \\
\hline
$\Gamma$ & Communication QoS threshold & $6$~dB \\
\hline
$\sigma_c^2,\sigma_r^2$ & Communication/radar noise power & $-70$~dBm \\
\hline
\end{tabularx}
\end{table}

To validate the performance gains provided by the proposed joint MA position optimization and SLP waveform design, we consider the following benchmark schemes.
\begin{itemize}
\item \textbf{FPA baselines:} 1) \textbf{FPA-dense} uses a compact ULA with $\lambda/2$ spacing, 2) \textbf{FPA-random} uses a fixed nonuniform array satisfying the minimum spacing constraint within the available region, and 3) \textbf{FPA-sparse} uniformly distributes the $N_t$ antennas over the entire region of length $D_T$. All three schemes optimize the SLP waveform with fixed antenna positions.

\item \textbf{Comm-only:} The sensing objective and illumination constraint are removed, and the minimum normalized CI margin over all users and REs is maximized:\vspace{-0.1cm}
\begin{subequations}\label{prob:communication_only}
    \begin{align}        
        \max_{\mathbf x,\mathbf d_T,t} \quad & t \\[-0.2cm]
        \mathrm{s.t.} \quad 
        & \Re\!\left\{\widetilde{\mathbf h}_{b,k'}^H(\mathbf d_T)\mathbf x_b\right\}\ge t\gamma_{k'}, \nonumber \\[-1mm]
        &\quad b=0,\ldots,N_{\mathrm{RE}}-1, \quad k'=1,\ldots,2K, \\
        & t\ge0, \eqref{cons: power}, \eqref{cons: MA position},
    \end{align}
\end{subequations}
where $t$ is the minimum normalized CI margin. Problem \eqref{prob:communication_only} can still be solved using the proposed PDD-AO framework by removing the sensing related terms and jointly updating the transmit waveform, MA positions, and margin variable $t$.

\item \textbf{Radar-only:} The reduced formulation in Section~\ref{subsec:radar_only_case} is adopted by jointly optimizing $\mathbf w$ and $\mathbf d_T$ without communication constraints.

\item \textbf{ZF-OFDM:} Linear ZF precoding is applied on each subcarrier. The multiuser communication channel matrix on the $n$-th subcarrier is given by
$\mathbf H_n(\mathbf d_T)
=
[\mathbf h_{n,1}(\mathbf d_T),\ldots,
\mathbf h_{n,K}(\mathbf d_T)]^H
\in\mathbb C^{K\times N_t}$.
For given MA positions, the ZF precoding matrix on the $n$-th subcarrier is given by\vspace{-0.1cm}
\begin{equation}
\mathbf V_n(\mathbf d_T)
=
\mathbf H_n^H(\mathbf d_T)
\left[
\mathbf H_n(\mathbf d_T)
\mathbf H_n^H(\mathbf d_T)
\right]^{-1},
\label{eq:zf_precoder}
\end{equation}
which satisfies $\mathbf H_n(\mathbf d_T)\mathbf V_n(\mathbf d_T)=\mathbf I_K$ and thereby eliminates multiuser spatial interference on the same subcarrier. All OFDM symbols on the $n$-th subcarrier share the same precoding matrix. Therefore, the ZF transmit waveform on the $(n,m)$-th RE is expressed as\vspace{-0.1cm}
\begin{equation}
\mathbf x_{n,m}^{\mathrm{ZF}}
=
\sqrt{
\frac{P_{\max}}{
\displaystyle
\sum_{m'=0}^{N_s-1}
\sum_{n'=0}^{N_c-1}
\left\|
\mathbf V_{n'}(\mathbf d_T)\mathbf s_{n',m'}
\right\|_2^2
}}
\mathbf V_n(\mathbf d_T)\mathbf s_{n,m}.
\label{eq:zf_transmit_waveform}
\end{equation}
\end{itemize}

\begin{figure}[t]
\centering
\includegraphics[width=0.9\linewidth]{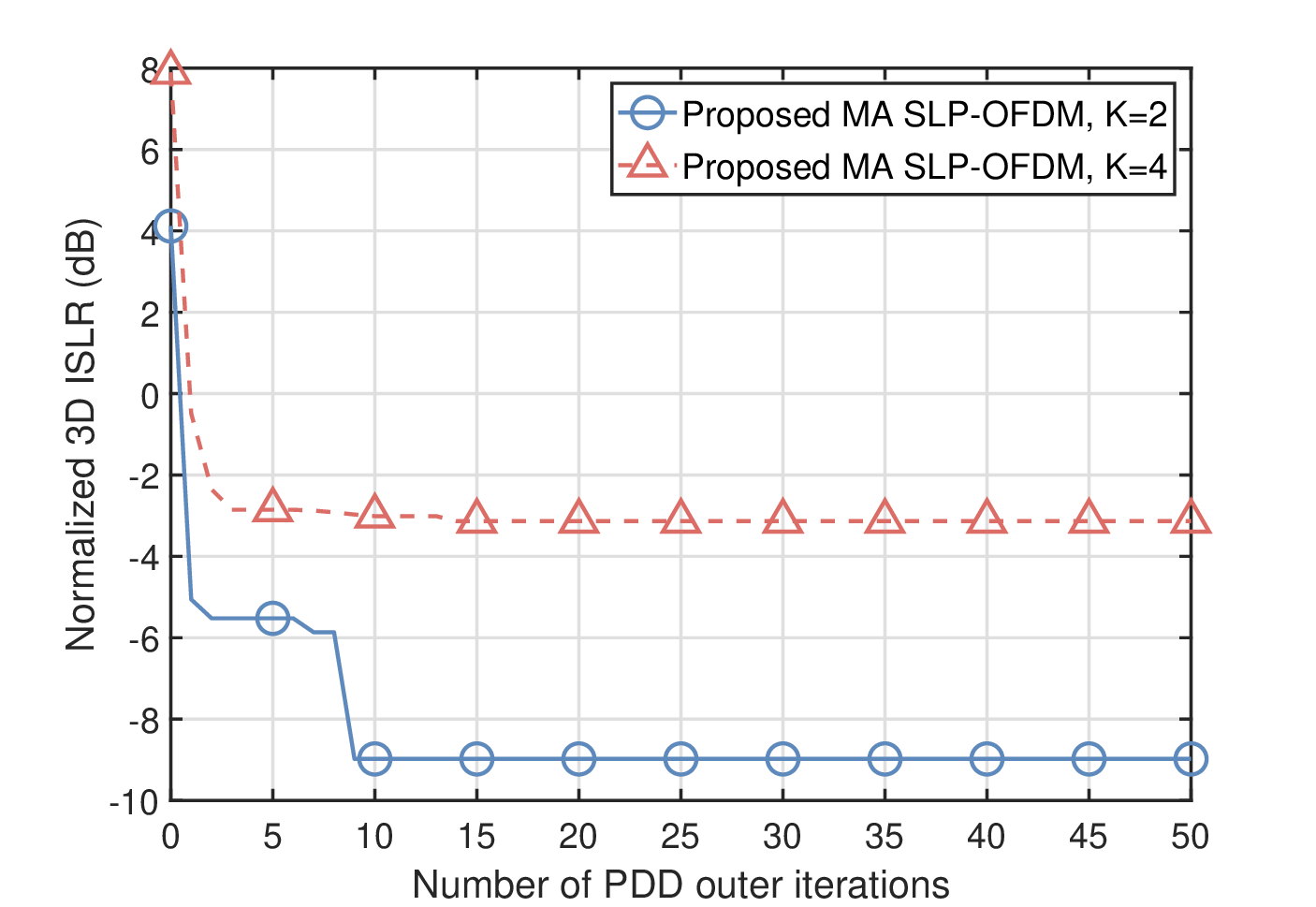}
\caption{Convergence behavior of the proposed algorithm.}
\label{fig:convergence}
\end{figure}

Fig.~\ref{fig:convergence} illustrates the convergence behavior of the proposed PDD-AO algorithm for different numbers of users. The normalized 3D ISLR decreases rapidly during the first few outer iterations and then converges to a stable value, indicating reliable numerical convergence of the proposed algorithm. Moreover, when the number of users increases from $2$ to $4$, the converged 3D ISLR increases by approximately $5.84$~dB and the convergence becomes slower. This is mainly because the additional CI constraints reduce the feasible design space of the joint optimization problem.

\begin{figure}[t]
\centering
\subfigbottomskip=-1pt
\subfigcapskip=-5pt
\subfigure[Radar-only]
{\label{subfig:ar_radar_only}\includegraphics[width=0.32\linewidth]{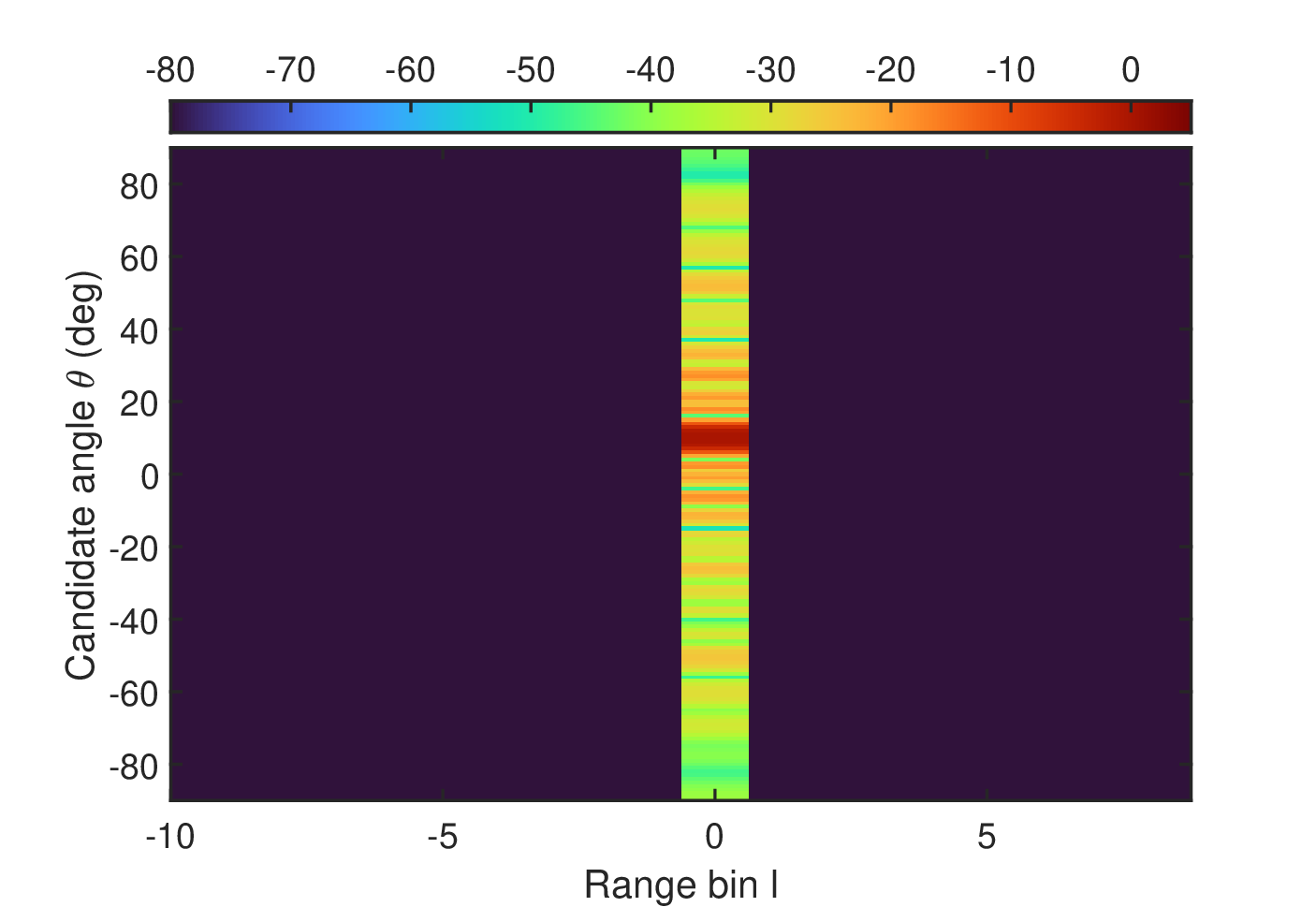}}
\subfigure[MA SLP-OFDM]
{\label{subfig:ar_proposed}\includegraphics[width=0.32\linewidth]{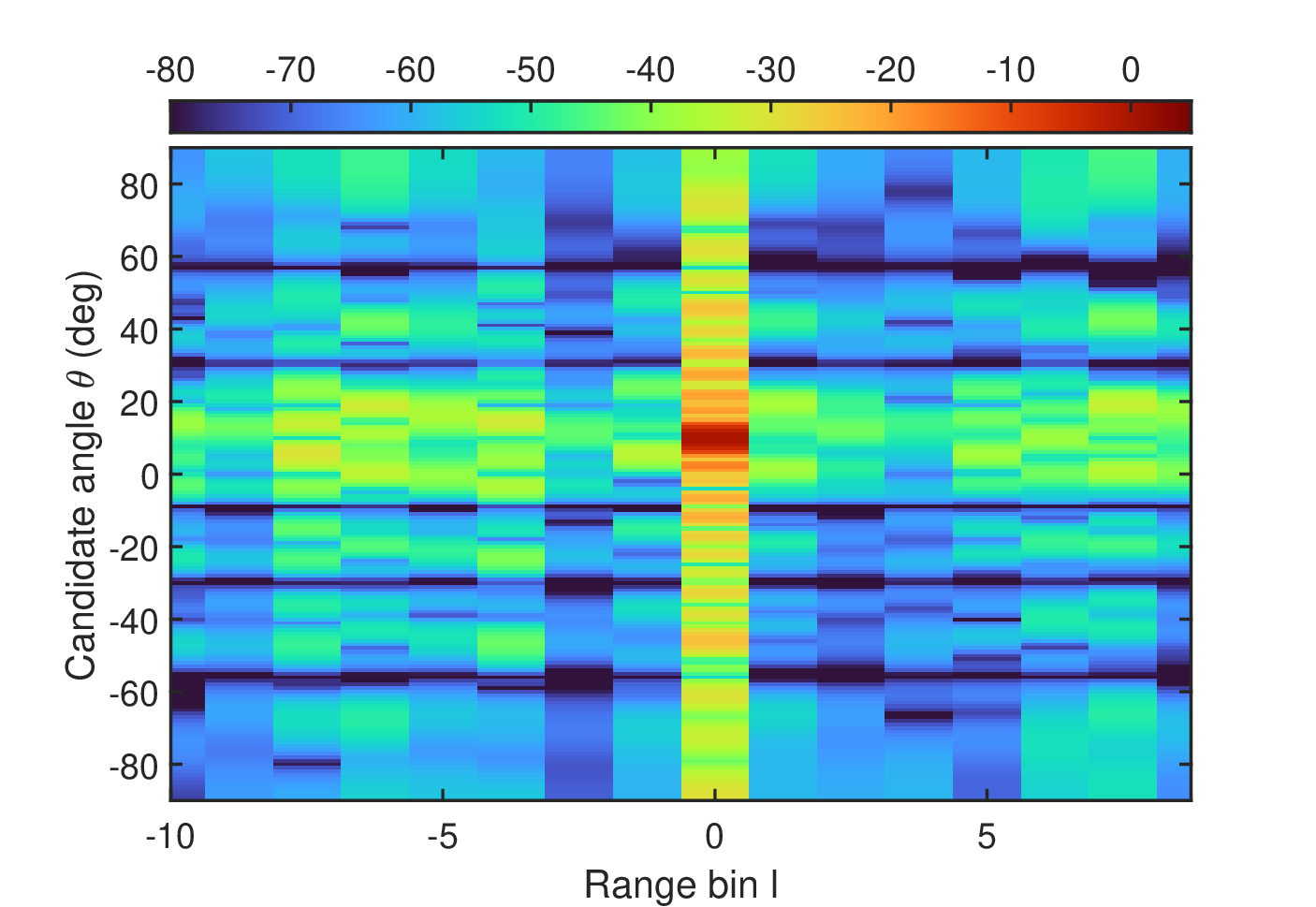}}
\subfigure[FPA-dense]
{\label{subfig:ar_fpa_dense}\includegraphics[width=0.32\linewidth]{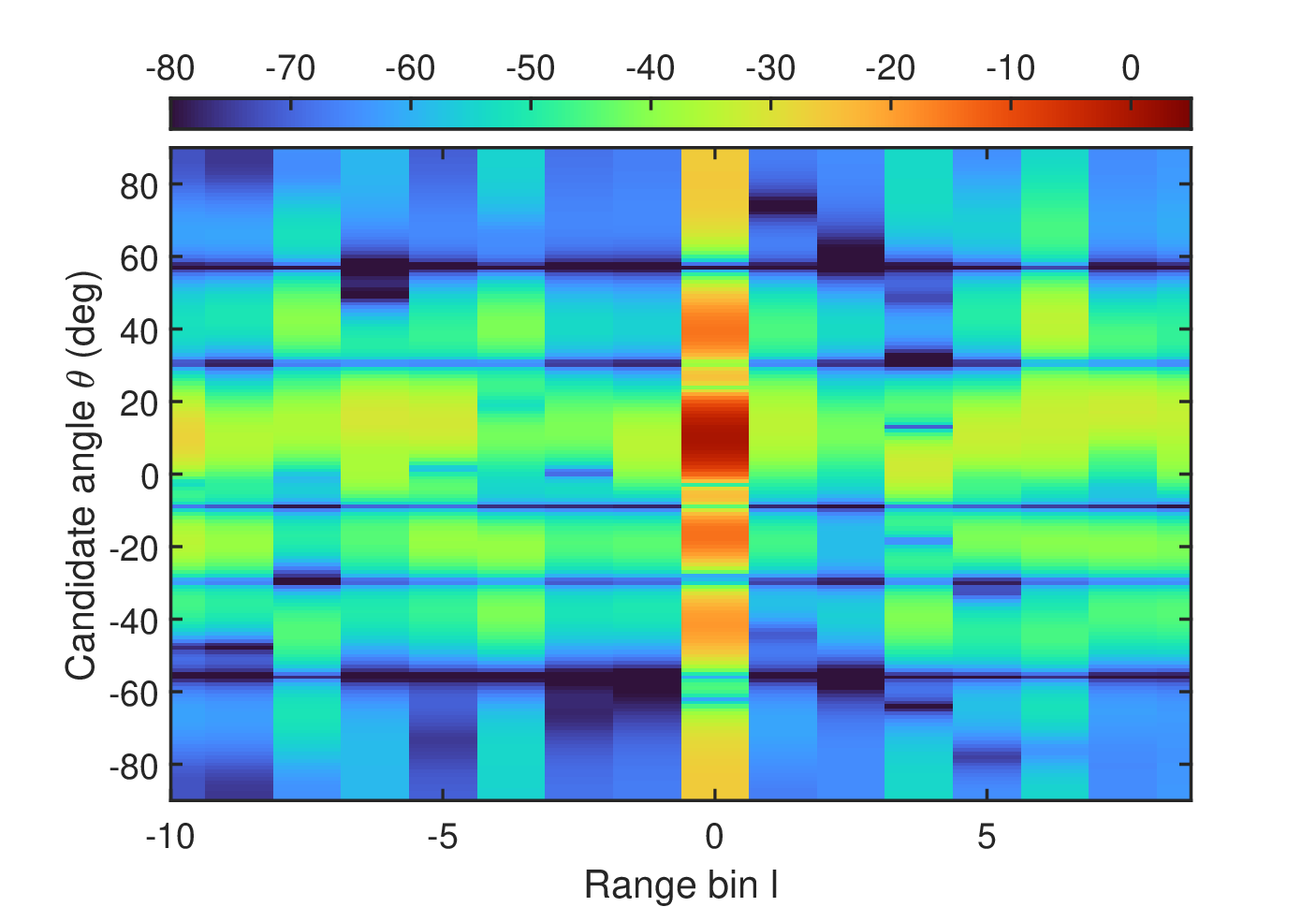}}
\\
\subfigure[FPA-random]
{\label{subfig:ar_fpa_random}\includegraphics[width=0.32\linewidth]{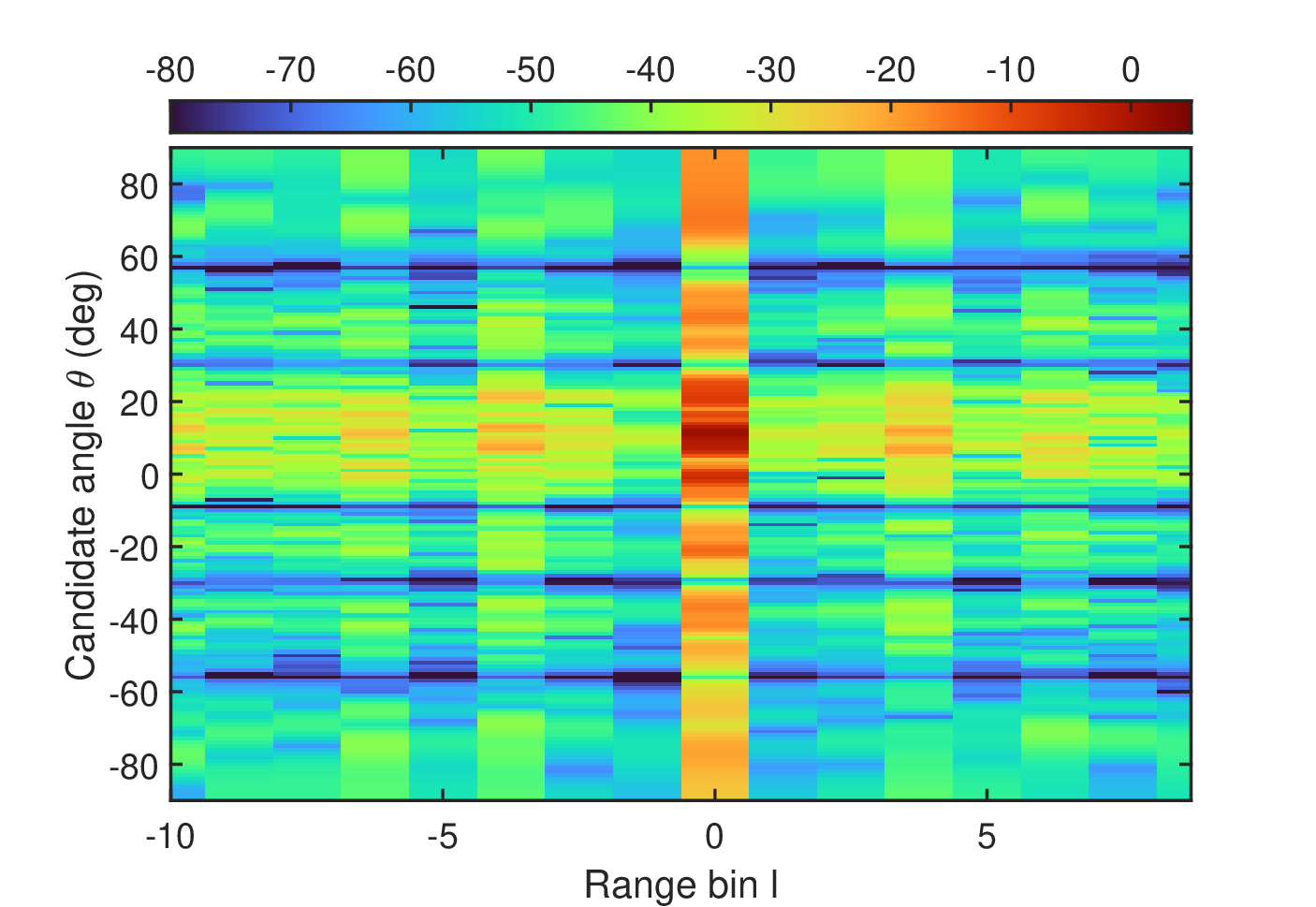}}
\subfigure[FPA-sparse]
{\label{subfig:ar_fpa_sparse}\includegraphics[width=0.32\linewidth]{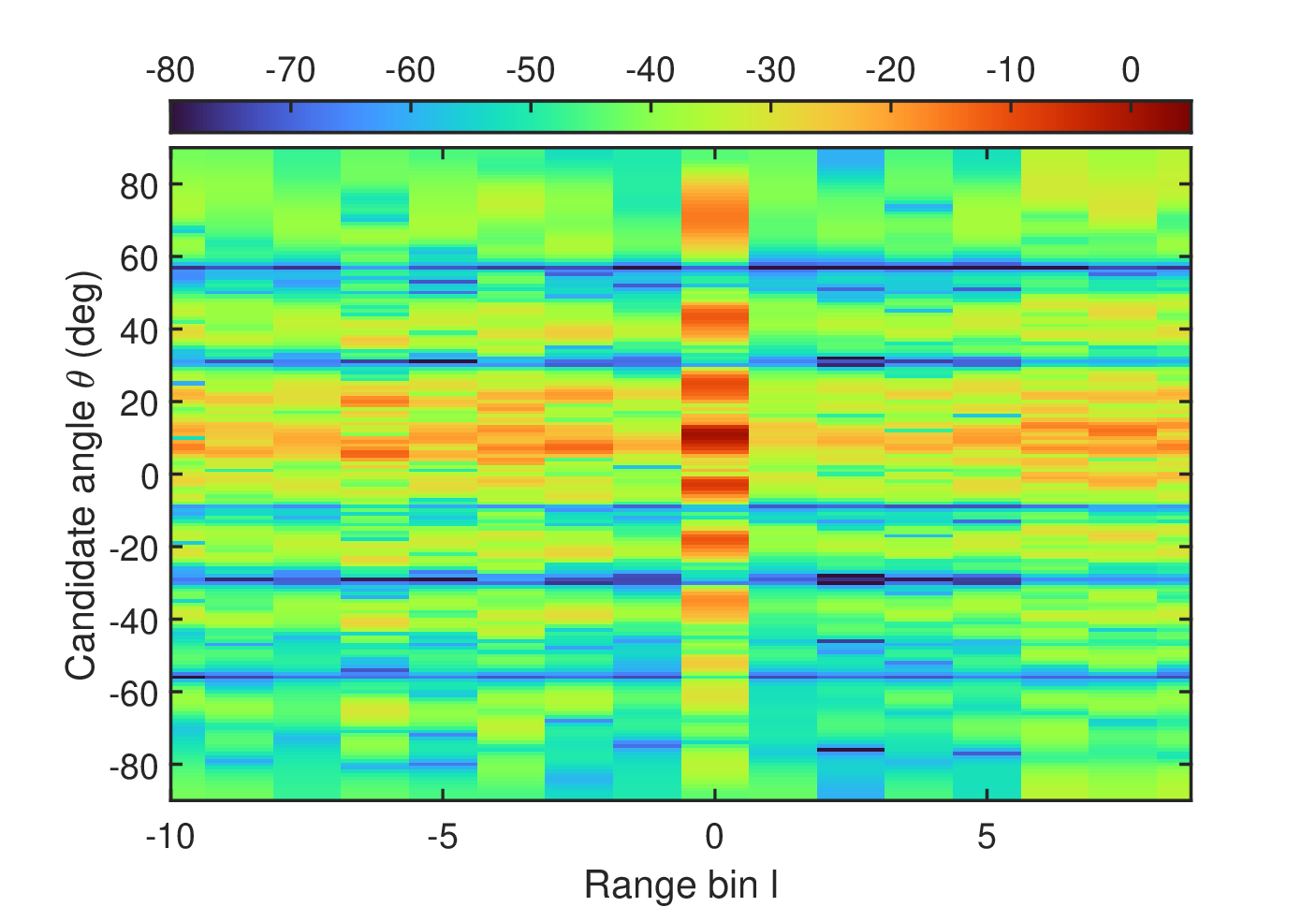}}
\subfigure[Comm-only]
{\label{subfig:ar_comm_only}\includegraphics[width=0.32\linewidth]{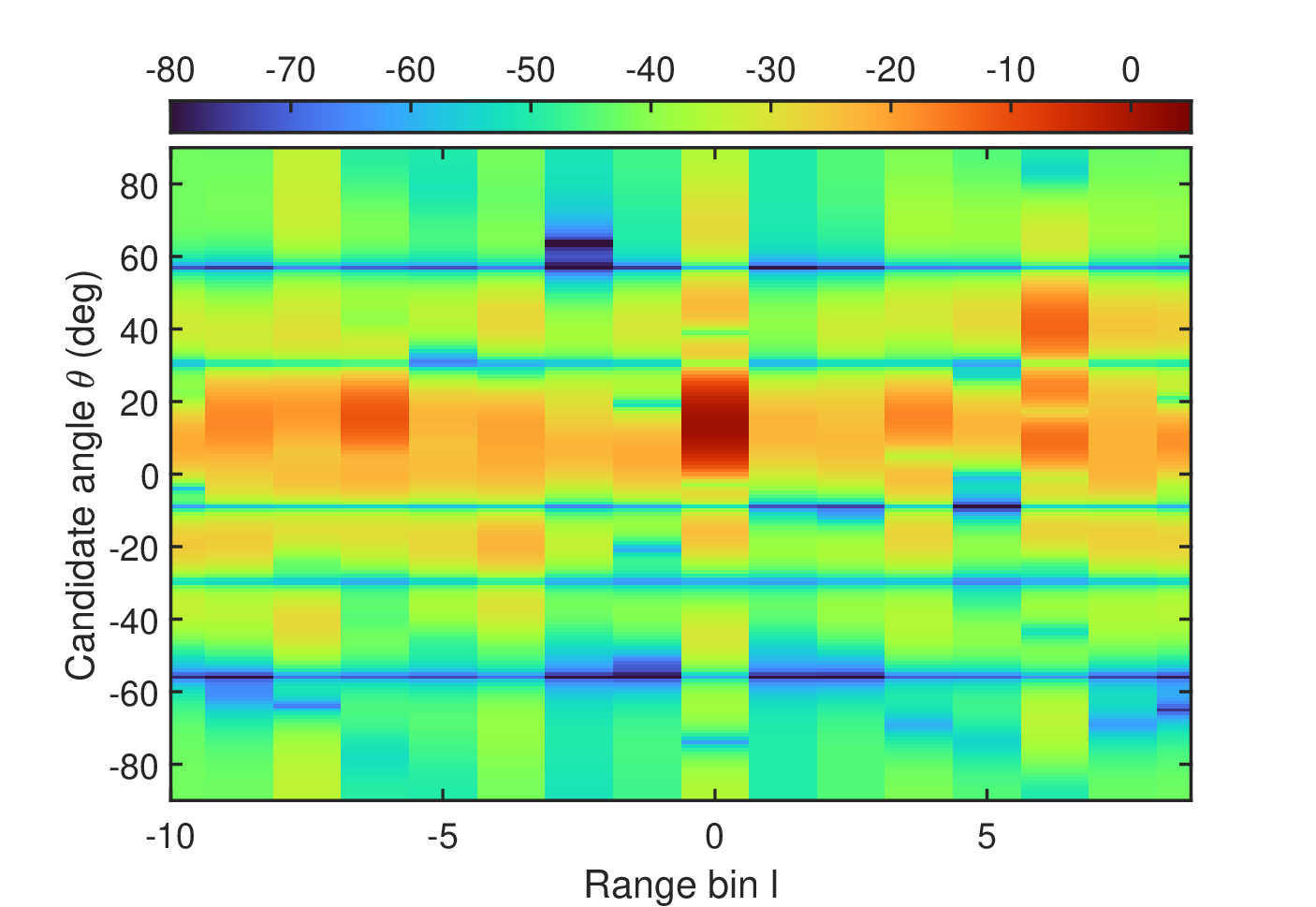}}
\\
\subfigure[Radar-only]
{\label{subfig:ad_radar_only}\includegraphics[width=0.32\linewidth]{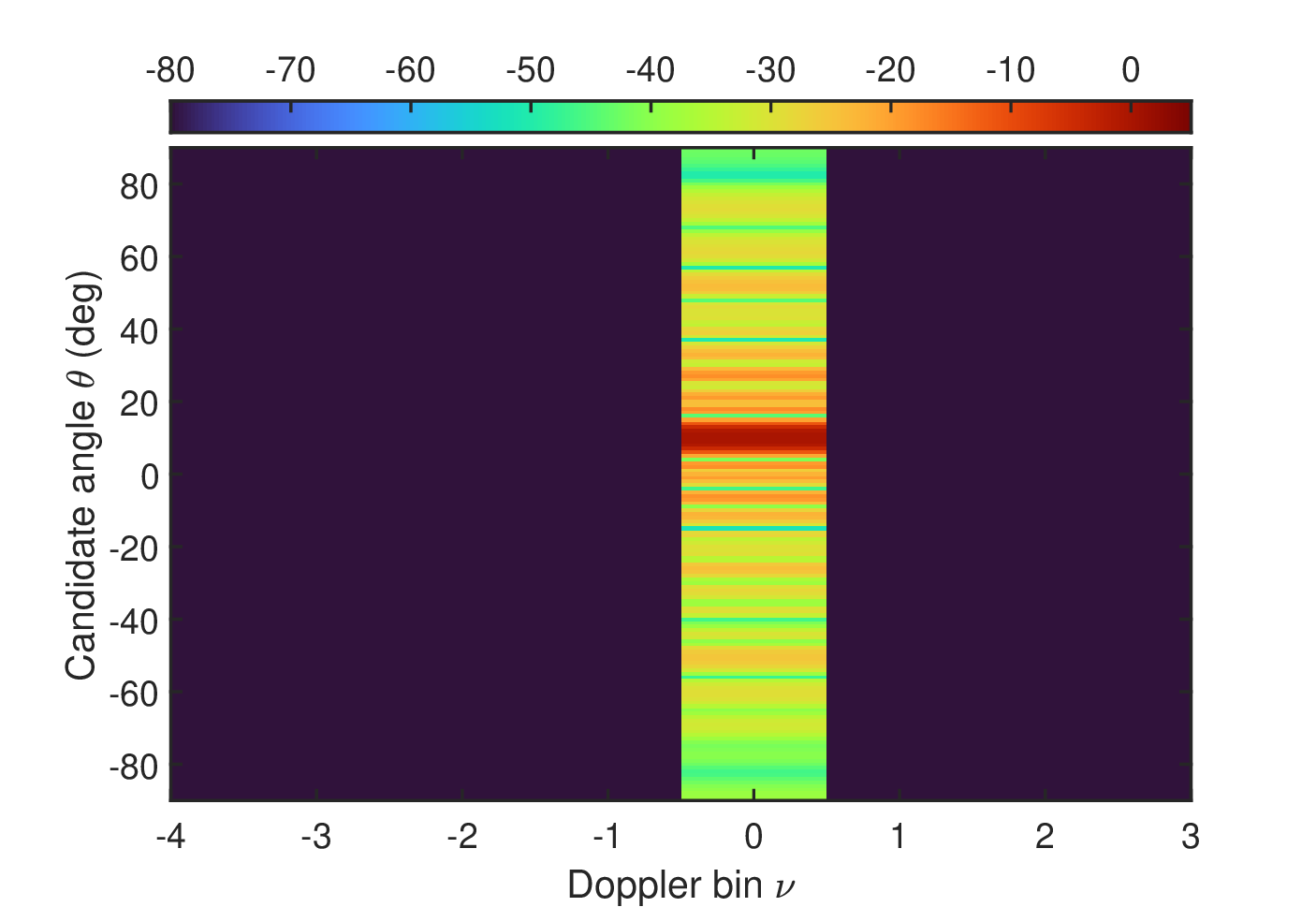}}
\subfigure[MA SLP-OFDM]
{\label{subfig:ad_proposed}\includegraphics[width=0.32\linewidth]{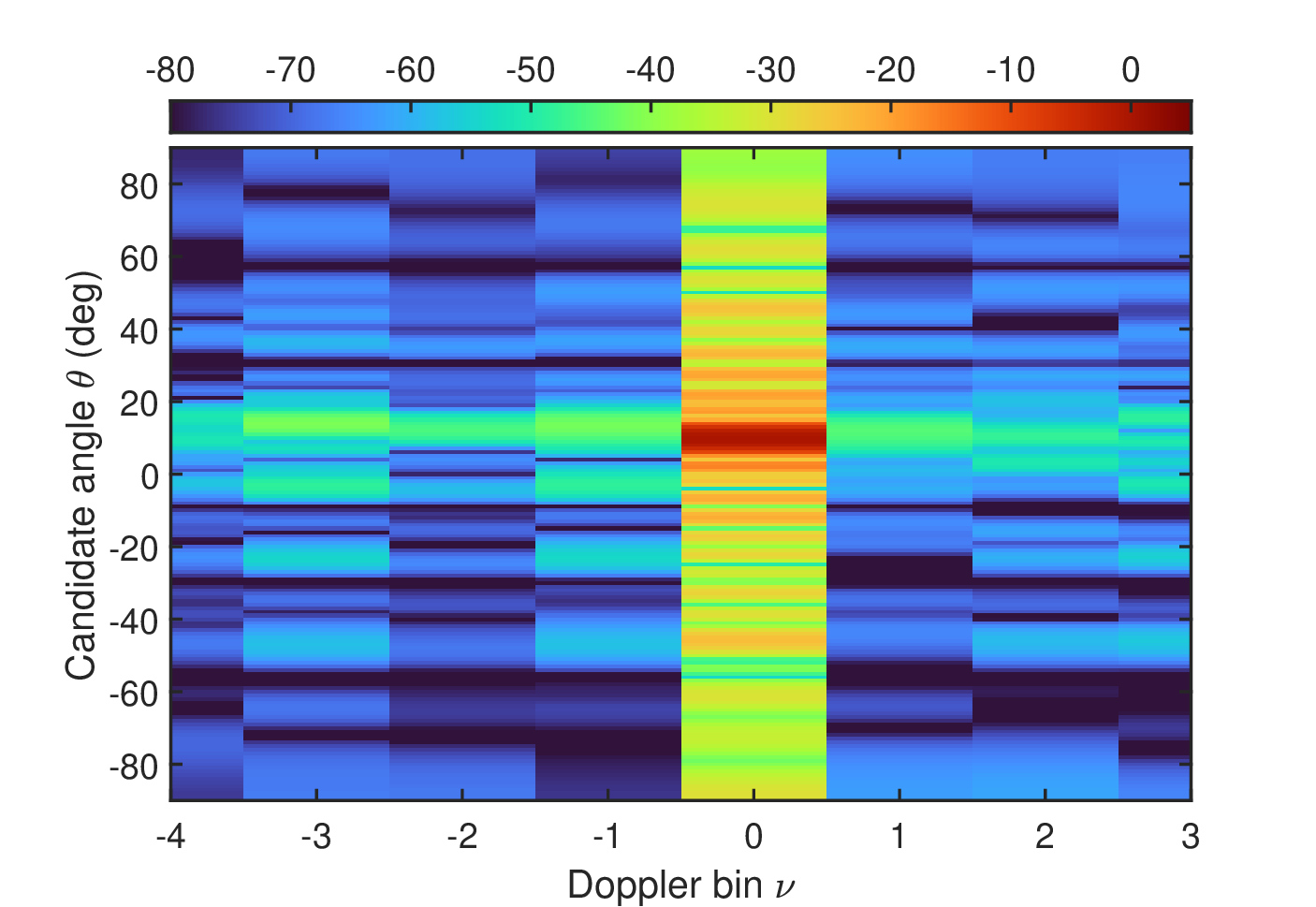}}
\subfigure[FPA-dense]
{\label{subfig:ad_fpa_dense}\includegraphics[width=0.32\linewidth]{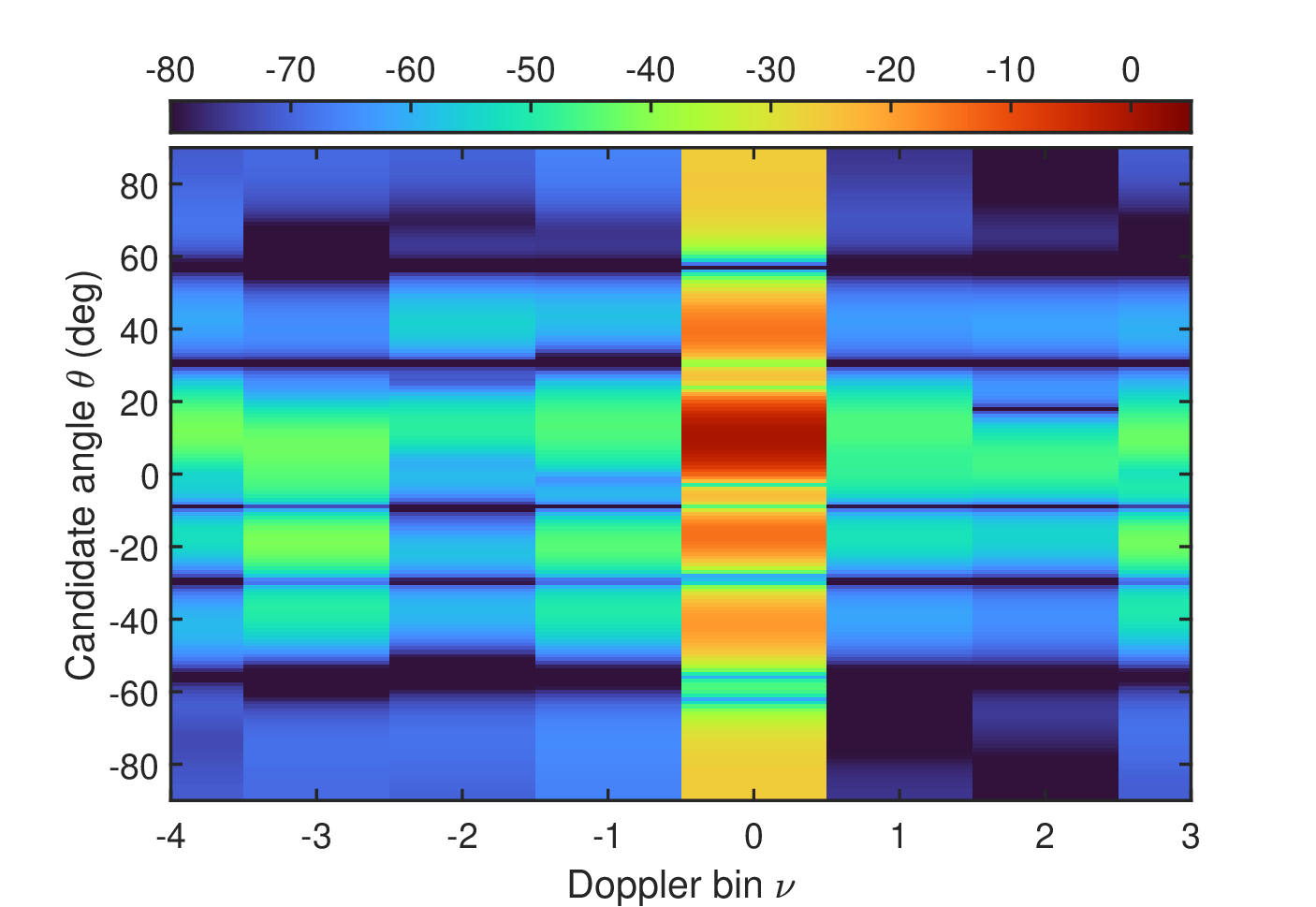}}
\\
\subfigure[FPA-random]
{\label{subfig:ad_fpa_random}\includegraphics[width=0.32\linewidth]{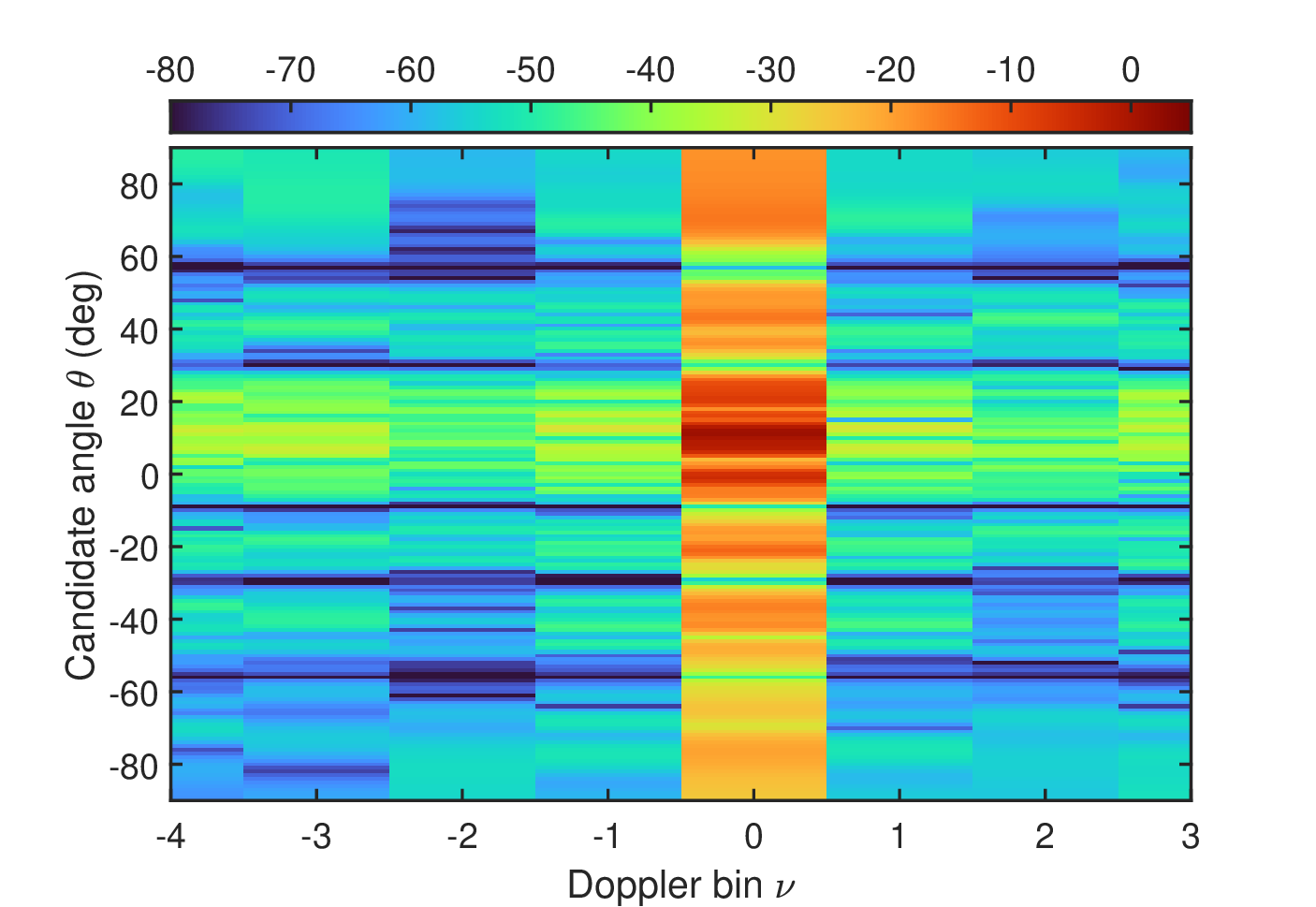}}
\subfigure[FPA-sparse]
{\label{subfig:ad_fpa_sparse}\includegraphics[width=0.32\linewidth]{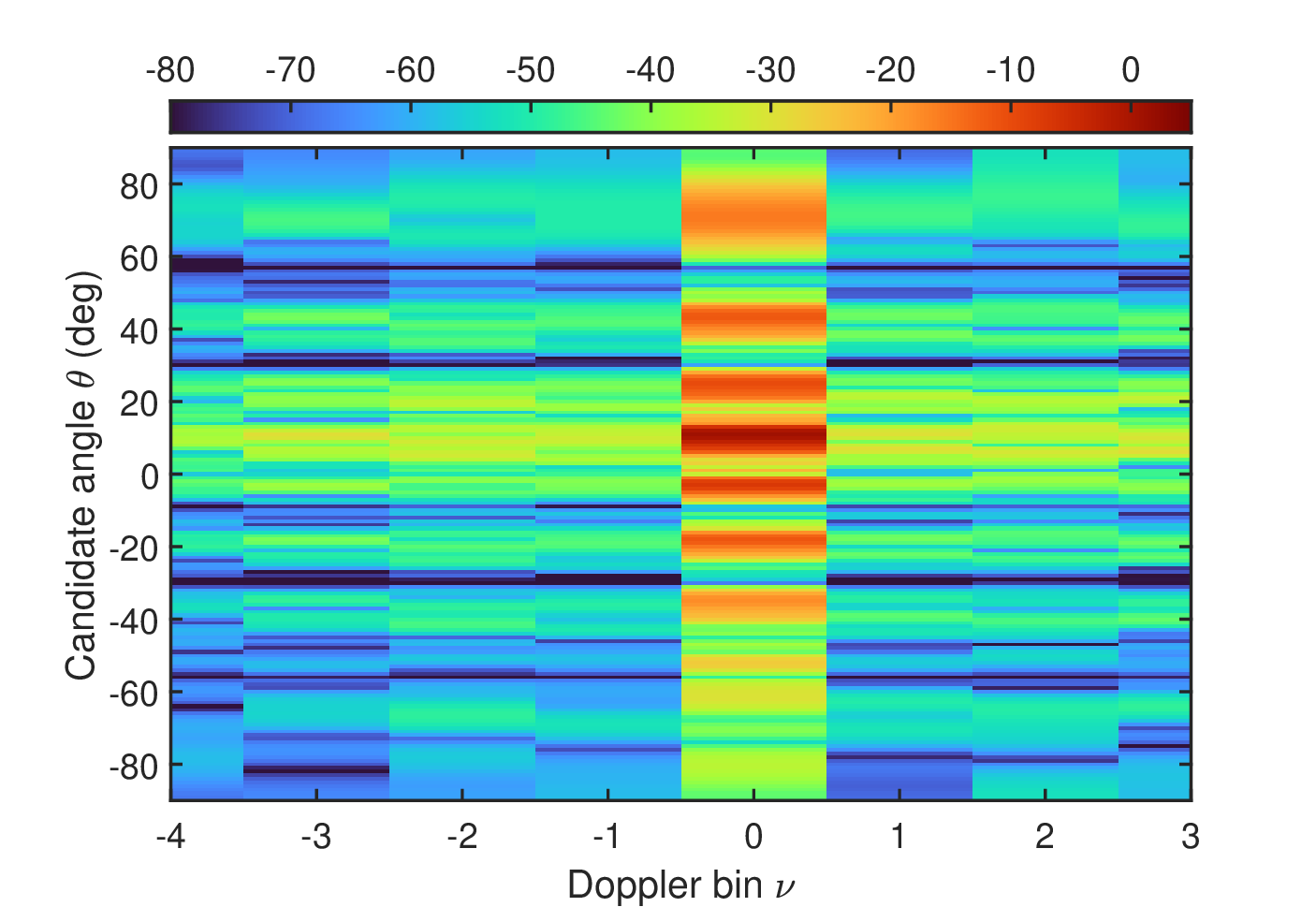}}
\subfigure[Comm-only]
{\label{subfig:ad_comm_only}\includegraphics[width=0.32\linewidth]{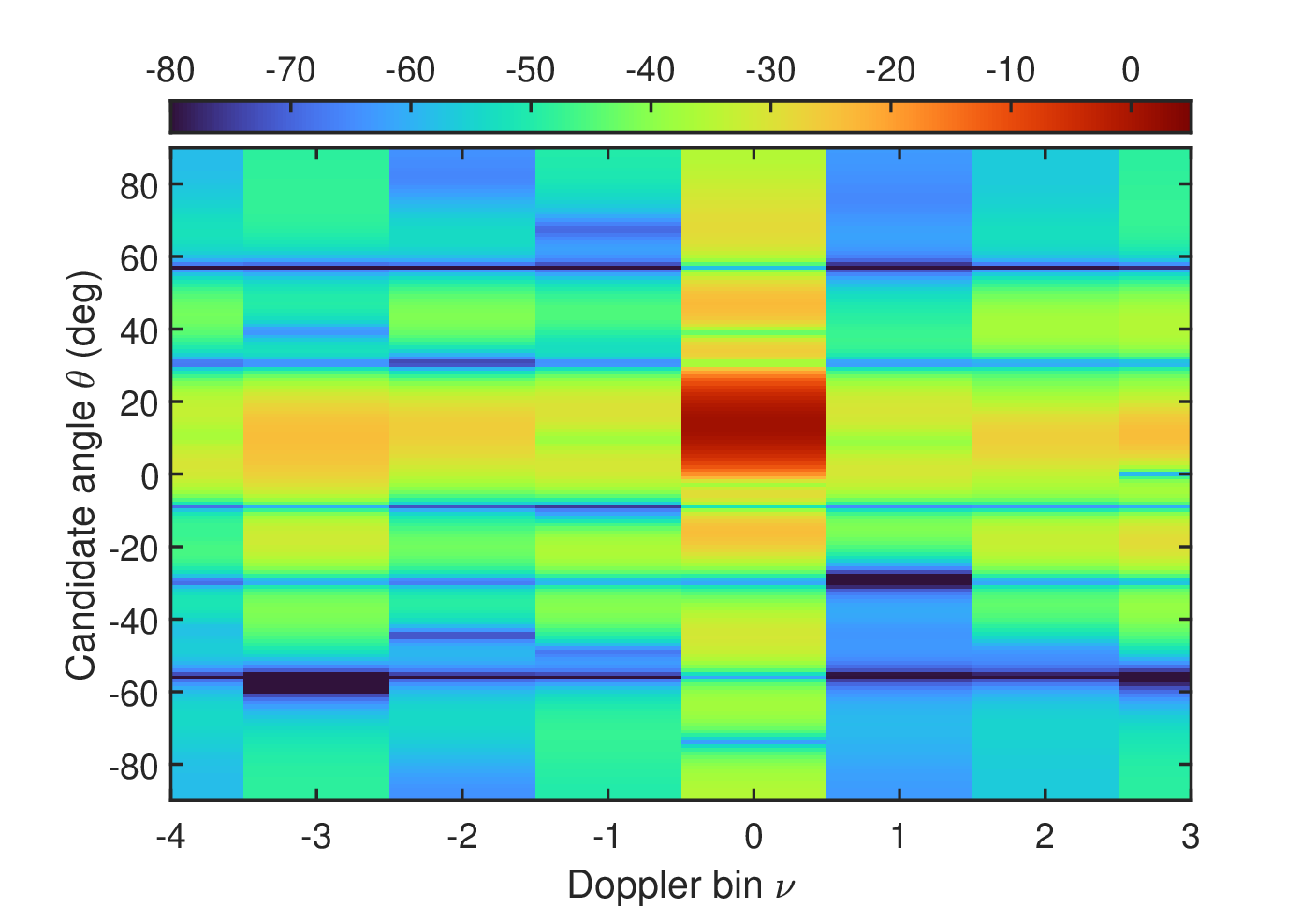}}
\caption{Normalized 3D ambiguity function slices for different schemes: (a)--(f) angle-range slices at zero Doppler and (g)--(l) angle--Doppler slices at zero delay.}
\label{fig:ambiguity_slices}
\end{figure}

Fig.~\ref{fig:ambiguity_slices} compares the normalized 3D ambiguity function slices of different schemes. The Comm-only scheme exhibits pronounced sidelobes at nonzero range and Doppler cells because its sensing ambiguity response is not explicitly optimized. The three FPA based SLP schemes effectively suppress the range-Doppler sidelobes through per RE waveform design, while their angular responses remain constrained by the fixed array geometries. In particular, FPA-dense exhibits a broader angular response because of its smaller array aperture, whereas FPA-random and FPA-sparse exploit larger apertures but still exhibit noticeable angular sidelobes. In contrast, the proposed MA SLP-OFDM scheme produces a more concentrated main response around the target angle and zero range-Doppler cell while substantially reducing ambiguity leakage at non target angles and nonzero range-Doppler cells. At the matched angle $\theta=\theta_0=10^\circ$, its peak range and Doppler sidelobe levels are approximately $10.64$~dB and $3.44$~dB lower, respectively, than those of FPA-dense. Moreover, the Radar-only scheme exhibits nearly zero response at discrete nonzero range-Doppler cells, consistent with the rank one equal RE energy structure derived in Section~\ref{subsec:radar_only_case}.

\begin{figure}[t]
\centering
\includegraphics[width=0.9\linewidth]{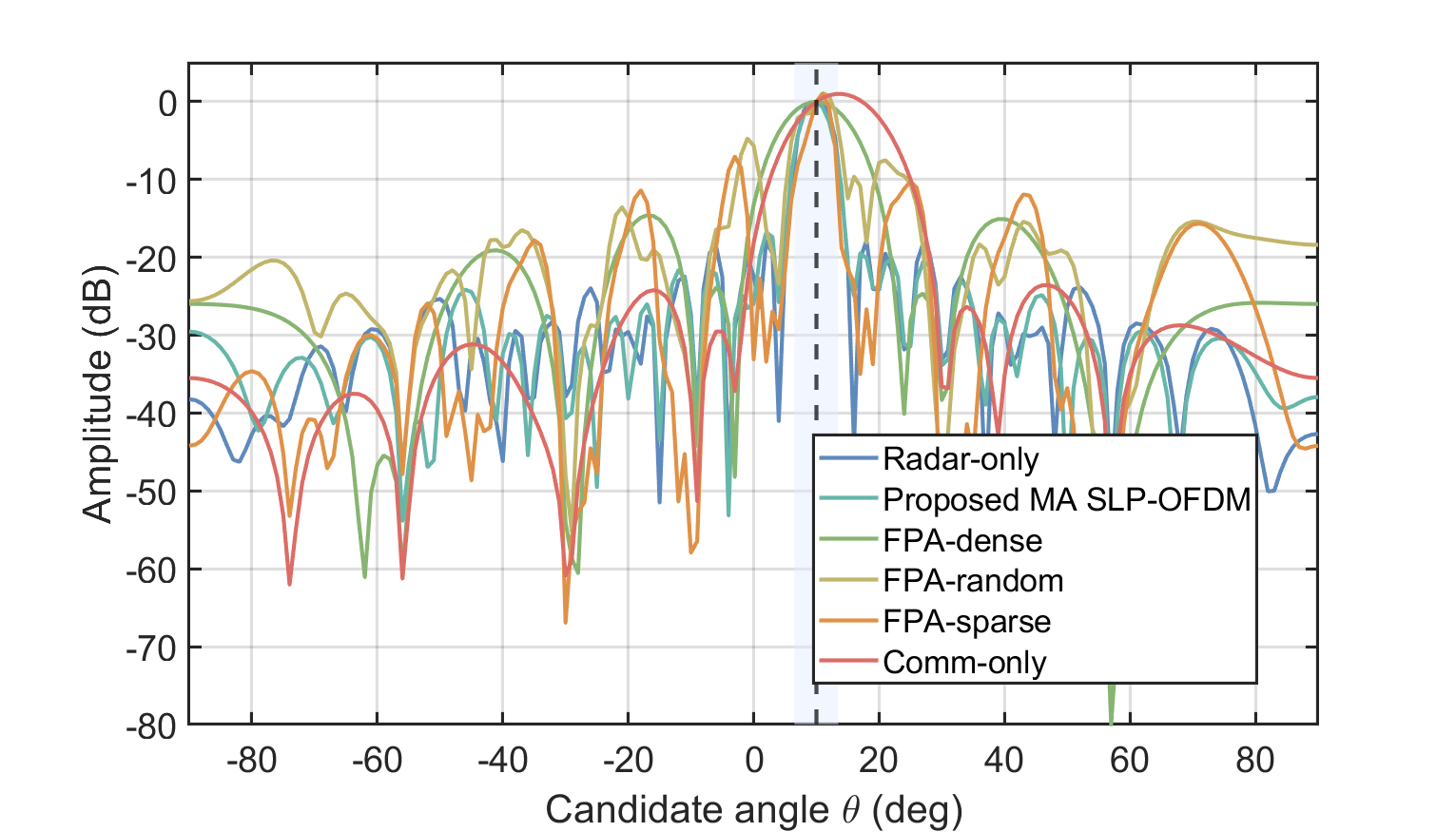}
\caption{Normalized angular ambiguity functions at zero delay and zero Doppler for different schemes.}
\label{fig:angular_ambiguity}
\end{figure}

Fig.~\ref{fig:angular_ambiguity} further compares the angular ambiguity responses at zero delay and zero Doppler. FPA-dense exhibits a broad angular mainlobe because of its limited array aperture, whereas FPA-random and FPA-sparse, despite their larger apertures, still exhibit pronounced angular sidelobes. By jointly optimizing the MA positions and SLP waveform, the proposed MA SLP-OFDM scheme substantially suppresses angular leakage outside the mainlobe region. Its angular ISLR approaches that of Radar-only, whereas Comm-only exhibits the largest angular leakage, indicating that flexible antenna positioning enables more effective use of the available aperture for angular sidelobe suppression.

\begin{figure}[t]
\centering
\includegraphics[width=0.9\linewidth]{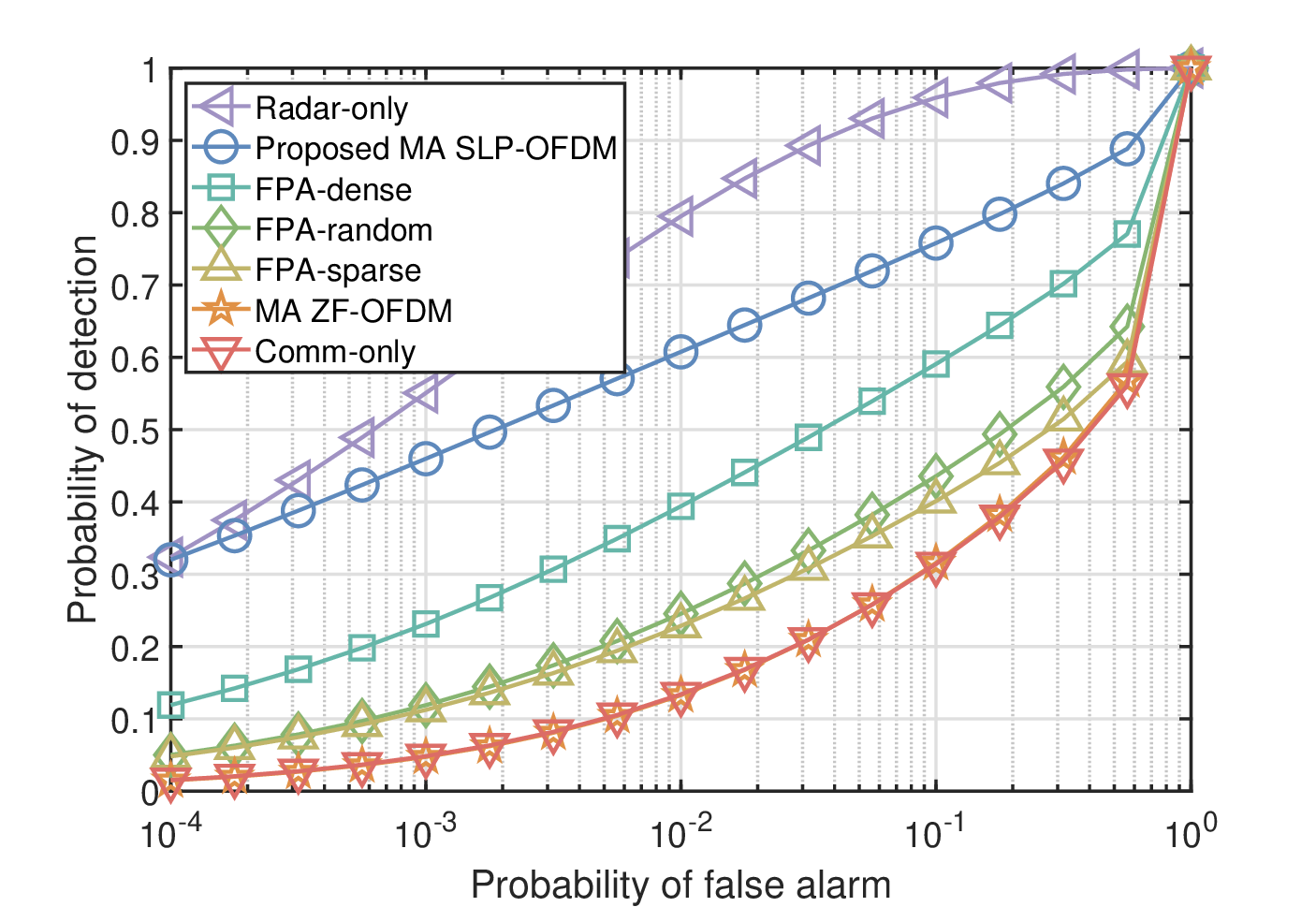}
\caption{Weak target detection ROC curves in the presence of a strong target, where the strong and weak target RCS values are $20$~dBsm and $-5$~dBsm, respectively, with $N_c=16$ and $N_s=8$.}
\label{fig:detection_roc}
\end{figure}

Fig.~\ref{fig:detection_roc} compares the weak target detection performance of different schemes in the presence of a strong target. The proposed MA SLP-OFDM scheme achieves consistently higher detection probabilities than the three FPA schemes, MA ZF-OFDM, and Comm-only. At $P_{\mathrm{fa}}=10^{-3}$, the detection probability of the proposed scheme is approximately $0.460$, compared with $0.231$, $0.119$, and $0.112$ for FPA-dense, FPA-random, and FPA-sparse, respectively, while those of MA ZF-OFDM and Comm-only are both below $0.05$. This gain results from suppressing the sidelobe leakage of the strong target at nonzero range-Doppler cells, which mitigates the masking of the weak target and improves its detection probability. Radar-only achieves the highest detection probability, while the proposed scheme retains receiver operating characteristic (ROC) performance close to Radar-only while satisfying the multiuser communication requirements.

\begin{figure}[t]
\centering
\includegraphics[width=0.9\linewidth]{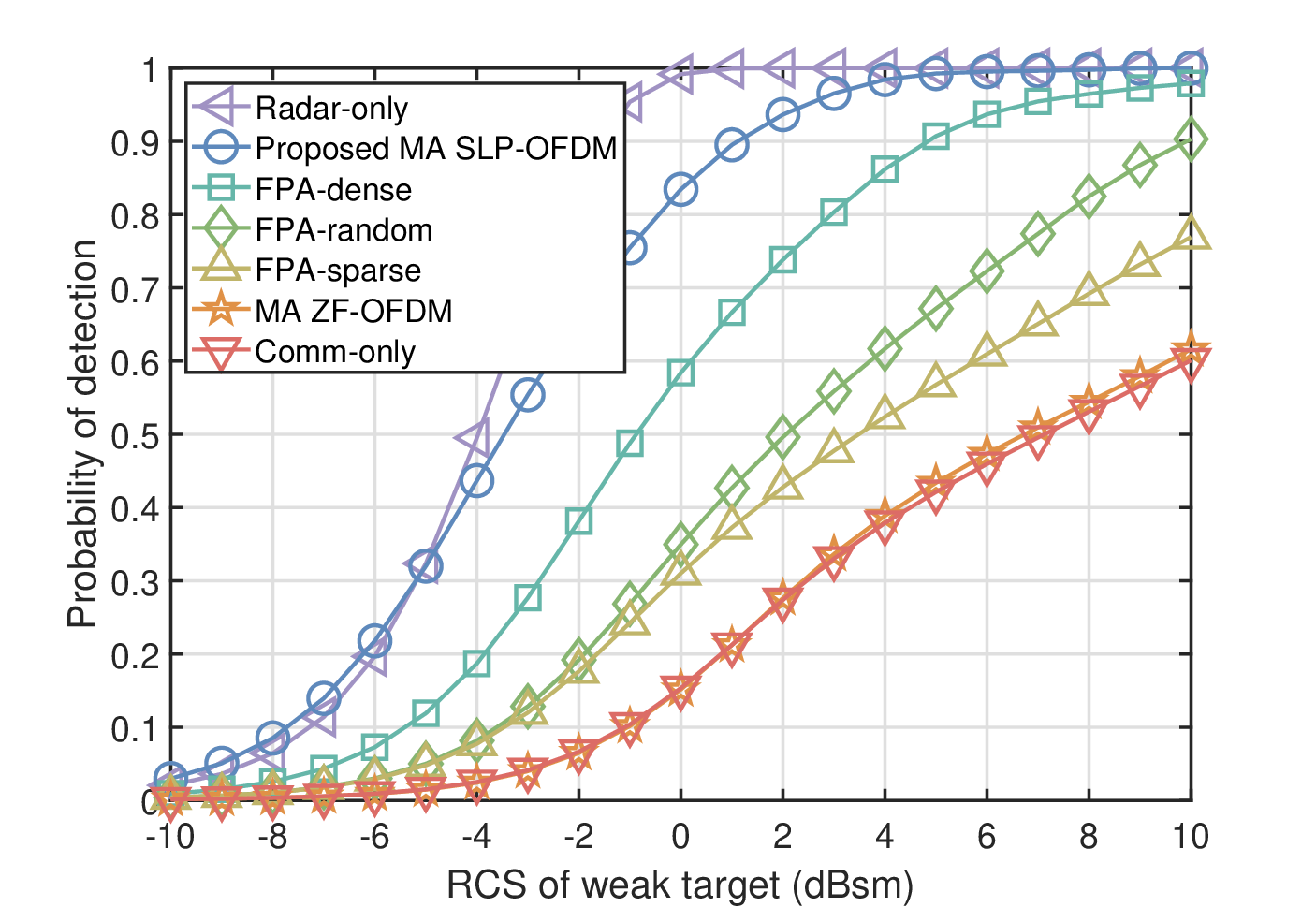}
\caption{Weak target detection probability versus RCS for different schemes, where $P_{\mathrm{fa}}=10^{-4}$, the strong target RCS is $20$~dBsm, and $N_c=16$ and $N_s=8$.}
\label{fig:detection_probability_rcs}
\end{figure}

Fig.~\ref{fig:detection_probability_rcs} further investigates the impact of the weak target radar cross section (RCS) on the detection performance. As the RCS increases, the detection probabilities of all schemes improve, while the proposed MA SLP-OFDM scheme maintains higher detection probabilities in the low RCS region, consistent with its reduced sidelobe leakage from the strong target. As the weak target becomes stronger, the detection probabilities gradually approach saturation and the performance gaps between the schemes diminish.

\begin{figure}[t]
\centering
\includegraphics[width=0.9\linewidth]{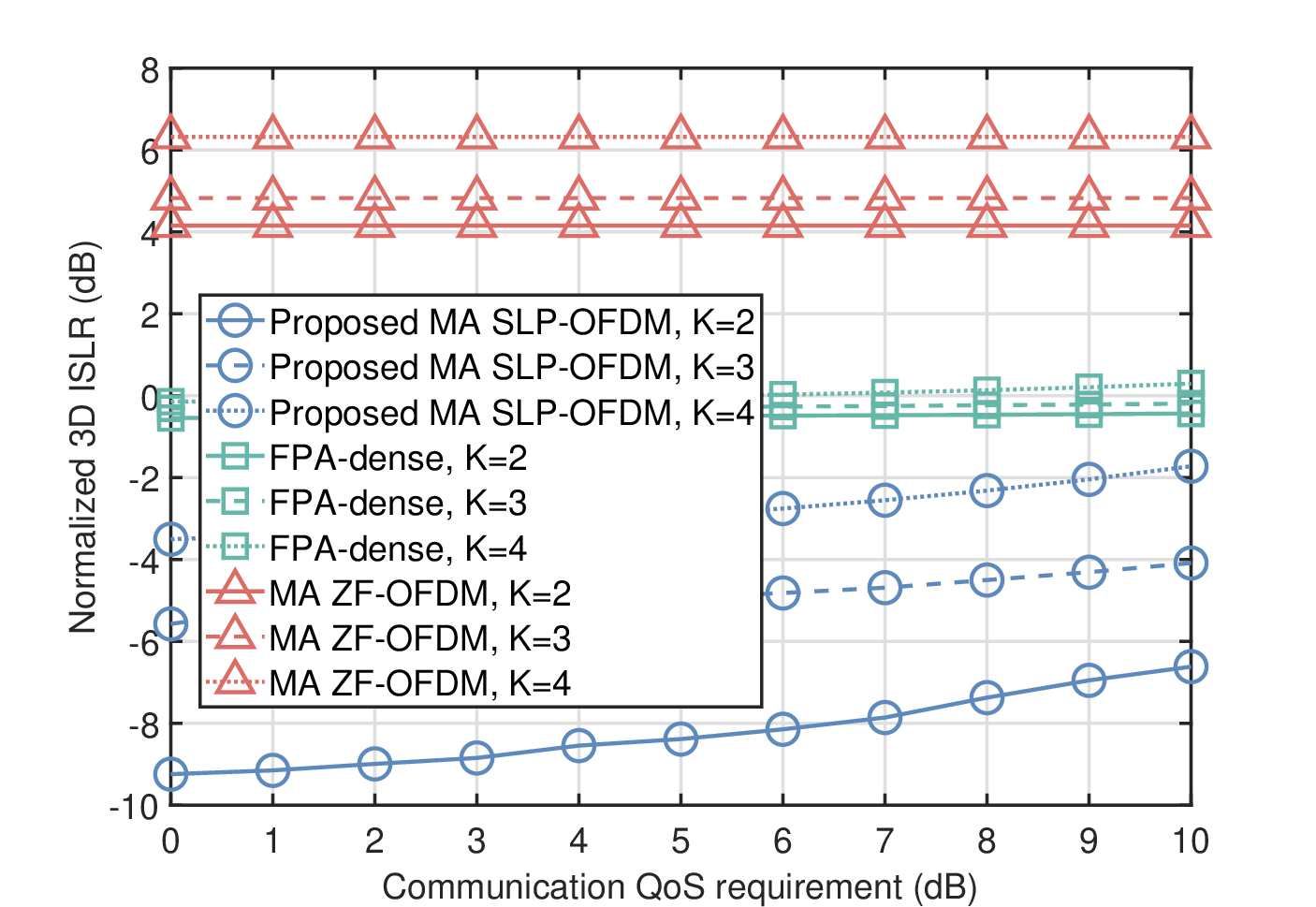}
\caption{Communication and sensing performance tradeoff for different numbers of users.}
\label{fig:isac_tradeoff}
\end{figure}

Fig.~\ref{fig:isac_tradeoff} shows the communication and sensing performance tradeoff for different numbers of users. As the QoS threshold or the number of users increases, the CI constraints become more stringent and the 3D ISLRs of the SLP schemes generally increase. The proposed MA SLP-OFDM scheme consistently achieves lower 3D ISLRs than the three FPA baselines, indicating that reconfigurable array geometry provides additional flexibility for shaping the 3D ambiguity response under communication constraints. The proposed scheme also achieves lower 3D ISLRs than MA ZF-OFDM overall, showing that SLP can exploit instantaneous symbol information to satisfy the CI constraints more flexibly while preserving more waveform design freedom for sensing sidelobe suppression. The 3D ISLR of MA ZF-OFDM varies only slightly with the QoS threshold because changing the QoS requirement mainly affects the common amplitude scaling of the ZF waveform, whereas the 3D ISLR is invariant to such overall scaling.

\begin{figure}[t]
\centering
\includegraphics[width=0.9\linewidth]{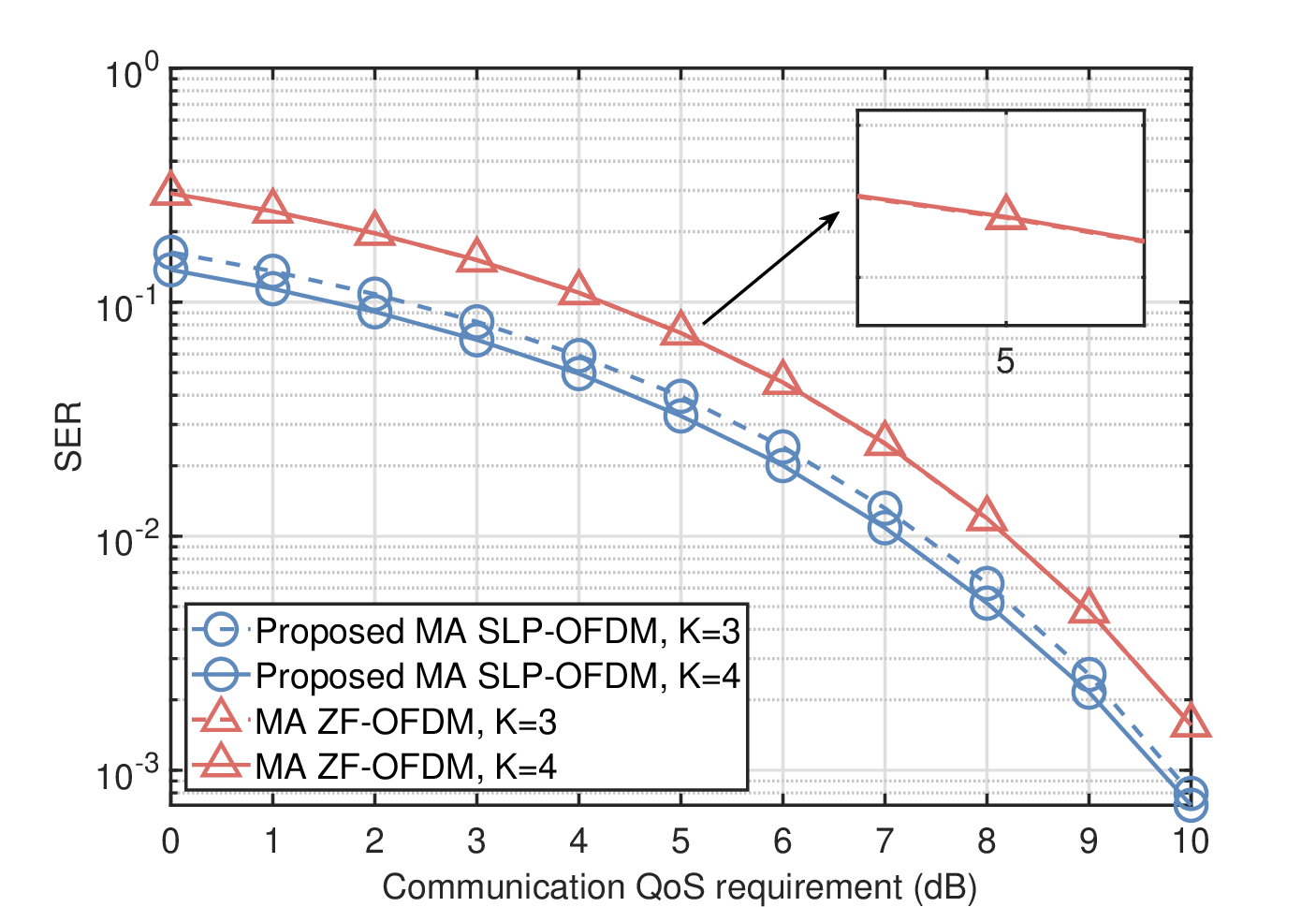}
\caption{SER versus communication QoS threshold for different numbers of users.}
\label{fig:ser_qos}
\end{figure}

Fig.~\ref{fig:ser_qos} compares the symbol error rate (SER) performance of the proposed MA SLP-OFDM and MA ZF-OFDM schemes for different numbers of users. As the communication QoS threshold increases, the SERs of both schemes decrease because a higher QoS requirement increases the distance between the noiseless received symbols and their decision boundaries, thereby improving robustness against noise perturbations. For both user settings, the proposed scheme achieves lower SERs than MA ZF-OFDM, demonstrating the benefit of exploiting constructive multiuser interference through SLP for reliable symbol detection.

\begin{figure}[t]
\centering
\includegraphics[width=0.9\linewidth]{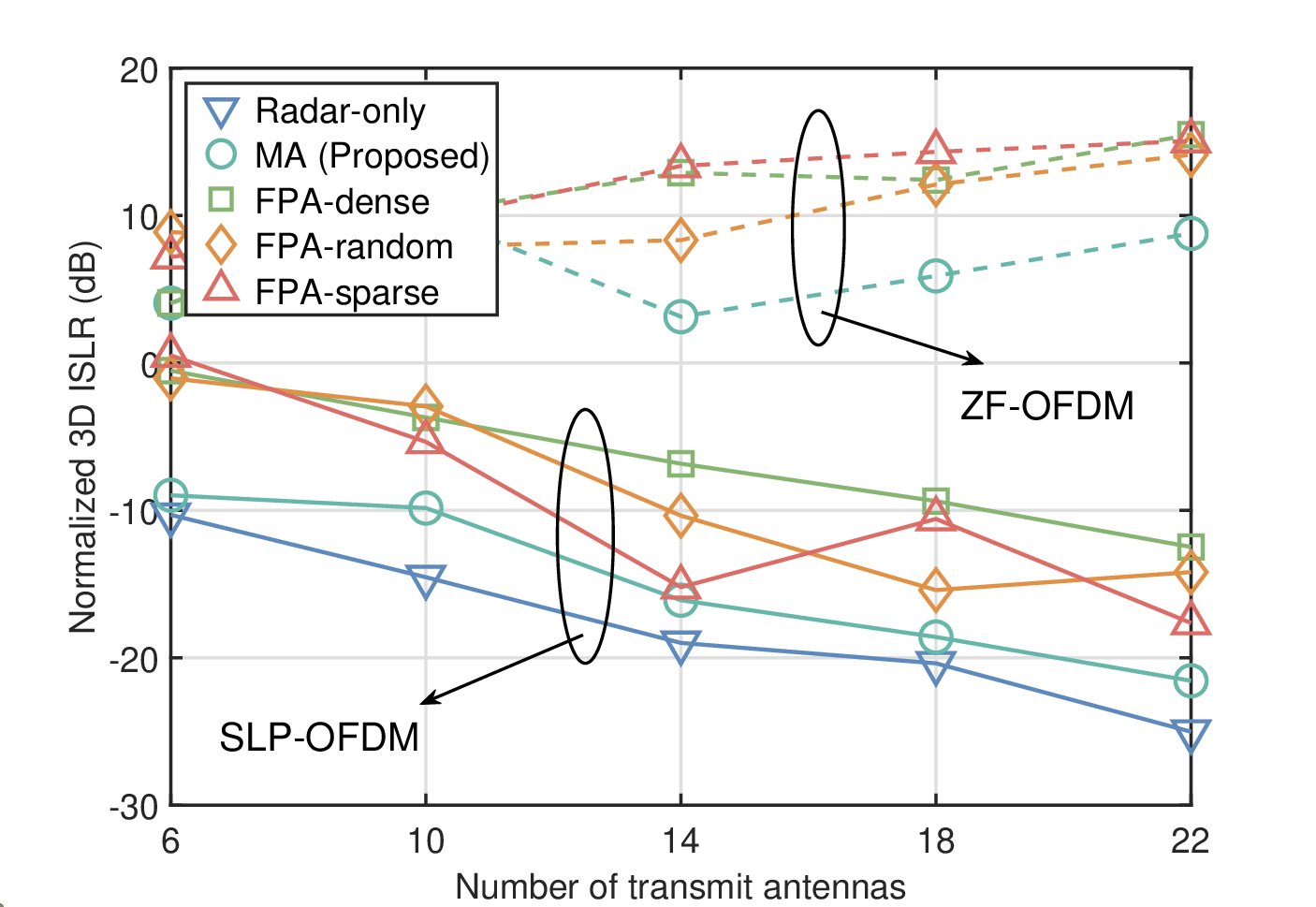}
\caption{3D ISLR versus the number of transmit antennas for different schemes.}
\label{fig:islr_number_antennas}
\end{figure}

Fig.~\ref{fig:islr_number_antennas} compares the 3D ISLRs of different schemes versus the number of transmit antennas. As $N_t$ increases, the 3D ISLR of the proposed MA SLP-OFDM scheme steadily decreases, indicating that additional transmit antennas provide richer spatial DoF for the joint optimization of the array geometry and transmit waveforms on individual REs. Across the considered range of $N_t$, the proposed scheme maintains lower 3D ISLRs than the three FPA schemes, demonstrating the additional sensing gain enabled by MA position optimization beyond waveform design alone. The nonmonotonic behavior of FPA-random and FPA-sparse suggests that increasing $N_t$ does not automatically translate into improved 3D sidelobe suppression when the array geometry is prescribed rather than optimized. Notably, the 3D ISLR of ZF-OFDM increases slightly with $N_t$ because the right pseudoinverse precoder does not exploit the additional channel nullspace DoF for 3D ambiguity shaping. In contrast, SLP directly optimizes the transmit waveform on each RE and jointly exploits the additional spatial DoF with the MA positions, leading to a clearer reduction in 3D ISLR as $N_t$ increases.

\section{Conclusion}\label{section:conclusion}
This paper proposed an MA-enhanced MIMO-OFDM ISAC framework and revealed the intrinsic coupling between the MA array geometry and the MIMO-OFDM waveform in shaping the 3D ambiguity response. This coupling can be characterized by the waveform rank, where a rank one waveform yields a 3D ambiguity response that is separable between the angular dimension and the range-Doppler plane, whereas a higher rank waveform allows the range-Doppler response to vary with the candidate angle, thereby allowing the MA positions to further modify the overall 3D ambiguity structure. For the radar-only problem considered herein, we proved that at least one rank one globally optimal solution exists, enabling a lossless dimensionality reduction of the original MIMO-OFDM waveform design. Based on these structural insights, the joint optimization of the MA positions and SLP waveform using the 3D ISLR effectively suppresses 3D ambiguity sidelobes and improves weak target detection in the presence of strong target interference.

\bibliographystyle{IEEEtran}
\bibliography{references}

\begin{thebibliography}{32}

\bibitem{tan2026joint}
S. Tan, F. Luo, Y. Gang, and L. Fang, ``Joint transmit and receive beamforming design for cell-free ISAC MIMO systems,'' in \textit{Proc. IEEE Wireless Commun. Netw. Conf. (WCNC) Workshops}, 2026.

\bibitem{liu2025itu}
R. Liu, L. Zhang, R. Y. -N. Li, and M. D. Renzo, “The ITU vision and framework for 6G: Scenarios, capabilities, and enablers,” \textit{IEEE Veh. Technol. Mag.}, vol. 20, no. 2, pp. 114--122, Jun. 2025.

\bibitem{mao2022waveform}
T. Mao, J. Chen, Q. Wang, C. Han, Z. Wang, and G. K. Karagiannidis, “Waveform design for joint sensing and communications in millimeter-wave and low terahertz bands,” \textit{IEEE Trans. Commun.}, vol. 70, no. 10, pp. 7023--7039, Oct. 2022.

\bibitem{zhang2026integrated}
D. Zhang et al., "Integrated sensing and communications over the years: An evolution perspective," \textit{IEEE Commun. Surveys Tuts.}, vol. 28, pp. 5014--5048, 2026. 

\bibitem{liu2026sensing}
F. Liu, Y.-F. Liu, Y. Cui, C. Masouros, J. Xu, T.-X. Han, S. Buzzi, Y. C. Eldar, and S. Jin, “Sensing with communication signals: From information theory to signal processing,” \textit{IEEE J. Sel. Areas Commun.}, vol. 44, pp. 1--30, 2026.

\bibitem{zhang2025afdm}
F. Zhang Z. Wang, T. Mao, T. Jiao, Y. Zhuo, M. Wen, W. Xiang, S. Chen, and G. K. Karagiannidis, “AFDM-enabled integrated sensing and communication: Theoretical framework and pilot design,” \textit{IEEE J. Sel. Areas Commun.}, vol. 44, pp. 310--324, 2026.

\bibitem{liu20246g}
R. Liu et al., “6G enabled advanced transportation systems,” \textit{IEEE Trans. Intell. Transp. Syst.}, vol. 25, no. 9, pp. 10564--10580, Sep. 2024.

\bibitem{gang2025uav}
Y. Gang, Y. Zhang, and X. Wang, “UAV-assisted full-duplex ISAC: Joint communication scheduling, beamforming, and trajectory optimization,” \textit{Digital Commun. Netw.}, vol. 11, no. 5, pp. 1628--1638, Oct. 2025.

\bibitem{li2026intelligent}
S. Li et al., “Intelligent metasurface-enabled integrated sensing and communication: Unified framework and key technologies,” \textit{IEEE Wireless Commun.}, vol. 33, no. 1, pp. 216--223, Feb. 2026.

\bibitem{zhang2025integrated}
Y. Zhang Y. Gang, P. Wu, G. Fan, W. Xu, B. Ai, and Q. Wu, “Integrated sensing, communication, and computation in SAGIN: Joint beamforming and resource allocation,” \textit{IEEE Trans. Cogn. Commun. Netw.}, vol. 11, no. 5, pp. 3128--3143, Oct. 2025.

\bibitem{li2025transmit}
S. Li, F. Zhang, T. Mao, R. Na, Z. Wang, and G. K. Karagiannidis, “Transmit beamforming design for ISAC with stacked intelligent metasurfaces,” \textit{IEEE Trans. Veh. Technol.}, vol. 74, no. 4, pp. 6767--6772, Apr. 2025.

\bibitem{liu2024joint}
S. Liu, M. Li, R. Liu, W. Wang, and Q. Liu, “Joint transmit beamforming and receive filter design for cooperative multi-static ISAC networks," \textit{IEEE Wireless Commun. Lett.}, vol. 13, no. 6, pp. 1700--1704, Jun. 2024.

\bibitem{zhang2026leo}
Y. Zhang, Y. Gang, T. Mao, R. Liu, M. Hua, and Q. Wu, “LEO satellite-enabled ISAC: Bistatic framework and beamforming design,” \textit{IEEE Trans. Veh. Technol.}, doi: 10.1109/TVT.2026.3686270.

\bibitem{yang2026cooperative}
Y. Yang et al., “Cooperative multi-static ISAC networks: A unified design framework for active and passive sensing," \textit{IEEE Trans. Wireless Commun.}, vol. 25, pp. 5401--5415, 2026.

\bibitem{huang2022coordinated}
Y. Huang, Y. Fang, X. Li, and J. Xu, “Coordinated power control for network integrated sensing and communication,” \textit{IEEE Trans. Veh. Technol.}, vol. 71, no. 12, pp. 13361--13365, Dec. 2022.

\bibitem{zhang2026multicell}
J. Zhang, C. Qi, S. Mao, and O. A. Dobre, “Multi-cell integrated sensing and communication: Cooperative passive sensing and resource allocation," \textit{IEEE Trans. Commun.}, vol. 74, pp. 10202--10215, 2026.

\bibitem{dou2024integrated}
C. Dou, N. Huang, Y. Wu, L. Qian, Z. Shi, and T. Q. S. Quek, “Integrated sensing and communication enabled multidevice multitarget cooperative sensing: A Fairness-aware design," \textit{IEEE Internet Things J.}, vol. 11, no. 17, pp. 29190--29201, Sep. 2024.

\bibitem{yang2025cooperative}
X. Yang, Z. Wei, J. Xu, H. Wu, and Z. Feng, “Cooperative sensing-assisted predictive beam tracking for MIMO-OFDM networked ISAC systems," \textit{IEEE Trans. Wireless Commun.}, vol. 24, no. 12, pp. 10660--10674, Dec. 2025.

\bibitem{liu2026sensingwith}
R. Liu, L. Zhang, Y. Gang, T. Mao, Q. Wu, and A. Jamalipour, “Sensing with unknown signals: ISAC enabled distributed passive sensing for multi-target detection and localization," \textit{IEEE Trans. Netw. Sci. Eng.}, vol. 13, pp. 7599--7613, 2026.

\bibitem{sui2025ris}
Z. Sui, H. Q. Ngo, T. V. Chien, M. Matthaiou, and L. Hanzo, “RIS-assisted cell-free massive MIMO relying on reflection pattern modulation," \textit{IEEE Trans. Commun.}, vol. 73, no. 2, pp. 968--982, Feb. 2025

\bibitem{ahmed2026survey}
M. Ahmed et al., “Towards 6G networks: A survey on integrated sensing and communication in cell-free massive MIMO,” \textit{IEEE Internet Things J.}, doi: 10.1109/JIOT.2026.3693228.

\bibitem{galappaththige2025cell}
D. Galappaththige, M. Mohammadi, G. A. A. Baduge, and C. Tellambura, “Cell-free integrated sensing and communication: Principles, advances, and future directions," \textit{Proceedings IEEE}, vol. 113, no. 12, pp. 1418--1454, Dec. 2025.

\bibitem{behdad2024multistatic}
Z. Behdad, Ö. T. Demir, K. W. Sung, E. Björnson, and C. Cavdar, “Multi-static target detection and power allocation for integrated sensing and communication in cell-Free massive MIMO," \textit{IEEE Trans. Wireless Commun.}, vol. 23, no. 9, pp. 11580--11596, Sep. 2024.

\bibitem{demirhan2024cellfree}
U. Demirhan and A. Alkhateeb, ``Cell-free ISAC MIMO systems: Joint sensing and communication beamforming,'' \textit{IEEE Trans. Commun.}, vol. 73, no. 6, pp. 4454--4468, 2024.

\bibitem{guo2026multiap}
J. Guo, L. Li, Y. Zheng, D. Zhao, W. Lin, and Z. Han, “Multi-AP cooperative beamforming for cell-free ISAC networks: Balancing communication SINR and sensing SCNR," \textit{IEEE Wireless Commun. Lett.}, vol. 15, pp. 3004--3008, 2026.

\bibitem{salem2025integrated}
A. Abdelaziz Salem, M. A. Albreem, K. A. Alnajjar, S. Abdallah, and M. Saad, “Integrated cooperative sensing and communication for RIS-enabled full-duplex cell-free MIMO systems," \textit{IEEE Trans. Commun.}, vol. 73, no. 6, pp. 3804--3819, Jun. 2025.

\bibitem{wang2026cooperative}
Z. Wang, V. W. S. Wong, and R. Schober, “Cooperative ISAC for joint localization and velocity estimation in cell-free MIMO systems," \textit{IEEE J. Sel. Areas Commun.}, vol. 44, pp. 642--658, 2026.

\bibitem{liu2026cooperative}
H. Liu, Z. Wei, L. Sun, R. Xu, Y. Zhang, and Z. Feng, “Cooperative sensing in cell-free massive MIMO ISAC systems: Performance optimization and signal processing," \textit{IEEE Trans. Wireless Commun.}, vol. 25, pp. 12531--12547, 2026.

\bibitem{liu2020joint_mimo_radar}
X. Liu, T. Huang, N. Shlezinger, Y. Liu, J. Zhou, and Y. C. Eldar, ``Joint transmit beamforming for multiuser MIMO communications and MIMO radar,'' \textit{IEEE Trans. Signal Process.}, vol. 68, pp. 3929--3944, 2020.

\bibitem{zhao2022joint}
N. Zhao, Y. Wang, Z. Zhang, Q. Chang, and Y. Shen, ``Joint transmit and receive beamforming design for integrated sensing and communication,'' \textit{IEEE Commun. Lett.}, vol. 26, no. 3, pp. 662--666, Mar. 2022.

\bibitem{yang2024coordinated}
X. Yang, Z. Wei, J. Xu, Y. Fang, H. Wu, and Z. Feng, ``Coordinated transmit beamforming for networked ISAC with imperfect CSI and time synchronization,'' \textit{IEEE Trans. Wireless Commun.}, vol. 23, no. 12, pp. 18019--18035, Dec. 2024.

\bibitem{ren2023robust}
Z. Ren, L. Qiu, J. Xu, and D. W. K. Ng, ``Robust transmit beamforming for secure integrated sensing and communication,'' \textit{IEEE Trans. Commun.}, vol. 71, no. 9, pp. 5549--5564, Sep. 2023.

\bibitem{xu2013time}
K. Xu, Y. Xu, W. Ma, W. Xie, and D. Zhang, ``Time and frequency synchronization for multicarrier transmission on hexagonal time-frequency lattice,'' \textit{IEEE Trans. Signal Process.}, vol. 61, no. 24, pp. 6204--6219, Dec. 2013.

\bibitem{mostofi2007robust}
Y. Mostofi and D. C. Cox, ``A robust timing synchronization design in OFDM systems---Part I: Low-mobility cases,'' \textit{IEEE Trans. Wireless Commun.}, vol. 6, no. 12, pp. 4329--4339, Dec. 2007.

\bibitem{xiu2026robust}
Y. Xiu, Y. Zhao, R. Yang, M. Li, and R. Zhang, ``Robust optimization for movable antenna-aided cell-free ISAC with time synchronization errors,'' \textit{IEEE Trans. Wireless Commun.}, vol. 25, pp. 10082--10097, 2026.

\bibitem{principles}
M. A. Richards, J. A. Scheer, and W. A. Holm, Eds., \textit{Principles of Modern Radar: Basic Principles}. Raleigh, NC, USA: SciTech, 2010.

\bibitem{zhang2019precoding}
Y. Zhang, M. Xiao, S. Han, M. Skoglund, and W. Meng, ``On precoding and energy efficiency of full-duplex millimeter-wave relays,'' \textit{IEEE Trans. Wireless Commun.}, vol. 18, no. 3, pp. 1943--1956, Mar. 2019.

\end{thebibliography}

\end{document}